\documentclass[10pt, reqno]{amsart}
\usepackage{amsmath, amsfonts, amssymb}
\usepackage{xcolor}
\usepackage{mathrsfs}
\usepackage{lscape}
\usepackage{graphics}
\usepackage{amsthm}

\usepackage{tikz}
\usetikzlibrary{patterns,arrows.meta}

% Modified DINA4-Format
\oddsidemargin 6pt
\evensidemargin 6pt
\marginparwidth 48pt
\marginparsep 10pt
\topmargin -18pt
\headheight 12pt
\headsep 25pt
\footskip 30pt
\textheight 625pt
\textwidth 451pt
\columnsep 10pt
\columnseprule 0pt
\allowdisplaybreaks[4] 

%\addtolength\hoffset{-25pt}

\title[R13 equations with Onsager boudnary conditions]{Linear regularized 13-moment equations with Onsager boundary conditions for general gas molecules}

\author{Zhenning Cai}
\address[Zhenning Cai]{Department of Mathematics, National University of Singapore,
  Level 4, Block S17, 10 Lower Kent Ridge Road, Singapore 119076}
\email{matcz@nus.edu.sg}

\author{Manuel Torrilhon}
\address[Manuel Torrilhon]{Applied and Computational Mathematics, RWTH Aachen University, Schinkelstrasse 2, 52062 Aachen, Germany}
\email{mt@acom.rwth-aachen.de}

\author{Siyao Yang}
\address[Siyao Yang]{Department of Mathematics, National University of Singapore,
  Level 4, Block S17, 10 Lower Kent Ridge Road, Singapore 119076}
\email{matsiya@nus.edu.sg}

\keywords{Regularized 13-moment equations, super-Burnett order, Onsager boundary conditions}

\thanks{Zhenning Cai's work was supported by the Academic Research Fund of the Ministry of Education of Singapore under Grant No. A-0004592-00-00.}

\newcommand\dd{\mathrm{d}}
\newcommand\bxi{\boldsymbol{\xi}}
\newcommand\bx{\boldsymbol{x}}
\newcommand\bu{\boldsymbol{u}}
\newcommand\bn{\boldsymbol{\mathrm{n}}}

\newcommand{\ms}[1]{\mathscr{#1}}
\newcommand{\kn}{\mathrm{Kn}}

\newtheorem{theorem}{Theorem}
\newtheorem{proposition}[theorem]{Proposition}

\theoremstyle{remark}
\newtheorem*{remark}{Remark}

\begin{document}

\maketitle

\begin{abstract}
    We develop the steady-state regularized 13-moment equations in the linear regime for rarefied gas dynamics with general collision models. For small Knudsen numbers, the model is accurate up to the super-Burnett order, and the  resulting system of moment equations is shown to have a symmetric structure. We also propose Onsager boundary conditions for the moment equations that guarantees the stability of the equations. The validity of our model is verified by benchmark examples for the one-dimensional channel flows.
\end{abstract}

\section{Introduction} \label{sec:intro}
The modeling and computation of rarefied gas dynamics has been a classical research topic in the history of fluid mechanics. On one hand, significant progress has been made in the development of efficient solvers for the Boltzmann equation \cite{Dimarco2018, Hu2020fast, Liu2020, Su2020, Pareschi2022}; On the other hand, many researchers still try to avoid the high computational cost and find extensions of classical fluid models such as Euler equations and Navier-Stokes equations and hope that the new models are capable of describing the motion of moderately rarefied gases. The moment method introduced by Grad \cite{Grad1949} is one of the important approaches in this direction. While Grad's original method suffers from a number of deficiencies such as loss of hyperbolicity and convergence \cite{Muller, Cai2014, Cai2019}, many new ideas have been proposed in the recent years to improve its robustness and make moment methods more widely applicable \cite{Rana2018, Cai2020regularized, Bohmer2020, Sarna2021, Fox2022, Pichard2022}. In this work, we will study the regularized version of Grad's 13-moment equations, which are originally proposed in \cite{Struchtrup2003} for Maxwell molecules and extended to general gases in \cite{Struchtrup2013, Cai2020regularized}. In the literature, this model is called R13 equations for short. The R13 equations for Maxwell molecules have been verified for a variety of problems \cite{Taheri2009a, Ivanov2013, Timokhin2017, Claydon2017, DeFraja2022}. Recently, attentions have been drawn to the study of Onsager boundary conditions, which guarantees the stability of simulations of boundary value problems \cite{Beckmann2018, Sarna2017}. Here we will consider the formulation of Onsager boundary conditions for general R13 equations with arbitrary elastic collision models. The linearized Boltzmann equation will be taken as the base model, so that rigorous theory of stabilization can be established for the moment equations.

In the linearized setting, one typical form of the moment equations is as follows:
\begin{equation} \label{eq:me}
\mathbf{A}_0 \frac{\partial \bu}{\partial t} + \sum_{j=1}^3 \mathbf{A}_j \frac{\partial \bu}{\partial x_j} = \mathbf{L} \bu,
\end{equation}
where $\mathbf{A}_0$ is symmetric positive semidefinite, and $\mathbf{L}$ is symmetric negative semidefinite. The matrices $\mathbf{A}_j$, $j=1,2,3$ are symmetric. Here we allow $\mathbf{A}_0$ to have zero eigenvalues so that the form \eqref{eq:me} can also cover parabolic equations such as the R13 equations. For problems with unbounded domains or periodic boundary conditions, one can show that
\begin{displaymath}
\frac{\mathrm{d}}{\mathrm{d}t} \int \bu^T \mathbf{A}_0 \bu \,\mathrm{d}\bx = \int \bu^T \mathbf{L} \bu \,\mathrm{d}\bx \leqslant 0,
\end{displaymath}
indicating the $L^2$ stability. For problems on bounded domains, additional conditions on the boundary conditions are suggested in \cite{Sarna2017,Torrilhon2017,Bunger2023} to preserve the $L^2$ stability. Assume that $\boldsymbol{n} = (n_1, n_2, n_3)^T$ is the outer unit normal vector on the boundary point. The conditions are based on the following structures of the matrices $\mathbf{A}_j$:
\begin{equation} \label{eq:PAP}
\mathbf{P} (n_1 \mathbf{A}_1 + n_2 \mathbf{A}_2 + n_3 \mathbf{A}_3) \mathbf{P}^T = \begin{pmatrix}
0 & \mathbf{A}_{\mathrm{oe}} \\ \mathbf{A}_{\mathrm{eo}} & 0
\end{pmatrix},
\end{equation}
where $\mathbf{P}$ is an orthogonal matrix converting the moments $\bu$ to another set of moments, each of which is either odd or even in the normal direction, and we choose the permutation such that
\begin{displaymath}
\mathbf{P} \bu = \begin{pmatrix} \bu_{\mathrm{odd}} \\ \bu_{\mathrm{even}} \end{pmatrix}.
\end{displaymath}
The moments in $\bu_{\mathrm{odd}}$ include quantities that changes sign when the frame of reference changes by flipping the normal vector $\boldsymbol{n}$, and the moments in $\bu_{\mathrm{even}}$ remain unchanged under this transformation. By the symmetry of $\mathbf{A}_j$, we have $\mathbf{A}_{\mathrm{oe}} = \mathbf{A}_{\mathrm{eo}}^T$. The structure \eqref{eq:PAP} comes from the fact that the normal flux of an odd moment is an even moment, and the normal flux of an even moment is an odd moment. With this structure, if $\mathbf{A}_{\mathrm{oe}}$ has full row rank, the $L^2$ stable boundary conditions have the form
\begin{equation} \label{eq:stableBC}
\bu_{\mathrm{odd}} = \mathbf{Q} \mathbf{A}_{\mathrm{oe}} (\bu_{\mathrm{even}} - \boldsymbol{g}_{\mathrm{ext}}),
\end{equation}
where $\mathbf{Q}$ is a positive semidefinite matrix, $\boldsymbol{g}_{\mathrm{ext}}$ refers to the external source coming from the boundary of the domain.
Such boundary conditions are known as Onsager boundary conditions.
The particular form of boundary conditions that has odd moments on the left-hand side originates from Grad's work \cite{Grad1949}, which ensures the continuity of boundary conditions with respect to the accommodation coefficient. However, for a specific moment system, choosing $\bu_{\mathrm{odd}}$ to be all the odd moments may result in a rank-deficient $\mathbf{A}_{\mathrm{oe}}$, so that \eqref{eq:stableBC} will provide too many boundary conditions. In this work, we will encounter such a situation during our derivation, requiring us to adjust the selections of $\bu_{\mathrm{odd}}$ and $\bu_{\mathrm{even}}$ to restore the surjective property of $\mathbf{A}_{\mathrm{oe}}$.
The structure of Onsager boundary conditions is useful not only for time-dependent problems.
For steady-state problems (time derivative removed in \eqref{eq:me}), Onsager boundary conditions can provide a symmetric weak form, which helps develop the theory of well-posedness and the finite element methods \cite{Theisen2021}.

For the regularized 13-moment equations for Maxwell molecules, the boundary conditions with this particular structure have been devised in \cite{Torrilhon2017, Theisen2021}. However, the technique to derive the Onsager boundary conditions cannot be directly generalized to the R13 equations for more general molecules. In this work, we will reconsider the derivation of the linearized steady-state R13 equations for general collision models, from which we will show clearly how the structure \eqref{eq:PAP} is built into the derivation of moment equations, and thus the boundary conditions in the form \eqref{eq:stableBC} can be naturally obtained. 

In the following section, we will review the moment method for the linear Boltzmann equation and the asymptotic expansions of the moments when the Knudsen number is small. Our main results are presented in Section \ref{sec:main results}, where the explicit forms of the linear R13 equations and the Onsager boundary conditions are provided. The derivation of the R13 equations and boundary conditions are respectively given in Section \ref{sec:R13} and \ref{sec:bc}. In Section \ref{sec:results 1d}, we verify the accuracy of our model by one-dimensional channel problems. A brief conclusion is given in Section \ref{sec:conclusion}.

\section{Review of the moment equations and asymptotic properties of moments}
We consider the steady-state linear Boltzmann equation
\begin{equation}\label{boltzmann eq}
    \xi_j \frac{\partial f}{\partial x_j} = \frac{1}{\kn} \mathcal{L} [f]
\end{equation}
where $f(\bx,\bxi)$ is the distribution function, $\bx = (x_1,x_2,x_3)$ denotes the position and $\bxi = (\xi_1,\xi_2,\xi_3)$ stands for the velocity of the gas molecules. We apply Einstein's summation convention throughout this work, meaning that when an index appears twice in the same term, the expression represents the sum of this term with this index running from 1 to 3. For example, in \eqref{boltzmann eq}, $$\xi_j  \frac{\partial f}{\partial x_j} =\xi_1  \frac{\partial f}{\partial x_1} + \xi_2  \frac{\partial f}{\partial x_2} + \xi_3  \frac{\partial f}{\partial x_3}.$$ For indices whose ranges are not from 1 to 3, the summation symbol will be written explicitly. On the right-hand side of \eqref{boltzmann eq}, the constant $\kn$ is the Knudsen number characterizing how rarefied the gas is, and $\mathcal{L} [f]$ is the linearized Boltzmann collision operator. Below we will introduce the general moment equations for the linearized Boltzmann equation and the asymptotic expansion of the moments in the case of a small Knudsen number.

\subsection{Series expansion and moment equations}
Following \cite{Torrilhon2018}, we expand the distribution function into an infinite series:
\begin{equation}\label{f series}
    f(\bx,\bxi)= \sum_{l=0}^{+\infty}\sum_{m=0}^{+\infty} \frac{(2l+1)!!}{l!}w_{i_1\cdots i_l}^m(\bx) \psi^{m}_{i_1 \cdots i_l}(\bxi).
\end{equation}
Here, the basis functions $\psi_{i_1\cdots i_l}^m$ are defined by
\begin{equation}\label{psi}
    \psi^{m}_{i_1,\cdots,i_l}(\bxi)=\frac{1}{(2\pi)^{3/2}}\bar{L}_m^{(l+1/2)}\left( \frac{|\bxi|^2}{2} \right)\xi_{\langle i_1} \cdots \xi_{i_l \rangle} \exp \left( -\frac{|\bxi|^2}{2} \right),
\end{equation}
where $\xi_{\langle i_1} \cdots \xi_{i_l \rangle}$ is the trace-free part of the tensor $\xi_{ i_1} \cdots \xi_{i_l}$ (see \cite[Appendix A]{Struchtrup2005macroscopic}), and $\bar{L}_n^{(l+1/2)}$ is the normalized Laguerre polynomial
\begin{equation}\label{laguerre}
\bar{L}_n^{(l+1/2)}(x) = \sqrt{\frac{\sqrt{\pi}}{2^{l+1} n! \Gamma(n+l+3/2)}} x^{-(l+1/2)} \left( \frac{\mathrm{d}}{\mathrm{d}x} - 1 \right)^n x^{n+l+1/2}.
\end{equation}
In literature, people usually define the Maxwellian
    \begin{displaymath}
     f_M(\bxi) = \frac{1}{(2\pi)^{3/2}} \exp \left( -\frac{|\bxi|^2}{2} \right),
    \end{displaymath}
so that $\psi^{m}_{i_1,\cdots,i_l}(\bxi)=\bar{L}_m^{(l+1/2)}(|\bxi|^2/2)\,\xi_{\langle i_1} \cdots \xi_{i_l \rangle} f_M(\bxi)$. Compared with the classical series expansion by Grad \cite{Grad1949}, the Maxwellian $f_M$ is a global equilibrium state due to our linearized setting. This expansion requires us to assume that the distribution function is defined in the following Hilbert space:
\begin{displaymath}
L^2(\mathbb{R}^3, [f_M(\bxi)]^{-1} \,\mathrm{d}\bxi) := \left\{ f \bigg\vert \int_{\mathbb{R}^3} \frac{[f(\bxi)]^2}{f_M(\bxi)} \,\mathrm{d}\bxi < +\infty\right\},
\end{displaymath}
so that all the moments of the distribution function can be properly defined. The inner product of this Hilbert space is
\begin{displaymath}
   \langle g_1,g_2\rangle := \int_{\mathbb{R}^3} \frac{g_1(\bxi)g_2(\bxi)}{f_M(\bxi)} \dd \bxi,
\end{displaymath}
so that we can express the coefficients $w_{i_1\cdots i_l}^m$ as the moments of the distribution function:
\begin{equation}\label{moment}
w_{i_1\cdots i_l}^n(\bx) = \langle f, \psi^{m}_{i_1\cdots i_l} \rangle.
\end{equation}
In particular, we would like to highlight the relationship between these coefficients and the quantities in Grad's 13-moment equations:
\begin{equation} \label{physics quantities}
\rho = w^0, \quad v_i = \sqrt{3} w_i^0, \quad \theta = -\sqrt{\frac{2}{3}} w^1, \quad \sigma_{ij} = \sqrt{15} w_{ij}^0, \quad q_i = -\sqrt{\frac{15}{2}} w_i^1,
% w_{ij}^1 = -\frac{1}{\sqrt{210}} R_{ij}, \quad
% w^2 = \frac{1}{4} \sqrt{\frac{2}{15}} \Delta, \quad w_{ijk}^0 = \frac{1}{\sqrt{105}} m_{ijk}.
\end{equation}
where $\rho, v_i, \theta, \sigma_{ij}$ and $q_i$ denote the density, velocity, temperature, stress tensor and heat flux, respectively.

Due to the rotational invariance of the collision, the linear operator $\mathcal{L}$ satisfies 
\begin{equation} \label{eq:rot_inv}
    \mathcal{L}[ \psi^{n}_{i_1,\cdots,i_l}] = \sum_{m=0}^{+\infty} a_{lmn} \psi^{m}_{i_1,\cdots,i_l},
\end{equation}
where the coefficient $a_{lmn}$ satisfies $a_{lmn} = a_{lnm}$ for all nonnegative integers $l,m$ and $n$, and
\begin{equation} \label{eq:almn}
a_{lmn} = \frac{\langle \psi_{i_1\cdots i_l}^m, \mathcal{L}\psi_{i_1\cdots i_l}^n \rangle}{\langle \psi_{i_1\cdots i_l}^m, \psi_{i_1\cdots i_l}^m \rangle}.
\end{equation}
Note that we have chosen the basis functions such that $\langle \psi_{i_1\cdots i_l}^n, \psi_{i_1\cdots i_l}^n \rangle$ depends only on $i_1, \cdots, i_l$, leading to the symmetry of $a_{lmn}$.
Due to the conservation of mass, momentum and energy, it holds that
\begin{equation} \label{eq:a_zero_value}
a_{00n} = a_{0n0} = a_{01n} = a_{0n1} = a_{10n} = a_{1n0} = 0.
\end{equation}
For inverse-power-law models, these coefficients have been calculated in \cite{Cai2015}, where it has also been pointed out that the collision operator $\mathcal{L}$ is usually an unbounded operator acting on a subset of $L^2(\mathbb{R}^3, [f_M(\bxi)]^{-1} \,\mathrm{d}\bxi)$.

The equations of the moments $w_{i_1\cdots i_l}^n$ can be obtained by taking the inner product of $\psi_{i_1\cdots i_l}^n$ and both sides of the Boltzmann equation \eqref{boltzmann eq}. The resulting moment equations are
\begin{equation} \label{eq:mnt_eq}
T_{i_1 \cdots i_l}^n =
   \frac{1}{\kn} \sum_{n'=0}^{+\infty} a_{lnn'} w_{i_1\cdots i_l}^{n'}
\end{equation}
where 
\begin{equation}\label{eq:T}
 \begin{split}
      T_{i_1\cdots i_l}^n:= &  \left( \sqrt{2(n+l)+3} \frac{\partial w_{i_1\cdots i_l j}^n}{\partial x_j} - \sqrt{2n} \frac{\partial w_{i_1\cdots i_l j}^{n-1}}{\partial x_j} \right)\\
& \quad +   \frac{l}{2l+1} \left( \sqrt{2(n+l)+1} \frac{\partial w_{\langle i_1\cdots i_{l-1}}^n}{\partial x_{i_l \rangle}} - \sqrt{2(n+1)} \frac{\partial w_{\langle i_1\cdots i_{l-1}}^{n+1}}{\partial x_{i_l\rangle}} \right).
 \end{split}
\end{equation}
The derivation of these equations will be briefly introduced in Section \ref{app:w mnt eq} of the supplementary material.

\subsection{Asymptotic expansion of moments}
\label{sec:asym}
Assuming that $\kn$ is a small parameter, we consider the asymptotic expansions of $w_{i_1\cdots i_l}^n$:
\begin{equation} \label{eq:w_asymp}
w_{i_1\cdots i_l}^n =  w_{i_1\cdots i_l}^{n|0} + \kn w_{i_1\cdots i_l}^{n|1} + \kn^2 w_{i_1\cdots i_l}^{n|2} + \kn^3 w_{i_1\cdots i_l}^{n|3} + \cdots.
\end{equation}
The classical Chapman-Enskog expansion can be applied to express each term using the density, momentum, energy and their derivatives. Here, instead of performing the Chapman-Enskog expansion, we would like to find the orders of magnitude of each moment and the relationship between the terms in each order.
To this aim, we introduce the coefficients $b_{lnn'}^{(n_0)}$ to denote the inverses of $a_{lmn}$, which satisfy
\begin{equation}\label{sum ab}
\sum_{n'=n_0}^{+\infty} a_{lnn'} b_{ln_1 n'}^{(n_0)} = \delta_{nn_1}.
\end{equation}
Note that the coefficients $b_{0nn'}^{(0)}$, $b_{0nn'}^{(1)}$ and $b_{1nn'}^{(0)}$ do not exist due to \eqref{eq:a_zero_value}.
By asymptotic analysis, we are able to identify the magnitude of each moment $w_{i_1\cdots i_l}^n$ and find the linear dependency between $w_{i_1\cdots i_l}^{n|k}$ for different $n$'s. This method is known as the order of magnitude approach \cite{Struchtrup2004}.
%By matching the order on the both sides of \eqref{eq:mnt_eq}, we may find the representations of the coefficients $w_{i_1\cdots i_l}^{n|k}$.
Below we list the moments by order up to $O(\kn^3)$ and some results of the linear relationship to be used later in this work. The derivation can be found in Section \ref{app:proof of lemma} in the supplementary material.
\begin{itemize}
    \item[\textbf{\emph{(O0)}}] $O(1)$ moments:  $w^0$, $w^1$, $w^0_i$.
    \item[\textbf{\emph{(O1)}}] $O(\kn)$ moments: $\{w^n_i\}_{n=1}^\infty$, $\{w^n_{ij}\}_{n=0}^\infty$. The leading order terms of these moments satisfy 
\begin{align}
    w_i^{n|1} & = \frac{b_{11n}^{(1)}}{b_{111}^{(1)}} w_i^{1|1} \text{~for~} n \geqslant 1, \label{eq:w_i^n1} \\
    w_{ij}^{n|1} & = \frac{b_{20n}^{(0)}}{b_{200}^{(0)}} w_{ij}^{0|1} \text{~for~} n \geqslant 0, \label{eq:w_ij^n1}
\end{align}    
and the second order terms satisfy  
\begin{align}
w_i^{n|2} & = \frac{\gamma^{(1),n}_{1}}{\gamma^{(1),2}_{1}} w_i^{2|2} \text{~for~} n \geqslant 2   \text{~with~} 
\gamma^{(1),n}_{1} = \sum_{n'=2}^{\infty}
  \frac{b_{1nn'}^{(2)} (\sqrt{2n'+5} b_{20n'}^{(0)} - \sqrt{2n'} b_{20,n'-1}^{(0)})}{b_{200}^{(0)}},
 \label{eq:w_i^n2} \\
 w_{ij}^{n|2} & = \frac{\gamma^{(1),n}_2}{\gamma^{(1),1}_{2}} w_{ij}^{1|2} \text{~for~} n \geqslant 1   \text{~with~} 
\gamma^{(1),n}_2 = \frac{2}{5} \sum_{n'=1}^{\infty}
  \frac{b_{2nn'}^{(1)} (\sqrt{2n'+5} b_{11n'}^{(1)} - \sqrt{2(n'+1)} b_{11,n'+1}^{(1)})}{b_{111}^{(1)}}.  \label{eq:w_ij^n2}
\end{align}

\item[\textbf{\emph{(O2)}}] $O(\kn^2)$ moments: $\{w^n\}_{n=2}^\infty$, $\{w^n_{ijk}\}_{n=0}^\infty$. Their leading order terms satisfy 
\begin{align}
    & w^{n|2} =    \frac{\gamma^{(2),n}_0}{\gamma^{(2),2}_0}w^{2|2} \text{~with~}   \gamma^{(2),n}_0 =  \sum_{n'=2}^{+\infty} \frac{b_{0nn'}^{(2)} (\sqrt{2n'+3} b_{11n'}^{(1)} - \sqrt{2n'} b_{11,n'-1}^{(1)})}{b_{111}^{(1)}},  \label{w^n} \\
    & w_{ijk}^{n|2} = \frac{\gamma^{(2),n}_3}{\gamma^{(2),0}_3} w_{ijk}^{0|2}
  \text{~with~}  \gamma^{(2),n}_3 =  \frac{3}{7} \sum_{n'=0}^{+\infty} \frac{b_{3nn'}^{(0)}}{b_{200}^{(0)}} \left(\sqrt{2n'+7} b_{20n'}^{(0)} - \sqrt{2(n'+1)} b_{20,n'+1}^{(0)}\right) .\label{wijk}
\end{align}
\item[\textbf{\emph{(O3)}}] $O(\kn^3)$ moments: $\{w^n_{ijkl}\}_{n=0}^\infty$.
% \begin{displaymath}
%   w_{ijkl}^{n|3}  =
% \frac{3}{7}\left( \sum_{n'=0}^{+\infty} \frac{b_{4nn'}^{(0)}}{\mu_0} (\sqrt{2n'+9} \mu_{n'} - \sqrt{2(n'+1)} \mu_{n'+1}) \right) \frac{\partial w_{\langle ijk}^{0|2}}{\partial x_{l\rangle}}.
% \end{displaymath}
\item[\textbf{\emph{(O4)}}] $o(\kn^3)$ moments: all other $w_{i_1\cdots i_l}^n$ which are not listed above.
\end{itemize}
These results show that only the conserved moments are zeroth-order moments, which agrees with the results from the Chapman-Enskog expansion. Although there are infinite first-order moments, the leading-order terms depend only on the stress tensor ($w_{ij}^0$) and the heat flux ($w_i^1$). The purpose of R13 equations is to formulate equations using only these representative moments up to the first order, and ``regularization terms'' are added to increase its order of accuracy to cover super-Burnett equations. This requires us to express all second-order terms $w_{i_1\cdots i_l}^{n|2}$ using the thirteen moments appearing in the equations, and such a procedure has been done in the literature \cite{Struchtrup2013, Cai2020regularized}. However, the approach therein does not clearly show how the stable boundary conditions should be derived. Although attempts have been made to study boundary value problems in \cite{Hu2020}, the boundary conditions do not have the structure \eqref{eq:stableBC} as required in \cite{Torrilhon2017}. In this paper, we will re-derive the regularized 13-moment equations from another point of view, and equip the model with reasonable boundary conditions with the desired structure. Before that, we will first present our final models in the next section for the readers who are not interested in the derivation.

% \begin{proposition}
% \
% \begin{itemize}
%     \item The Laguerre polynomial satisfies
% \begin{equation}\label{L orthogonality}
% \int_{\mathbb{R}^3} f_M(\bxi) \bar{L}_m^{(l+1/2)} \left( \frac{|\bxi|^2}{2} \right) \bar{L}_n^{(l+1/2)} \left( \frac{|\bxi|^2}{2} \right) |\bxi|^{2l}  \,\mathrm{d}\bxi = \delta_{mn}.
%     \end{equation}
% \item Given any tensor $A_{i_1,\cdots,i_l}$, we have  
% \begin{equation}\label{nu orthorgonality}
%   A_{i_1,\cdots,i_l} \int_{0}^{2\pi} \int_{0}^{\pi} \nu_{\langle i_1} \cdots \nu_{i_l \rangle} \nu_{\langle j_1} \cdots \nu_{j_{l'} \rangle} \sin \theta_1 \ \dd \theta_1  \dd \theta_2 = 
%     \begin{cases}
%     0 ,&  \text{~if~} l \neq l'\\
%     \frac{4\pi l!}{(2l+1)!!} A_{\langle j_1\cdots j_l \rangle},& \text{~if~} l = l' 
%     \end{cases}.
% \end{equation}
% \item By \eqref{L orthogonality} and \eqref{nu orthorgonality}, we have 
% \begin{equation}\label{psi orthorgonality}
%     \sum_{m,l=0}^{+\infty} A^m_{i_1\cdots i_l} \langle  f_M \psi^m_{i_1\cdots i_l},\psi^n_{j_1\cdots j_{l'}} \rangle = \frac{ l'!}{(2l'+1)!!} A^n_{\langle j_1\cdots j_{l'} \rangle}.
% \end{equation}
% \end{itemize}
% \end{proposition}
\section{Linear R13 equations and Onsager boundary conditions}\label{sec:main results}
In this section, we present the steady-state linear regularized 13-moment equations for general gas molecules, and provide the Onsager boundary conditions satisfying the conditions in \cite{Torrilhon2017}. The equations and boundary conditions will be presented using the physical variables $\rho, v_i, \theta, \sigma_{ij}, q_i$, which are equivalent to the coefficients $w^0$, $w^1$, $w^0_i$, $w^1_i$, $w^0_{ij}$ according to \eqref{physics quantities}.

\subsection{Linearized R13 moment equations}
\label{sec:symR13}
In Section \ref{sec:R13}, We have derived the following equation system of the 13 moments including $\rho, v_i, \theta, \sigma_{ij}, q_i$:
\begin{itemize}
   \item Equations of mass conservation, energy conservation and momentum conservation: 
      \begin{align}
     \frac{\partial v_j}{\partial x_j} & =   0, \label{r13eq:density}\\
   \frac{\partial v_j}{\partial x_j} +  \frac{\partial q_j}{\partial x_j} & = 0,\label{r13eq:temperature}\\
    \frac{\partial \rho}{\partial x_i} +\frac{\partial \theta}{\partial x_i} +\frac{\partial \sigma_{ij}}{\partial x_j} & = 0 .\label{r13eq:velocity}
    \end{align}
    \item Equations of heat flux and stress tensor:
    \begin{align}
      \frac{\partial \theta}{\partial x_i}+ \beta_4 \frac{\partial \sigma_{ij}}{\partial x_j} + \frac{2}{15c^{(1),1}_1} \kn\frac{\partial}{\partial x_j}\left(\beta_0 \frac{\partial q_j}{\partial x_i} + \beta_2 \frac{\partial q_{\langle i}}{\partial x_{  j\rangle}} \right) & =\frac{2}{15(c^{(1),1}_1)^2}\frac{1}{\kn}\mathscr{L}^{(11)}_1 q_i,\label{r13eq:heat flux}\\
        \frac{\partial v_{\langle i}}{\partial x_{j \rangle}} + \beta_4 \frac{\partial q_{\langle i}}{\partial x_{j \rangle}} +\frac{1}{15c^{(1),0}_2}\kn \frac{\partial}{\partial x_k} \left( \beta_1  \frac{\partial \sigma_{k\langle i}}{\partial x_{j\rangle}} + \beta_3 \frac{\partial \sigma_{\langle ij}}{\partial x_{k \rangle}}   \right) & = \frac{1}{15(c^{(1),0}_2)^2}\frac{1}{\kn} \mathscr{L}^{(11)}_2 \sigma_{ij}\label{r13eq:stress tensor}
    \end{align}
\end{itemize}
where 
\begin{gather*}
   \beta_0 =  \frac{(A_{46})^2}{c^{(1),1}_1 \mathscr{L}^{(22)}_0}, \qquad  \beta_1  =  \frac{(c^{(1),1}_1 A_{57} - c^{(2),1}_1A_{45})^2}{c^{(1),1}_1 c^{(1),0}_2(c^{(1),1}_1\mathscr{L}^{(22)}_1 - c^{(2),1}_1\mathscr{L}^{(12)}_1) },\\
   \beta_2 =  \frac{(c^{(1),0}_2A_{48} - c^{(2),0}_2A_{45})^2}{c^{(1),1}_1 c^{(1),0}_2(c^{(1),0}_2\mathscr{L}^{(22)}_2 - c^{(2),0}_2 \mathscr{L}^{(12)}_2) }, \quad \beta_3 =  \frac{(A_{59})^2}{c^{(1),0}_2 \mathscr{L}^{(22)}_3}, \quad \beta_4 = - \frac{\sqrt{2} A_{45}}{15c^{(1),1}_1c^{(1),0}_2}.
\end{gather*}
 The expression of the coefficients  $c^{(p),n}_l$ and $A_{ij}$ are given in Section \ref{app:coefficients} of the supplementary material and $\mathscr{L}^{(mn)}_l$ is formulated as \eqref{Lmnl}. We remark that for Maxwell molecules, $\beta_1$ above takes the form $\frac{0}{0}$ and is thus not well-defined. In this case, $\beta_1$ is set to be $0$. One can easily observe the symmetric structure of the system above, where the complicated second-order derivatives in the last two equations are on the diagonal. We have shown that such moment system has the super-Burnett order. 

\subsection{Onsager boundary conditions} \label{sec:obc}
The R13 equations are equipped with an Onsager-type boundary conditions which read 
\begin{align}
    &v_n =  0, \label{bc phys 1}\\
    &q_n =  \frac{2\chi}{2-\chi}\left[\lambda_{11}(\theta - \theta^W) + \lambda_{12}\sigma_{nn} + \kn \lambda_{13} \frac{\partial q_j}{ \partial x_j} + \kn \lambda_{14} \frac{\partial q_{\langle n}}{\partial x_{n \rangle}} \right],\label{bc phys 2}\\
    &\sigma_{t_i n} =   \frac{2\chi}{2-\chi}\left[ \lambda_{21} (v_{t_i} - v_{t_i}^W) + \lambda_{22}q_{t_i} + \kn \lambda_{23} \frac{\partial \sigma_{t_i j}}{\partial x_j} + \kn\lambda_{24} \frac{\partial \sigma_{\langle nn}}{\partial x_{t_i  \rangle}} \right], \quad i =1,2,\label{bc phys 3}\\
    &\kn \frac{\partial q_{\langle t_i}}{\partial x_{n \rangle}} =  \frac{2\chi}{2-\chi}\left[ \lambda_{31} (v_{t_i} - v_{t_i}^W) + \lambda_{32}q_{t_i} + \kn \lambda_{33} \frac{\partial \sigma_{t_i j}}{\partial x_j} + \kn\lambda_{34} \frac{\partial \sigma_{\langle nn}}{\partial x_{t_i \rangle}  } \right], \quad i =1,2,\label{bc phys 4}\\
    &\kn\left( \frac{\partial \sigma_{\langle nn}}{\partial x_{n \rangle}} + \lambda_{45} \frac{\partial \sigma_{ n t_j}}{\partial x_{t_j}}\right)  
    =   \frac{2\chi}{2-\chi}\left[ \lambda_{41}(\theta - \theta^W) + \lambda_{42}\sigma_{nn} + \kn \lambda_{43} \frac{\partial q_j}{ \partial x_j} + \kn \lambda_{44} \frac{\partial q_{\langle n}}{\partial x_{n \rangle}} \right], \label{bc phys 5}\\
     &\kn\left(\frac{\partial \sigma_{\langle t_it_i}}{\partial x_{n \rangle}  } + \frac{1}{2} \frac{\partial \sigma_{\langle nn}}{\partial x_{n \rangle}  }\right) =   \frac{2\chi}{2-\chi}\left(\lambda_{51} \sigma_{t_it_i} + \lambda_{52} \sigma_{nn}\right),\quad i =1,2,\label{bc phys 6}\\
     &\kn \frac{\partial \sigma_{\langle t_1t_2}}{\partial x_{n \rangle}  } =  \frac{2\chi}{2-\chi} \lambda_{61}  \sigma_{t_1t_2}. \label{bc phys 7}
\end{align}
The expressions of coefficient $\lambda_{ij}$ can be found in Section \ref{app:phys bc} of the supplementary material.

\section{Derivation of R13 equations}
\label{sec:R13}
We will now present the derivation of the R13 equations given in Section \ref{sec:symR13}. Our derivation will use a method different from previous papers \cite{Struchtrup2013, Cai2020regularized}, so that it is clear why the structure \eqref{eq:PAP} exist in the final system.
Since the derivation of moment equations often involves complicated notations and calculations, in order to better explain the main idea of our derivation, we will first write the equations using operators on function spaces instead of the moments (Section \ref{sec:abstract}), and then explain how to convert the abstract form to the explicit moment equations (Section \ref{sec:R13_explicit}).

%The central idea of our derivation of a symmetric moment system is to reformulate the distribution function $f$ as a finite expansion truncated at the accuracy $O(\kn^3)$. Each term is expressed a newly defined basis function $\phi_{\alpha}$ multiplied by the corresponding moment $u_{\alpha}$. The new moment system of $u_{\alpha}$ will have a symmetric structure thanks to the finite number of moments, and the $O(\kn^3)$ accuracy is sufficient to arrive at the R13 moment equations with super-Burnett order after model reduction.        

\subsection{Reformulation of the distribution function}
\label{sec:expansion of distribution function}
In Section \ref{sec:asym}, we have seen that when using $\psi_{i_1 \cdots i_l}^m f_M$ as basis functions, there are infinite $O(\kn^d)$ coefficients for any $d \geq 1$, which is inconvenient for the derivation of moment equations. In this section, we will look for new basis functions such that in the expansions, only finite coefficients have the order $O(\kn^k)$ for any $k$. In other words, we seek the following orthogonal decomposition of the function space:
\begin{displaymath}
L^2(\mathbb{R}^3, [f_M(\bxi)]^{-1} \,\mathrm{d}\bxi) = \mathbb{V}^{(0)} \oplus \mathbb{V}^{(1)} \oplus \mathbb{V}^{(2)} \oplus \cdots
\end{displaymath}
such that each $\mathbb{V}^{(k)}$ is a finite dimensional space, and it holds that
\begin{equation} \label{eq:Pk}
\mathcal{P}^{(k)} f \sim O(\kn^k), \qquad
  \forall k \in \mathbb{N},
\end{equation}
where $\mathcal{P}^{(k)}$ is the projection operator from $L^2(\mathbb{R}^3, [f_M(\bxi)]^{-1} \,\mathrm{d}\bxi)$ onto $\mathbb{V}^{(k)}$. Such a decomposition allows us to consider the projection of $f$ onto a finite dimensional space when we want to achieve a reduced model up to a given order of accuracy.

For the purpose of deriving R13 equations, we just need to use the function spaces from $\mathbb{V}^{(0)}$ to $\mathbb{V}^{(3)}$. They will be discussed in the following subsections.

\subsubsection{The zeroth-order function space}
The function space $\mathbb{V}^{(0)}$ can be easily observed from the Chapman-Enskog expansion. It should be spanned by the basis functions corresponding to the conserved moments, which means
\begin{equation} \label{eq:V0}
\mathbb{V}^{(0)} = \operatorname{span} \{ \psi^0, \psi^1, \psi_i^0 \mid i = 1,2,3\}.
\end{equation}
It is clear that
\begin{displaymath}
\dim \mathbb{V}^{(0)} = 5.
\end{displaymath}

\subsubsection{The first-order function space}
Our idea to find the first-order function space is to first construct the orthogonal complement of $\mathbb{V}^{(0)} \oplus \mathbb{V}^{(1)}$, and then find $\mathbb{V}^{(1)}$ by orthogonality. This orthogonal complement will include the part of $f$ that has order higher than or equal to $O(\kn^2)$. According to \textbf{(O2)}-\textbf{(O4)} in Section \ref{sec:asym}, we know that all the following functions should be members of $(\mathbb{V}^{(0)} \oplus \mathbb{V}^{(1)})^{\perp}$:
\begin{equation} \label{eq:V01 complement 1}
\psi^n \text{ for } n \geq 2, \qquad
\psi_{i_1 \cdots i_l}^n \text{ for all } l \geq 3 \text{ and } n \geq 0.
\end{equation}
In addition, the relation \eqref{eq:w_i^n1} yields
\begin{equation*}
  w_i^{n} - \frac{b_{11n}^{(1)}}{b_{111}^{(1)}} w_i^{1} = \left\langle f,  \psi_i^{n} - \frac{b_{11n}^{(1)}}{b_{111}^{(1)}} \psi_i^{1}  \right \rangle \sim O(\kn^2) \text{~for~} n \geq 2,
\end{equation*}
meaning that
\begin{equation} \label{eq:V01 complement 2}
\psi_i^{n} - \frac{b_{11n}^{(1)}}{b_{111}^{(1)}} \psi_i^{1} \in (\mathbb{V}^{(0)} \oplus \mathbb{V}^{(1)})^{\perp}, \qquad \forall i = 1,2,3, \quad n \geq 2.
\end{equation}
Similarly, we can use \eqref{eq:w_ij^n1} to obtain
\begin{equation} \label{eq:V01 complement 3}
\psi_{ij}^{n} - \frac{b_{20n}^{(0)}}{b_{200}^{(0)}} \psi_{ij}^{0} \in (\mathbb{V}^{(0)} \oplus \mathbb{V}^{(1)})^{\perp}, \qquad \forall i = 1,2,3, \quad n \geq 1.
\end{equation}
We can now conclude that $(\mathbb{V}^{(0)} \oplus \mathbb{V}^{(1)})^{\perp}$ is the subspace of $L^2(\mathbb{R}^3, [f_M(\bxi)]^{-1}\,\mathrm{d}\bxi)$ spanned by all the functions in \eqref{eq:V01 complement 1}\eqref{eq:V01 complement 2}\eqref{eq:V01 complement 3}. Consequently, we can find $\mathbb{V}^{(1)}$ in the form of
\begin{equation} \label{eq:V1}
\mathbb{V}^{(1)} = \operatorname{span} \left\{ \phi_i^{(1)}, \phi_{ij}^{(1)} \,\Big|\, i,j = 1,2,3 \right\},
\end{equation}
where
\begin{displaymath}
\phi_i^{(1)} = \sum_{n=1}^{+\infty} c_1^{(1),n} \psi_i^n, \qquad
\phi_{ij}^{(1)} = \sum_{n=0}^{+\infty} c_2^{(1),n} \psi_{ij}^n.
\end{displaymath}
The coefficients $c_1^{(1),n}$ and $c_2^{(1),n}$ are determined by solving 
\begin{equation} \label{determine coefficient}
    \left\langle \psi_i^{n} - \frac{b_{11n}^{(1)}}{b_{111}^{(1)}} \psi_i^{1},\phi_i^{(1)} \right\rangle  =0 \quad  \text{~and~} \quad 
    \left\langle \psi_{ij}^{n} - \frac{b_{20n}^{(0)}}{b_{200}^{(0)}} \psi_{ij}^{0} ,\phi_{ij}^{(1)}  \right\rangle   =0 
\end{equation}
respectively, and are then scaled such that
\begin{equation}\label{scaling}
  \sum_{n=1}^{+\infty} \left|c_1^{(1),n}\right|^2 = \sum_{n=0}^{+\infty} \left|c_2^{(1),n}\right|^2 = 1.
\end{equation}
Since $\psi_{ij}^n$ is a trace-free tensor, we have
\begin{displaymath}
\dim \mathbb{V}^{(1)} = 8.
\end{displaymath}

\subsubsection{The second-order function space}
The second-order function spaces can be found in a similar way. Using \textbf{(O2)}--\textbf{(O4)}, we notice that the following functions are members of $(\mathbb{V}^{(0)} \oplus\mathbb{V}^{(1)} \oplus\mathbb{V}^{(2)})^{\perp}$:
\begin{gather*}
\psi^n - \frac{\gamma_n}{\gamma_2} \psi^2 \text{ for } n \geq 3, \qquad
\left( \psi_{i}^{n} - \frac{b^{(1)}_{11n}}{b^{(1)}_{111}} \psi^1_i \right) - \frac{\gamma^{(1),n}_{1}}{\gamma^{(1),2}_{1}}  \left( \psi_{i}^{2} - \frac{b^{(1)}_{112}}{b^{(1)}_{111}} \psi^1_i \right) \text{ for } n \geq 3, \\
\left( \psi_{ij}^{n} - \frac{b^{(0)}_{20n}}{b^{(0)}_{200}} \psi^0_{ij} \right) - \frac{\gamma^{(1),n}_{2}}{\gamma^{(1),1}_{2}}  \left( \psi_{ij}^{1} - \frac{b^{(0)}_{201}}{b^{(0)}_{200}} \psi^0_{ij} \right) \text{ for } n \geq 2, \qquad
\psi_{ijk}^n - \frac{\gamma^{(2),n}_3}{\gamma^{(2),0}_3} \psi_{ijk}^0 \text{ for } n \geq 1, \\
\psi_{i_1\cdots i_l}^n \text{ for all } l \geq 4 \text{ and } n \geq 0.
\end{gather*}
Let $(\mathbb{V}^{(0)} \oplus\mathbb{V}^{(1)} \oplus\mathbb{V}^{(2)})^{\perp}$ be the linear span of all these functions. We can find $\mathbb{V}^{(2)}$ in the following form:
\begin{equation} \label{eq:V2}
\mathbb{V}^{(2)} = \operatorname{span} \left\{
  \phi^{(2)}, \phi_i^{(2)}, \phi_{ij}^{(2)}, \phi_{ijk}^{(2)} \,\Big\vert\, i,j,k = 1,2,3
\right\},
\end{equation}
where
\begin{gather*}
\phi^{(2)} = \sum_{n=2}^{+\infty} c_0^{(2),n} \psi^n, \qquad
\phi_i^{(2)} = \sum_{n=2}^{+\infty} c_1^{(2),n} \left(\psi_{i}^{n} - \frac{b^{(1)}_{11n}}{b^{(1)}_{111}} \psi^1_i\right), \\
\phi_{ij}^{(2)} = \sum_{n=1}^{+\infty} c_2^{(2),n} \left(\psi_{ij}^{n} - \frac{b^{(0)}_{20n}}{b^{(0)}_{200}} \psi^0_{ij}\right), \qquad
\phi_{ijk}^{(2)} = \sum_{n=0}^{+\infty} c_3^{(2),n} \psi_{ijk}^n.
\end{gather*}
For simplicity, we let
\begin{displaymath}
c_1^{(2),1} = -\sum_{n=2}^{+\infty} c_1^{(2),n} \frac{b_{11n}^{(1)}}{b_{111}^{(1)}}, \qquad
c_2^{(2),0} = -\sum_{n=1}^{+\infty} c_2^{(2),n} \frac{b_{20n}^{(0)}}{b_{200}^{(0)}},
\end{displaymath}
so that
\begin{equation}
\phi_i^{(2)} = \sum_{n=1}^{+\infty} c_1^{(2),n} \psi_{i}^{n}, \qquad
\phi_{ij}^{(2)} = \sum_{n=0}^{+\infty} c_2^{(2),n} \psi_{ij}^{n}.
\end{equation}
These coefficients are again determined by the orthogonality similar to \eqref{determine coefficient} and then scaled similarly as \eqref{scaling}. The expressions of $c^{(1),n}_l$ and $c^{(2),n}_l$ can be found in Section \ref{app:cdl} of the supplementary material. Also, it is not difficult to find that
\begin{displaymath}
\dim \mathbb{V}^{(2)} = 16.
\end{displaymath}

\subsubsection{The third-order function space}
The space $\mathbb{V}^{(3)}$ can again be derived using the same strategy. The result will have the form
\begin{displaymath}
\mathbb{V}^{(3)} = \operatorname{span} \left\{
  \phi^{(3)}, \phi_i^{(3)}, \phi_{ij}^{(3)}, \phi_{ijk}^{(3)}, \phi_{ijkl}^{(3)} \,\Big\vert\, i,j,k,l = 1,2,3
\right\},
\end{displaymath}
and
\begin{displaymath}
\dim \mathbb{V}^{(3)} = 25.
\end{displaymath}
For our purpose, the precise forms of these functions will not be used.
\begin{remark}
For Maxwell molecules, due to the special structure $b_{lnn'}^{n_0} = 0$ for all $n \neq n'$, the function space $\mathbb{V}^{(2)}$ will be slightly different. Instead of \eqref{eq:V2}, we will have
\begin{displaymath}
\mathbb{V}^{(2)} = \operatorname{span} \left\{
  \phi^{(2)}, \phi_{ij}^{(2)}, \phi_{ijk}^{(2)} \,\Big\vert\, i,j,k = 1,2,3
\right\},
\end{displaymath}
and thus $\dim \mathbb{V}^{(2)} = 13$. In fact, we also have $\phi^{(2)} = \psi^2$, $\phi_{ij}^{(2)} = \psi_{ij}^1$ and $\phi_{ijk}^{(2)} = \psi_{ijk}$, which can significantly simplify the derivation. In this paper, we will mainly focus on the R13 equations for general molecules. One can find the equations for Maxwell molecules in many references such as \cite{Struchtrup2005macroscopic, Theisen2021}.
\end{remark}

\subsection{The abstract form of regularized 13-moment equations} \label{sec:abstract}
In order to derive R13 moment equations that are accurate up to the super-Burnett order, it suffices to work only in the function space
\begin{displaymath}
\mathbb{V} := \mathbb{V}^{(0)} \oplus \mathbb{V}^{(1)} \oplus \mathbb{V}^{(2)} \oplus \mathbb{V}^{(3)}.
\end{displaymath}
In other words, we consider the following approximation of the Boltzmann equation \eqref{boltzmann eq}:
\begin{displaymath}
\frac{\partial}{\partial x_j} (\mathcal{P}_{\mathbb{V}} \xi_j \mathcal{P}_{\mathbb{V}} f) = \frac{1}{\kn} \mathcal{P}_{\mathbb{V}} \mathcal{L}\mathcal{P}_{\mathbb{V}} f.
\end{displaymath}
Here $\mathcal{P}_{\mathbb{V}}$ denotes the projection operator from $L^2(\mathbb{R}^3, [f_M(\bxi)]^{-1} \,\mathrm{d}\bxi)$ onto $\mathbb{V}$. One can easily verify that both $\mathcal{P}_{\mathbb{V}} \xi_j \mathcal{P}_{\mathbb{V}}$ and $\mathcal{P}_{\mathbb{V}} \mathcal{L} \mathcal{P}_{\mathbb{V}}$ are self-adjoint operators.

To separate different orders in the distribution function $f$, we further write $\mathcal{P}_{\mathbb{V}} f$ as
\begin{displaymath}
\mathcal{P}_{\mathbb{V}} f = f^{(0)} + f^{(1)} + f^{(2)} + f^{(3)},
\end{displaymath}
where $f^{(k)} = \mathcal{P}^{(k)} f$ (see \eqref{eq:Pk}). Thus, the projected Boltzmann equation can be written in the following form:
\begin{equation} \label{eq:pBoltz}
\sum_{j=1}^3 \frac{\partial}{\partial x_j} \left(
\begin{bmatrix}
 \mathcal{A}_j^{(00)} &  \mathcal{A}_j^{(01)} &  \mathcal{A}_j^{(02)} &  \mathcal{A}_j^{(03)} \\
  \mathcal{A}_j^{(10)} &  \mathcal{A}_j^{(11)} &  \mathcal{A}_j^{(12)}  &  \mathcal{A}_j^{(13)}  \\
    \mathcal{A}_j^{(20)} &  \mathcal{A}_j^{(21)} &  \mathcal{A}_j^{(22)}  &  \mathcal{A}_j^{(23)} \\
     \mathcal{A}_j^{(30)} &   \mathcal{A}_j^{(31)} &  \mathcal{A}_j^{(32)}  &  \mathcal{A}_j^{(33)}
\end{bmatrix}
\begin{bmatrix}
  f^{(0)} \\ f^{(1)} \\ f^{(2)} \\ f^{(3)}
\end{bmatrix}
\right) = \frac{1}{\kn}
\begin{bmatrix}
  0 & 0 & 0 & 0 \\
  0 &  \mathcal{L}^{(11)} &  \mathcal{L}^{(12)}  &  \mathcal{L}^{(13)}  \\
    0 &  \mathcal{L}^{(21)} &  \mathcal{L}^{(22)}  &  \mathcal{L}^{(23)} \\
     0 &   \mathcal{L}^{(31)} &  \mathcal{L}^{(32)}  &  \mathcal{L}^{(33)}
\end{bmatrix}
\begin{bmatrix}
  f^{(0)} \\ f^{(1)} \\ f^{(2)} \\ f^{(3)}
\end{bmatrix},
\end{equation}
where
\begin{displaymath}
\mathcal{A}_j^{(kl)} = \mathcal{P}^{(k)} \xi_j \mathcal{P}^{(l)}, \qquad
\mathcal{L}^{(kl)} = \mathcal{P}^{(k)} \mathcal{L} \mathcal{P}^{(l)}.
\end{displaymath}
Note that the zero operators on the right-hand side of \eqref{eq:pBoltz} comes from the conservation laws, and on the left-hand side, we have written the sum over $j$ explicitly for clearness. The self-adjointness of $\mathcal{P}_{\mathbb{V}} \xi_j \mathcal{P}_{\mathbb{V}}$ and $\mathcal{P}_{\mathbb{V}} \mathcal{L}\mathcal{P}_{\mathbb{V}}$ implies that $\mathcal{A}_j^{(kl)} = [\mathcal{A}_j^{(lk)}]^{\dagger}$ and $\mathcal{L}_j^{(kl)} = [\mathcal{L}_j^{(lk)}]^{\dagger}$. For simplicity, below we will use the definition
\begin{displaymath}
\mathcal{A}^{(kl)} = \sum_{j=1}^3 \frac{\partial}{\partial x_j} \mathcal{A}_j^{(kl)},
\end{displaymath}
so that the equations \eqref{eq:pBoltz} become 
\begin{equation}\label{eq:pBoltzSimplified}
\boldsymbol{\mathcal{A}} \boldsymbol{f} = \frac{1}{\kn} \boldsymbol{\mathcal{L}}\boldsymbol{f},
\end{equation}
where
\begin{equation} 
\boldsymbol{\mathcal{A}} =
\begin{bmatrix}
 \mathcal{A}^{(00)} &  \mathcal{A}^{(01)} &  \mathcal{A}^{(02)} &  \mathcal{A}^{(03)} \\
  \mathcal{A}^{(10)} &  \mathcal{A}^{(11)} &  \mathcal{A}^{(12)}  &  \mathcal{A}^{(13)}  \\
    \mathcal{A}^{(20)} &  \mathcal{A}^{(21)} &  \mathcal{A}^{(22)}  &  \mathcal{A}^{(23)} \\
     \mathcal{A}^{(30)} &   \mathcal{A}^{(31)} &  \mathcal{A}^{(32)}  &  \mathcal{A}^{(33)}
\end{bmatrix}, \quad
\boldsymbol{\mathcal{L}} = 
\begin{bmatrix}
  0 & 0 & 0 & 0 \\
  0 &  \mathcal{L}^{(11)} &  \mathcal{L}^{(12)}  &  \mathcal{L}^{(13)}  \\
    0 &  \mathcal{L}^{(21)} &  \mathcal{L}^{(22)}  &  \mathcal{L}^{(23)} \\
     0 &   \mathcal{L}^{(31)} &  \mathcal{L}^{(32)}  &  \mathcal{L}^{(33)}
\end{bmatrix}, \quad
\boldsymbol{f} = \begin{bmatrix}
  f^{(0)} \\ f^{(1)} \\ f^{(2)} \\ f^{(3)}
\end{bmatrix}.
\end{equation}

In the rest part of this section, we will derive a simplified version of the equations \eqref{eq:pBoltzSimplified} in the following form:
\begin{equation} \label{eq:R13}
\begin{bmatrix}
 \mathcal{A}^{(00)} &  \mathcal{A}^{(01)} &  \mathcal{A}^{(02)} \\
  \mathcal{A}^{(10)} &  \mathcal{A}^{(11)} &  \mathcal{A}^{(12)} \\
    \mathcal{A}^{(20)} &  \mathcal{A}^{(21)} &  \bar{\mathcal{A}}^{(22)}
\end{bmatrix}
\begin{bmatrix}
  f^{(0)} \\ f^{(1)} \\ f^{(2)}
\end{bmatrix}
= \frac{1}{\kn} \begin{bmatrix}
  0 & 0 & 0 \\
  0 &  \mathcal{L}^{(11)} &  \mathcal{L}^{(12)}\\
    0 &  \mathcal{L}^{(21)} &  \mathcal{L}^{(22)} \\
\end{bmatrix}
\begin{bmatrix}
  f^{(0)} \\ f^{(1)} \\ f^{(2)}
\end{bmatrix}.
\end{equation}
We require that the super-Burnett equations can also be derived from these equations. The abstract system \eqref{eq:R13} will be further formulated as R13 equations in Section \ref{sec:R13_explicit}. To specify the operator $\bar{\mathcal{A}}^{(22)}$ and clarify why the final equations hold the form \eqref{eq:R13}, we need three steps given in the three subsections below.

\subsubsection{Step 1: Diagonalization of the right-hand side}
The first step of our derivation is to reformulate \eqref{eq:pBoltzSimplified} into an equivalent form where the right-hand side contains only a diagonal matrix of operators. Such a form will make it easier for us to spot high-order terms that can be dropped in the final form of R13 equations. To this end, we define
\begin{equation}\label{Q}
\boldsymbol{\mathcal{Q}} = 
\begin{bmatrix}
  \mathcal{I} & 0 & 0 & 0 \\
  0 & \mathcal{I} & 0 & 0 \\
  0 & 0 & \mathcal{I} & 0 \\
  0 & 0 & -\mathcal{B}^{(32)} & \mathcal{I}
\end{bmatrix} \begin{bmatrix}
  \mathcal{I} & 0 & 0 & 0 \\
  0 & \mathcal{I} & 0 & 0 \\
  0 & -\mathcal{B}^{(21)} & \mathcal{I} & 0 \\
  0 & -\mathcal{B}^{(31)} & 0 & \mathcal{I}
\end{bmatrix},
\end{equation}
where
\begin{gather*}
\mathcal{B}^{(21)} = \mathcal{L}^{(21)} [\mathcal{L}^{(11)}]^{-1}, \qquad
\mathcal{B}^{(31)} = \mathcal{L}^{(31)} [\mathcal{L}^{(11)}]^{-1}, \\
\mathcal{B}^{(32)} =[\mathcal{L}^{(32)} - \mathcal{B}^{(31)} \mathcal{L}^{(12)}][\mathcal{L}^{(22)} - \mathcal{B}^{(21)} \mathcal{L}^{(12)}]^{-1}.
\end{gather*}
This procedure is similar to Gaussian elimination of $\boldsymbol{\mathcal{L}}$, and the invertibility of the operators in the definitions of $\mathcal{B}^{(kl)}$ can be guaranteed by the fact that the linearized Boltzmann collision operator $\mathcal{L}$ is negative semidefinite with its nullspace being $\mathbb{V}^{(0)}$. Let $\boldsymbol{\mathcal{Q}}^{\dagger}$ be the adjoint transpose of the operator matrix $\boldsymbol{\mathcal{Q}}$. It can be seen that the matrix $\boldsymbol{\mathcal{Q}} \boldsymbol{\mathcal{L}}\boldsymbol{\mathcal{Q}}^{\dagger}$ is diagonal.

We now apply this transformation to the projected Boltzmann equation \eqref{eq:pBoltzSimplified}. By introducing $\tilde{\boldsymbol{f}} = \boldsymbol{\mathcal{Q}}^{-\dagger} \boldsymbol{f}$, we can multiply both sides of \eqref{eq:pBoltzSimplified} by $\boldsymbol{\mathcal{Q}}$ and write the result as
\begin{equation} \label{eq:transformed}
\boldsymbol{\mathcal{Q}} \boldsymbol{\mathcal{A}}\boldsymbol{\mathcal{Q}}^{\dagger} \tilde{\boldsymbol{f}} = \frac{1}{\kn} \boldsymbol{\mathcal{Q}} \boldsymbol{\mathcal{L}}\boldsymbol{\mathcal{Q}}^{\dagger} \tilde{\boldsymbol{f}}.
\end{equation}
We claim that the operator matrices $\boldsymbol{\mathcal{Q}} \boldsymbol{\mathcal{A}}\boldsymbol{\mathcal{Q}}^{\dagger}$ and $\boldsymbol{\mathcal{Q}} \boldsymbol{\mathcal{L}}\boldsymbol{\mathcal{Q}}^{\dagger}$ have the following structures:
\begin{equation} \label{eq:QAQ_QLQ}
\boldsymbol{\mathcal{Q}} \boldsymbol{\mathcal{A}}\boldsymbol{\mathcal{Q}}^{\dagger} =\begin{bmatrix}
 \mathcal{A}^{(00)} &  \mathcal{A}^{(01)} & 0 & 0 \\
  \mathcal{A}^{(10)} &  \mathcal{A}^{(11)} &  \tilde{\mathcal{A}}^{(12)}  &  0 \\
    0 &  \tilde{\mathcal{A}}^{(21)} &  \tilde{\mathcal{A}}^{(22)}  &  \tilde{\mathcal{A}}^{(23)} \\
    0 & 0 &  \tilde{\mathcal{A}}^{(32)}  &  \tilde{\mathcal{A}}^{(33)}
\end{bmatrix}, \qquad
\boldsymbol{\mathcal{Q}} \boldsymbol{\mathcal{L}}\boldsymbol{\mathcal{Q}}^{\dagger} =\begin{bmatrix}
 0 & 0 & 0 & 0 \\
  0 & \mathcal{L}^{(11)} &  0 &  0 \\
    0 & 0 &  \tilde{\mathcal{L}}^{(22)}  & 0 \\
    0 & 0 & 0 & \tilde{\mathcal{L}}^{(33)}
\end{bmatrix}.
\end{equation}
The diagonal structure of $\boldsymbol{\mathcal{Q}} \boldsymbol{\mathcal{L}}\boldsymbol{\mathcal{Q}}^{\dagger}$ has been clarified in the construction of the matrix $\boldsymbol{\mathcal{Q}}$. To explain why $\boldsymbol{\mathcal{Q}} \boldsymbol{\mathcal{A}}\boldsymbol{\mathcal{Q}}^{\dagger}$ is tridiagonal, we need the following facts:
\begin{itemize}
\item By straightforward calculation, we have
\begin{displaymath}
\tilde{\boldsymbol{f}} := \begin{bmatrix} \tilde{f}^{(0)} \\ \tilde{f}^{(1)} \\ \tilde{f}^{(2)} \\ \tilde{f}^{(3)} \end{bmatrix}
= \begin{bmatrix}
f^{(0)} \\
f^{(1)} + [\mathcal{B}^{(21)}]^{\dagger} f^{(2)} + [\mathcal{B}^{(31)} + \mathcal{B}^{(32)} \mathcal{B}^{(21)}]^{\dagger} f^{(3)} \\
f^{(2)} + [\mathcal{B}^{(32)}]^{\dagger} f^{(3)} \\
f^{(3)}
\end{bmatrix},
\end{displaymath}
which shows that the last two components of $\boldsymbol{\mathcal{Q}} \boldsymbol{\mathcal{L}}\boldsymbol{\mathcal{Q}}^{\dagger} \tilde{\boldsymbol{f}}$ have magnitudes $O(\kn^2)$ and $O(\kn^3)$, respectively.
\item By \eqref{eq:transformed}, the last two components of $\boldsymbol{\mathcal{Q}} \boldsymbol{\mathcal{A}}\boldsymbol{\mathcal{Q}}^{\dagger} \tilde{\boldsymbol{f}}$ should have magnitudes $O(\kn)$ and $O(\kn^2)$, respectively. Therefore, the three operators below the subdiagonal of $\boldsymbol{\mathcal{Q}} \boldsymbol{\mathcal{A}}\boldsymbol{\mathcal{Q}}^{\dagger}$ can only be zero operators.
\item Due to the symmetric structure of $\boldsymbol{\mathcal{Q}} \boldsymbol{\mathcal{A}}\boldsymbol{\mathcal{Q}}^{\dagger}$, the three operators above its superdiagonal must also be zero operators.
\end{itemize}

\subsubsection{Step 2: Dropping high-order terms}
In this step, our purpose is to drop as many terms in \eqref{eq:transformed} as possible while retaining the super-Burnett order of these equations. One possible approach to obtain super-Burnett equations is to perform the following Maxwell iteration based on \eqref{eq:transformed}:
\begin{align*}
\tilde{f}_{k+1}^{(1)} &= \kn [\mathcal{L}^{(11)}]^{-1} \left( \mathcal{A}^{(10)} f^{(0)} + \mathcal{A}^{(11)} \tilde{f}_k^{(1)} + \tilde{\mathcal{A}}^{(12)} \tilde{f}_k^{(2)} \right), & \tilde{f}^{(1)}_0 &= 0, \\
\tilde{f}_{k+1}^{(2)} &= \kn [\tilde{\mathcal{L}}^{(22)}]^{-1}\left( \tilde{\mathcal{A}}^{(21)} f_k^{(1)} + \tilde{\mathcal{A}}^{(22)} \tilde{f}_k^{(2)} + \tilde{\mathcal{A}}^{(23)} \tilde{f}_k^{(3)} \right), & \tilde{f}^{(2)}_0 &= 0, \\
\tilde{f}_{k+1}^{(3)} &= \kn [\tilde{\mathcal{L}}^{(33)}]^{-1}\left( \tilde{\mathcal{A}}^{(32)} \tilde{f}_k^{(2)} + \tilde{\mathcal{A}}^{(33)} \tilde{f}_k^{(3)}\right), & \tilde{f}^{(3)}_0 &= 0.
\end{align*}
Note that the local equilibrium $f^{(0)}$ does not attend the iteration. The super-Burnett equations can be written as
\begin{displaymath}
\mathcal{A}^{(00)} f^{(0)} + \mathcal{A}^{(01)} \tilde{f}^{(1)}_3 = 0,
\end{displaymath}
where $\tilde{f}^{(1)}_3$ is from the result of three Maxwell iterations. Straightforward calculation yields 
\begin{displaymath}
\begin{split}
\tilde{f}^{(1)}_3
&=\kn [\mathcal{L}^{(11)}]^{-1} \mathcal{A}^{(10)} f^{(0)} + \kn^2 [\mathcal{L}^{(11)}]^{-1} \mathcal{A}^{(11)} [\mathcal{L}^{(11)}]^{-1} \mathcal{A}^{(10)} f^{(0)} \\
& \quad {} + \kn^3  ([\mathcal{L}^{(11)}]^{-1}  \mathcal{A}^{(11)})^2 [\mathcal{L}^{(11)}]^{-1} \mathcal{A}^{(10)} f^{(0)} \\
& \quad {} + \kn^3 [\mathcal{L}^{(11)}]^{-1} \tilde{\mathcal{A}}^{(12)} [\mathcal{L}^{(22)}]^{-1} \tilde{\mathcal{A}}^{(21)} 
 [\mathcal{L}^{(11)}]^{-1} \mathcal{A}^{(10)} f^{(0)}.
\end{split}
\end{displaymath}
which has nothing to do with the operators $\tilde{\mathcal{A}}^{(22)}$, $\tilde{\mathcal{A}}^{(23)}$, $\tilde{\mathcal{A}}^{(32)}$ and $\tilde{\mathcal{A}}^{(33)}$. Therefore, we can set these four operators in \eqref{eq:QAQ_QLQ} to be zero and claim that the resulting equations
\begin{equation} \label{eq:R13_transformed}
\begin{bmatrix}
 \mathcal{A}^{(00)} &  \mathcal{A}^{(01)} & 0 & 0 \\
  \mathcal{A}^{(10)} &  \mathcal{A}^{(11)} &  \tilde{\mathcal{A}}^{(12)}  &  0 \\
    0 &  \tilde{\mathcal{A}}^{(21)} & 0 & 0 \\
    0 & 0 & 0 & 0
\end{bmatrix}
\begin{bmatrix}
f^{(0)} \\
\tilde{f}^{(1)} \\
\tilde{f}^{(2)} \\
f^{(3)}
\end{bmatrix} = \frac{1}{\kn} \begin{bmatrix}
 0 & 0 & 0 & 0 \\
  0 & \mathcal{L}^{(11)} &  0 &  0 \\
    0 & 0 &  \tilde{\mathcal{L}}^{(22)}  & 0 \\
    0 & 0 & 0 & \tilde{\mathcal{L}}^{(33)}
\end{bmatrix}
\begin{bmatrix}
f^{(0)} \\
\tilde{f}^{(1)} \\
\tilde{f}^{(2)} \\
f^{(3)}
\end{bmatrix}
\end{equation}
still have the super-Burnett order.

\subsubsection{Step 3: Applying the inverse transformation} The equations \eqref{eq:R13_transformed} already give us the abstract form of the R13 equations. However, due to the transformation introduced by $\boldsymbol{\mathcal{Q}}$, the left-hand sides of these equations no longer represent the approximation of $\xi_j \partial_{x_j}f$, and the right-hand sides no longer represent the approximation of $\mathcal{L}f$. Recovering such straightforward correspondence requires applying the inverse transformation $\boldsymbol{\mathcal{Q}}^{-1}$. By multiplying both sides of \eqref{eq:R13_transformed} by $\boldsymbol{\mathcal{Q}}^{-1}$ and using $\tilde{\boldsymbol{f}} = \boldsymbol{\mathcal{Q}}^{-\dagger} \boldsymbol{f}$, we obtain
\begin{displaymath}
\begin{bmatrix}
 \mathcal{A}^{(00)} &  \mathcal{A}^{(01)} &  \mathcal{A}^{(02)} &  \mathcal{A}^{(03)} \\
  \mathcal{A}^{(10)} &  \mathcal{A}^{(11)} &  \mathcal{A}^{(12)}  &  \mathcal{A}^{(13)}  \\
    \mathcal{A}^{(20)} &  \mathcal{A}^{(21)} &  \bar{\mathcal{A}}^{(22)} &  \bar{\mathcal{A}}^{(23)} \\
     \mathcal{A}^{(30)} &   \mathcal{A}^{(31)} &  \bar{\mathcal{A}}^{(32)} &  \bar{\mathcal{A}}^{(33)}
\end{bmatrix}
\begin{bmatrix}
  f^{(0)} \\ f^{(1)} \\ f^{(2)} \\ f^{(3)}
\end{bmatrix}
= \frac{1}{\kn} \begin{bmatrix}
  0 & 0 & 0 & 0 \\
  0 &  \mathcal{L}^{(11)} &  \mathcal{L}^{(12)}  &  \mathcal{L}^{(13)}  \\
    0 &  \mathcal{L}^{(21)} &  \mathcal{L}^{(22)}  &  \mathcal{L}^{(23)} \\
     0 &   \mathcal{L}^{(31)} &  \mathcal{L}^{(32)}  &  \mathcal{L}^{(33)}
\end{bmatrix}
\begin{bmatrix}
  f^{(0)} \\ f^{(1)} \\ f^{(2)} \\ f^{(3)}
\end{bmatrix},
\end{displaymath}
where 
\begin{equation}\label{barA22}
\bar{\mathcal{A}}^{(22)} = \mathcal{B}^{(21)}\mathcal{A}^{(12)} + \mathcal{A}^{(21)} [\mathcal{B}^{(21)}]^{\dagger} - \mathcal{B}^{(21)}\mathcal{A}^{(11)}[\mathcal{B}^{(21)}]^{\dagger},
\end{equation}
and the operators $\bar{\mathcal{A}}^{(23)}$, $\bar{\mathcal{A}}^{(32)}$ and $\bar{\mathcal{A}}^{(33)}$ are unimportant since \eqref{eq:R13_transformed} already shows $f^{(3)} = 0$. We can actually remove $f^{(3)}$ completely from the system to get the final form \eqref{eq:R13}.

Since $f^{(3)}$ is no longer present in the final equations, they can be reformulated using a smaller function space
\begin{equation} \label{eq:barV}
\overline{\mathbb{V}} := \mathbb{V}^{(0)} \oplus \mathbb{V}^{(1)} \oplus \mathbb{V}^{(2)}.
\end{equation}
Let $\mathcal{P}_{\overline{\mathbb{V}}}$ be the projection operator onto $\overline{\mathbb{V}}$. The equations can then be written as
\begin{displaymath}
\frac{\partial}{\partial x_j} (\bar{\mathcal{A}}_j \bar{f}) = \frac{1}{\kn} \mathcal{P}_{\overline{\mathbb{V}}} \mathcal{L} \bar{f},
\end{displaymath}
where $\bar{f}$ is the unknown function in $\overline{\mathbb{V}}$, and
\begin{equation} \label{eq:Aj}
\bar{\mathcal{A}}_j = \mathcal{P}_{\overline{\mathbb{V}}} \xi_j - \left(\mathcal{A}_j^{(22)} - \mathcal{B}^{(21)}\mathcal{A}_j^{(12)} - \mathcal{A}_j^{(21)} [\mathcal{B}^{(21)}]^{\dagger} + \mathcal{B}^{(21)}\mathcal{A}_j^{(11)}[\mathcal{B}^{(21)}]^{\dagger}\right) \mathcal{P}^{(2)},
\end{equation}
which is an operator on $\overline{\mathbb{V}}$ approximating the multiplication of a function by the velocity component $\xi_j$.
It is straightforward to verify that $\bar{\mathcal{A}}_j$ is a self-adjoint operator on $\overline{\mathbb{V}}$.

So far, we have obtained a linear system with desired symmetric structure, allowing us to further derive stable boundary conditions. This will be discussed in Section \ref{sec:bc}. In what follows, we will first provide the explicit forms of these equations.

\subsection{Derivation of R13 moment equations}
\label{sec:R13_explicit}
We will now derive the R13 equations presented in Section \ref{sec:symR13}.
According to definitions of the function spaces $\overline{\mathbb{V}}^{(0)}$, $\overline{\mathbb{V}}^{(1)}$ and $\overline{\mathbb{V}}^{(2)}$ (see \eqref{eq:V0}\eqref{eq:V1}\eqref{eq:V2}), we can express the projection $\bar{f} := \mathcal{P}_{\overline{\mathbb{V}}} f$ explicitly as
\begin{equation} \label{f u expansion}
  \bar{f} = \underbrace{w^0 \psi^{0} + w^1 \psi^{1} + 3w_i^0 \psi_i^{0}}_{\text{zeroth order}} + \underbrace{3u_i^{(1)} \phi_i^{(1)} + \frac{15}{2} u_{ij}^{(1)} \phi_{ij}^{(1)}}_{\text{first order}} + \underbrace{u^{(2)} \phi^{(2)} + 3u_i^{(2)} \phi_i^{(2)} + \frac{15}{2}u_{ij}^{(2)} \phi_{ij}^{(2)} + \frac{35}{2}u_{ijk}^{(2)} \phi_{ijk}^{(2)}}_{\text{second order}}.
\end{equation}
For any $\bar{f} \in \overline{\mathbb{V}}$ with the above expression, we have
\begin{gather*}
w^0 = \langle  \bar{f} , \psi^{(0)} \rangle, \qquad w^1 = \langle \bar{f},\psi^{(1)} \rangle, \qquad w_i^0 = \langle  \bar{f} , \psi_i^{(0)} \rangle, \\
u_i^{(1)} =  \langle \bar{f} , \phi^{(1)}_{i} \rangle   , \qquad u_{ij}^{(1)} =  \langle \bar{f} , \phi^{(1)}_{ij} \rangle  , \\
u^{(2)} =   \langle \bar{f} , \phi^{(2)} \rangle  , \qquad u^{(2)}_i =  \langle \bar{f} , \phi^{(2)}_{i} \rangle, \qquad u^{(2)}_{ij} =  \langle \bar{f} , \phi^{(2)}_{ij} \rangle , \qquad  u^{(2)}_{ijk} = \langle \bar{f} , \phi^{(2)}_{ijk} \rangle. 
\end{gather*}
Our purpose is to find the expressions of $\partial_{x_j} (\bar{\mathcal{A}}_j \bar{f})$ and $\mathcal{P}_{\overline{\mathbb{V}}} \mathcal{L} \bar{f}$ under such representation.

We begin with the collision term $\mathcal{P}_{\overline{\mathbb{V}}} \mathcal{L} \bar{f}$. Due to the conservation laws, the zeroth-order part of $\bar{f}$ vanishes after applying $\mathcal{L}$. Using the rotational invariance of $\mathcal{L}$, we get
\begin{equation} \label{eq:PVLf}
\mathcal{P}_{\overline{\mathbb{V}}} \mathcal{L} \bar{f} = \mathscr{L}_0^{(22)} u^{(2)} \phi^{(2)} + \sum_{m=1}^2 \sum_{n=1}^2 \mathscr{L}_1^{(mn)} u_i^{(n)} \phi_i^{(m)} + \sum_{m=1}^2 \sum_{n=1}^2 \mathscr{L}_2^{(mn)} u_{ij}^{(n)} \phi_{ij}^{(m)} + \mathscr{L}_3^{(22)} u_{ijk}^{(2)} \phi_{ijk}^{(2)},
\end{equation}
where 
\begin{equation}\label{Lmnl}
\mathscr{L}_l^{(mn)} = \frac{(2l+1)!!}{l!} \frac{\langle \phi_{i_1\cdots i_l}^{(m)}, \mathcal{L}\phi_{i_1\cdots i_l}^{(n)} \rangle}{\langle \phi_{i_1\cdots i_l}^{(m)}, \phi_{i_1\cdots i_l}^{(m)} \rangle}
% =    \sum_{m',n'} c^{(m),m'}_l c^{(n),n'}_l \langle \psi_{i_1\cdots i_l}^{m'}, \mathcal{L}\psi_{i_1\cdots i_l}^{n'} \rangle \\ 
%  = \sum^{+\infty}_{m''=0} \sum_{m',n'} c^{(m),m'}_l c^{(n),n'}_l a_{ln'm''}\langle \psi_{i_1\cdots i_l}^{m'}, \psi_{i_1\cdots i_l}^{m''} \rangle  
 =\frac{(2l+1)!!}{l!} \sum_{m',n'} c^{(m),m'}_l c^{(n),n'}_l a_{ln'm'}
 \end{equation}
This value is independent of the choice of $i_1, \cdots, i_l$ due to the rotational invariance, and is symmetric with respect to the superscript $(mn)$. The expansion \eqref{eq:PVLf} can be formally expressed by 
\begin{equation} \label{eq:R13 collision}
\boldsymbol{\phi}^T 
\left[ \begin{array}{ccc|c@{\!\!}c|c@{\!\!}c@{\!\!}c@{\!\!}c@{}}
0 & 0 & 0 & 0 & 0 & 0 & 0 & 0 & 0 \\
0 & 0 & 0 & 0 & 0 & 0 & 0 & 0 & 0 \\
0 & 0 & 0 & 0 & 0 & 0 & 0 & 0 & 0 \\
\hline \rule{0pt}{15pt}
0 & 0 & 0 & \mathscr{L}_1^{(11)} & 0 & 0 & \mathscr{L}_1^{(12)} & 0 & 0 \\
0 & 0 & 0 & 0 & \mathscr{L}_2^{(11)} & 0 & 0 & \mathscr{L}_2^{(12)} & 0 \\[7pt]
\hline  \rule{0pt}{15pt}
0 & 0 & 0 & 0 & 0 & \mathscr{L}_0^{(22)} & 0 & 0 & 0 \\
0 & 0 & 0 & \mathscr{L}_1^{(21)} & 0 & 0 & \mathscr{L}_1^{(22)} & 0 & 0 \\
0 & 0 & 0 & 0 & \mathscr{L}_2^{(21)} & 0 & 0 & \mathscr{L}_2^{(22)} & 0 \\
0 & 0 & 0 & 0 & 0 & 0 & 0 & 0 & \mathscr{L}_3^{(22)}
\end{array} \right]
\boldsymbol{u}
\end{equation}
where 
\begin{displaymath}
    \begin{split}
\boldsymbol{\phi}^T = &  \ 
\left( \psi^0\ \psi^1\ \psi^0_i \ \vline \  \phi^{(1)}_i\ \phi^{(1)}_{ij} \ \vline \   \phi^{(2)} \ \phi^{(2)}_i\ \phi^{(2)}_{ij}\ \phi^{(2)}_{ijk} \right), \\ 
\boldsymbol{u}^T = &  \ 
\left( w^0\ w^1\ w^0_i \ \vline \  u^{(1)}_i\ u^{(1)}_{ij} \ \vline \   u^{(2)} \ u^{(2)}_i\ u^{(2)}_{ij}\  u^{(2)}_{ijk} \right),
 \end{split}
\end{displaymath}

The left-hand side $\partial_{x_j} (\bar{\mathcal{A}}_j \bar{f})$ should be calculated according to \eqref{eq:Aj}. We can first find $\partial_{x_j}(\mathcal{P}_{\overline{\mathbb{V}}} \xi_j \bar{f})$ by straightforward calculation using \eqref{psi xi}. We will again denote the result formally using the form \eqref{eq:R13 collision}
\begin{equation} \label{eq:R13 lhs raw}
 \boldsymbol{\phi}^T\resizebox{.8\hsize}{!}{$
\left[ \begin{array}{@{}c@{\!}c@{\!\!\!\!}c|c@{\!\!\!}c|c@{\!\!\!}c@{\!\!\!}c@{\!\!\!}c@{}}
     0 & 0 & \sqrt{3}\frac{\partial}{\partial x_i} & 0 & 0 & 0 & 0 & 0 & 0 \\[5pt]
     0 & 0 & -\sqrt{2}\frac{\partial}{\partial x_i} & \sqrt{5}c_1^{(1),1}\frac{\partial}{\partial x_i} & 0  & 0 & \sqrt{5}c^{(2),1}_1\frac{\partial}{\partial x_i} & 0 & 0 \\[5pt]
        \sqrt{3}\frac{\partial}{\partial x_i} &  -\sqrt{2}\frac{\partial}{\partial x_i} & 0 & 0 & 3\sqrt{5}c_2^{(1),0}\frac{\partial}{\partial x_j} & 0 & 0 & 3\sqrt{5}c^{(2),0}_2 \frac{\partial}{\partial x_j} & 0 \\[5pt] \hline  \rule{0pt}{15pt}
   0 &  \sqrt{5} c_1^{(1),1} \frac{\partial}{\partial x_i} & 0 & 0 & A_{45}\frac{\partial}{\partial x_j}  & A_{46}\frac{\partial}{\partial x_i}  & 0 & A_{48}\frac{\partial}{\partial x_j} & 0 \\
     0 &  0 & 3\sqrt{5} c_2^{(1),0}\frac{\partial_{\langle}}{\partial x_{j\rangle}} & A_{45}\frac{\partial_{\langle}}{\partial x_{j\rangle}}   & 0 & 0 & A_{57}\frac{\partial_\langle}{\partial x_{j\rangle}} & 0 & A_{59}\frac{\partial}{\partial x_k} \\[7pt]  \hline  \rule{0pt}{15pt}
     0 &  0 &  0 &  A_{46}\frac{\partial}{\partial x_i} &  0 &  0 &  A_{67}\frac{\partial}{\partial x_i} &  0 &  0 \\
     0 & \sqrt{5}c^{(2),1}_1\frac{\partial}{\partial x_i} &  0 & 0 &  A_{57}\frac{\partial}{\partial x_{j}} &  A_{67}\frac{\partial}{\partial x_{i}} &  0 &  A_{78}\frac{\partial}{\partial x_{j}} & 0 \\ 
          0 & 0 &  3\sqrt{5}c^{(2),0}_2 \frac{\partial_\langle}{\partial x_{j\rangle}} & A_{48}\frac{\partial_\langle}{\partial x_{j\rangle}} &  0 &  0 &  A_{78}\frac{\partial_\langle}{\partial x_{j\rangle}} &  0 &  A_{89}\frac{\partial}{\partial x_{k}} \\
          0 & 0 & 0 & 0 & A_{59}\frac{\partial_\langle}{\partial x_{k\rangle}} & 0 & 0 &  A_{89}\frac{\partial_\langle}{\partial x_{k\rangle}} & 0 \\[5pt] 
\end{array} \right]
$}\boldsymbol{u} 
\end{equation}
for conciseness, where the expressions of $A_{ij}$ can be found in Section \ref{app:matrices}. The derivative operator $\frac{\partial_{\langle}}{\partial x_{j \rangle}}$ in the above matrix is defined such that $\frac{\partial_{\langle}}{\partial x_{j \rangle}} u^{(d)}_{i_1\cdots i_l} = \frac{\partial u^{(d)}_{\langle i_1\cdots i_l}}{\partial x_{j\rangle}}$ where Einstein's summation should be applied when any of $i_1,\cdots,i_l$ is $j$. 

To get the final form of the R13 equations, we replace the lower-right block of the matrix on the left-hand side by an operator corresponding to $\bar{\mathcal{A}}^{(22)}$ defined in \eqref{barA22}. According to \eqref{eq:R13 collision}, the operator $\mathcal{B}^{(21)} = \mathcal{L}^{(21)} [\mathcal{L}^{(11)}]^{-1}$ can be represented by the matrix 
\begin{displaymath}
\pmb{\mathscr{B}}^{(21)} := \begin{bmatrix}
0 & 0 \\ \mathscr{L}_1^{(21)} & 0 \\
0 & \mathscr{L}_2^{(21)} \\ 0 & 0
\end{bmatrix}
\begin{bmatrix}
\mathscr{L}_1^{(11)} & 0 \\
0 & \mathscr{L}_2^{(11)}
\end{bmatrix}^{-1} =
\begin{bmatrix}
0 & 0 \\ \mathscr{L}_1^{(21)} / \mathscr{L}_1^{(11)}  & 0 \\
0 & \mathscr{L}_2^{(21)} / \mathscr{L}_2^{(11)} \\ 0 & 0
\end{bmatrix}.
\end{displaymath}
Note that we actually have
\begin{equation} \label{eq:L}
\mathscr{L}_1^{(21)} / \mathscr{L}_1^{(11)} = c_1^{(2),1} / c_1^{(1),1}, \quad \text{and} \quad \mathscr{L}_2^{(21)} / \mathscr{L}_2^{(11)} = c_2^{(2),0} / c_2^{(1),0}
\end{equation}
since $\mathcal{A}^{(20)} - \mathcal{B}^{(21)} \mathcal{A}^{(10)} = 0$ (see \eqref{eq:QAQ_QLQ}).
Thus, following \eqref{barA22}, the operator $\bar{\mathcal{A}}^{(22)}$ can be represented by
\begin{displaymath}
\begin{split}
& \pmb{\mathscr{B}}^{(21)} \!
\left[\begin{array}{@{}c@{\!\!}c@{\!\!}c@{\!\!}c@{}}
    A_{46}\frac{\partial}{\partial x_i}  & 0 & A_{48}\frac{\partial}{\partial x_j} & 0 \\
    0 & A_{57}\frac{\partial_\langle}{\partial x_{j\rangle}} & 0 & A_{59}\frac{\partial}{\partial x_k}
\end{array} \right] +
\left[ \begin{array}{@{}c@{\!\!}c@{}}
     A_{46}\frac{\partial}{\partial x_i} & 0 \\
     0 & A_{57}\frac{\partial}{\partial x_{j}} \\ 
     A_{48}\frac{\partial_\langle}{\partial x_{j\rangle}} & 0 \\
     0 & A_{59}\frac{\partial_\langle}{\partial x_{k\rangle}}
\end{array} \right] \! [\pmb{\mathscr{B}}^{(21)}]^T
- \pmb{\mathscr{B}}^{(21)} \!
\left[\begin{array}{@{}c@{\!\!}c@{}}
  0 & A_{45} \frac{\partial}{\partial x_j} \\
  A_{45} \frac{\partial_{\langle}}{\partial x_{j\rangle}} & 0
\end{array} \right] \!
[\pmb{\mathscr{B}}^{(21)}]^T \\
={} & \begin{bmatrix}
  0 & \frac{c^{(2),1}_1}{c^{(1),1}_1} A_{46}  \frac{\partial }{\partial x_i}& 0 & 0 \\
  \frac{c^{(2),1}_1}{c^{(1),1}_1} A_{46}  \frac{\partial }{\partial x_i} & 0 & \bar{A}_{78}\frac{\partial }{\partial x_j } & 0 \\
  0 & \bar{A}_{78}\frac{\partial_{\langle} }{\partial x_{j \rangle}} & 0 & \frac{c^{(2),0}_2}{c^{(1),0}_2} A_{59}  \frac{\partial }{\partial x_k} \\
  0 & 0 & \frac{c^{(2),0}_2}{c^{(1),0}_2} A_{59}  \frac{\partial_{\langle} }{\partial x_{k \rangle}} & 0
\end{bmatrix}
\end{split}
\end{displaymath}
where 
\begin{equation} \label{eq:A78}
   \bar{A}_{78} =  \frac{c^{(2),1}_1}{c^{(1),1}_1} A_{48}+ \frac{c^{(2),0}_2}{c^{(1),0}_2} A_{57}- \frac{c^{(2),1}_1 c^{(2),0}_2}{c^{(1),1}_1 c^{(1),0}_2} A_{45}. 
\end{equation}
The final form of $\partial_{x_j}(\bar{\mathcal{A}}_j \bar{f})$ can be obtained by using the matrix above to replace the lower-right block of the matrix in \eqref{eq:R13 lhs raw}. The result will then be equated to \eqref{eq:R13 collision} to get the explicit expressions of the R13 equations, which are given as follows: 
 \begin{align}
    \label{eq1}
      \sqrt{3} \frac{\partial w^0_j}{\partial x_j} & = 0,  \\ % eq1
    \label{eq2}
     -\sqrt{2}  \frac{\partial w^0_j}{\partial x_j}  + \sqrt{5}c_1^{(1),1}\frac{\partial u^{(1)}_j}{\partial x_j} +   \sqrt{5}c^{(2),1}_1 \frac{\partial u^{(2)}_j}{\partial x_j} & = 0 ,\\ % eq2
    \label{eq3}
     \sqrt{3} \frac{\partial w^0}{\partial x_i} - \sqrt{2}\frac{\partial w^1}{\partial x_i} + 3\sqrt{5} c_2^{(1),0} \frac{\partial u^{(1)}_{ij}}{\partial x_j}  + 3\sqrt{5}  c^{(2),0}_2 \frac{\partial u^{(2)}_{ij}}{\partial x_j} & = 0, \\ % eq3
    \label{eq4}
      \sqrt{5}c_1^{(1),1} \frac{\partial w^1}{\partial x_i} + A_{45} \frac{\partial u^{(1)}_{ij}}{\partial x_j} + A_{46} \frac{\partial u^{(2)}}{\partial x_i} + A_{48}\frac{\partial u^{(2)}_{ij}}{\partial x_j} & = \frac{1}{\kn}(\mathscr{L}^{(11)}_1 u^{(1)}_{i}+\mathscr{L}^{(12)}_1 u^{(2)}_{i}),\\ % eq4
      \label{eq5}
      3\sqrt{5} c_2^{(1),0} \frac{\partial w^{0}_{\langle i}}{\partial x_{j \rangle}} + A_{45}\frac{\partial u^{(1)}_{\langle i}}{\partial x_{j \rangle}} + A_{57}\frac{\partial u^{(2)}_{\langle i}}{\partial x_{j \rangle}} + A_{59} \frac{\partial u^{(2)}_{ijk}}{\partial x_k} & = \frac{1}{\kn}(\mathscr{L}^{(11)}_2 u^{(1)}_{ij}+\mathscr{L}^{(12)}_2 u^{(2)}_{ij}),\\ % eq5
      \label{eq6}
      A_{46} \frac{\partial u^{(1)}_j}{\partial x_j}  + \frac{c^{(2),1}_1}{c^{(1),1}_1} A_{46}  \frac{\partial u^{(2)}_j}{\partial x_j} & =  \frac{1}{\kn}\mathscr{L}^{(22)}_0 u^{(2)},\\ % eq6
      \label{eq7}
            \sqrt{5} c^{(2),1}_1 \frac{\partial w^1}{\partial x_i} + A_{57} \frac{\partial u^{(1)}_{ij}}{\partial x_j} + \frac{c^{(2),1}_1}{c^{(1),1}_1} A_{46} \frac{\partial u^{(2)}}{\partial x_i} + \bar{A}_{78}\frac{\partial u^{(2)}_{ij}}{\partial x_j} & = \frac{1}{\kn}(\mathscr{L}^{(21)}_1 u^{(1)}_{i}+\mathscr{L}^{(22)}_1 u^{(2)}_{i}),\\ % eq7
        \label{eq8}
        3\sqrt{5} c^{(2),0}_2 \frac{\partial w^{0}_{\langle i}}{\partial x_{j \rangle}} + A_{48}\frac{\partial u^{(1)}_{\langle i}}{\partial x_{j \rangle}} + \bar{A}_{78}\frac{\partial u^{(2)}_{\langle i}}{\partial x_{j \rangle}} + \frac{c^{(2),0}_2}{c^{(1),0}_2} A_{59} \frac{\partial u^{(2)}_{ijk}}{\partial x_k} & = \frac{1}{\kn}(\mathscr{L}^{(21)}_2 u^{(1)}_{ij}+\mathscr{L}^{(22)}_2 u^{(2)}_{ij}),\\ % eq8
       \label{eq9}
       A_{59} \frac{\partial u^{(1)}_{\langle ij}}{\partial x_{k \rangle}} +  \frac{c^{(2),0}_2}{c^{(1),0}_2} A_{59}\frac{\partial u^{(2)}_{\langle ij}}{\partial x_{k \rangle}} & = \frac{1}{\kn} \mathscr{L}^{(22)}_3 u^{(2)}_{ijk}. 
 \end{align}

It may be more interesting to write down these equations using the variables $w_{ij}^0$ and $w_i^1$. For the distribution function \eqref{f u expansion}, the moments $w_{ij}^0$ can be related to the coefficients by
\begin{equation} \label{eq:wij}
 w_{ij}^0 = \langle \psi_{ij}^0, \bar{f} \rangle = u_{i'j'}^{(1)} \langle \psi_{ij}^0, \phi_{i'j'}^{(1)} \rangle + u_{i'j'}^{(2)} \langle \psi_{ij}^0, \phi_{i'j'}^{(2)} \rangle = c_2^{(1),0} u_{ij}^{(1)} + c_2^{(2),0} u_{ij}^{(2)} .
\end{equation}
Similarly, we have
\begin{equation} \label{eq:wi}
w_i^1 = c_1^{(1),1} u_i^{(1)} + c_1^{(2),1} u_i^{(2)}.
\end{equation}
Using these variables, equations \eqref{eq6} and \eqref{eq9} become much neater: 
\begin{equation} \label{eq:u0 u3}
   u^{(2)} = \kn \beta'_0 \frac{\partial w^1_j}{\partial x_j}, \qquad u^{(2)}_{ijk} = \kn \beta'_3 \frac{\partial w^0_{\langle ij}}{\partial x_{k \rangle}}
\end{equation}
where 
\begin{displaymath}
   \beta_0' = \frac{A_{46}}{c^{(1),1}_1\mathscr{L}_0^{(22)}}, \qquad \beta_3' = \, \frac{A_{59}}{c^{(1),0}_2\mathscr{L}_3^{(22)}}.
\end{displaymath}
The other two second-order variables $u_i^{(2)}$ and $u_{ij}^{(2)}$ can also be represented using derivatives of $w_{ij}^0$ or $w_i^1$. To get $u_i^{(2)}$, we need to multiply \eqref{eq4} by $c_1^{(2),1}/c_1^{(1),1}$ and subtract the result by \eqref{eq7}. During the calculation, we need to use \eqref{eq:L} and \eqref{eq:A78} to get
\begin{equation} \label{eq:u1}
   u^{(2)}_i = \kn \beta'_1 \frac{\partial w^0_{ij}}{\partial x_j}, \qquad \beta_1'  = \frac{c^{(1),1}_1 A_{57} - c^{(2),1}_1A_{45}}{c^{(1),0}_2(c^{(1),1}_1\mathscr{L}_1^{(22)} - c^{(2),1}_1\mathscr{L}_1^{(12)})}.
\end{equation}
Similarly, we can use \eqref{eq5} and \eqref{eq8} to find
\begin{equation} \label{eq:u2}
   u^{(2)}_{ij} = \kn \beta'_2 \frac{\partial w^1_{\langle i}}{\partial x_{j \rangle}}, \qquad\beta_2' = \frac{c^{(1),0}_2 A_{48} - c^{(2),0}_2A_{45}}{c^{(1),1}_1(c^{(1),0}_2\mathscr{L}_2^{(22)} - c^{(2),0}_2 \mathscr{L}_2^{(12)})}.
\end{equation}
Finally, we can plug \eqref{eq:wij}--\eqref{eq:u2} into \eqref{eq1}--\eqref{eq5} to get a linear system written completely in the variables $w^0, w^1, w_i^0, w_i^1, w_{ij}^0$. The linear system in Section \ref{sec:symR13} can then be obtained by applying the relationship \eqref{physics quantities}.
\begin{remark}
For Maxwell molecules, due to the lack of three dimensions in $\mathbb{V}^{(2)}$, the variables  $u^{(2)}_i$ do not exist. As a result, one needs to remove \eqref{eq7} from the system. Also, the expression of $\beta'_1$ in \eqref{eq:u1} becomes $\frac{0}{0}$ for Maxwell molecules, and we need to set $\beta'_1$ to be zero to obtain correct equations.
\end{remark}

% To achieve such accuracy, by the second block of equations in the above system, we approximate $\boldsymbol{\tilde{u}}^{(1)}$ as 
% \begin{displaymath}
%   \boldsymbol{\tilde{u}}^{(1)} \approx \kn\mathscr{L}_{11}^{-1}\left( \mathscr{A}_{10} \boldsymbol{\tilde{u}}^{(0)} + \mathscr{A}_{11} \boldsymbol{\tilde{u}}^{(1)*} + \tilde{\mathscr{A}}_{12} \boldsymbol{\tilde{u}}^{(2)*} \right)
% \end{displaymath}
% where $\boldsymbol{\tilde{u}}^{(1)*}$ with accuracy $O(\kn)$ is the leading term of $\boldsymbol{\tilde{u}}^{(1)}$ given by the second block of equations:
% \begin{displaymath}
%   \boldsymbol{\tilde{u}}^{(1)*} = \kn \mathscr{L}_{11}^{-1} \mathscr{A}_{10}  \boldsymbol{\tilde{u}}^{(0)},
% \end{displaymath}
% and $\boldsymbol{\tilde{u}}^{(2)*}$ with accuracy $O(\kn^2)$ is the leading term of $\boldsymbol{\tilde{u}}^{(2)}$ given by the third block of equations:
% \begin{displaymath}
%       \boldsymbol{\tilde{u}}^{(2)*} = \kn \tilde{\mathscr{L}}_{22}^{-1}  \tilde{\mathscr{A}}_{21} \boldsymbol{\tilde{u}}^{(1)*}. 
% \end{displaymath}
% Consequently, $\boldsymbol{\tilde{u}}^{(1)}$ is approximated as
% \begin{displaymath}
%   \boldsymbol{\tilde{u}}^{(1)} \approx \mathscr{L}_{11}^{-1}\left( \kn + \kn^2 \mathscr{A}_{11} \mathscr{L}_{11}^{-1} + \kn^3 \tilde{\mathscr{A}}_{12}\tilde{\mathscr{L}}_{22}^{-1}  \tilde{\mathscr{A}}_{21}  \mathscr{L}_{11}^{-1} \right) \mathscr{A}_{10}  \boldsymbol{\tilde{u}}^{(0)} 
% \end{displaymath}
% which is of $O(\kn^3)$ accuracy. 

\section{Derivation of Onsager boundary conditions for R13 equations}
\label{sec:bc}
We are now ready to derive wall boundary conditions for the R13 equations. In this work, we will focus on Maxwell's accommodation model \cite{Maxwell1878}, which considers the interaction between gas molecules and the solid wall as a combination of specular reflection and diffusive reflection. In the derivation of boundary conditions for R13 equations, we will again write an abstract form using operators on function spaces to avoid lengthy formulas, and then convert it to its concrete form. Before starting our derivation, we will first briefly review Maxwell's boundary conditions for the linearized Boltzmann equation.

\subsection{Maxwell's boundary condition}
Consider the boundary point at which the outer normal unit vector is $\boldsymbol{n} = (n_1, n_2, n_3)^T$. For simplicity, we adopt the coordinate system with basis vectors $\boldsymbol{n}$, $\boldsymbol{t}_1$ and $\boldsymbol{t}_2$ with $\boldsymbol{t}_1$ and $\boldsymbol{t}_2$ being two orthogonal tangent vectors. Thus, the distribution function can now be presented by $f(\xi_n, \xi_{t_1}, \xi_{t_2})$, where $\xi_n = \bxi \cdot \boldsymbol{n}$ and $\xi_{t_1}$ and $\xi_{t_2}$ are similarly defined. For all other vectors and tensors, the indices will also be changed from $1,2,3$ to $n,t_1,t_2$ in this section. Thus, the Boltzmann equation can be written as 
\begin{displaymath}
\xi_n \frac{\partial f}{\partial x_n} +
\xi_{t_1} \frac{\partial f}{\partial x_{t_1}} +
\xi_{t_2} \frac{\partial f}{\partial x_{t_2}} = \frac{1}{\kn}\mathcal{L}f.
\end{displaymath}

For hyperbolic equations, boundary conditions are needed only for incoming characteristics. Maxwell \cite{Maxwell1878} proposed the following wall boundary condition for the distribution function:
\begin{equation} \label{eq:Maxwell bc}
f(\xi_n, \xi_{t_1}, \xi_{t_2}) = \chi f_W(\xi_n, \xi_{t_1}, \xi_{t_2}) + (1-\chi) f(-\xi_n, \xi_{t_1}, \xi_{t_2}), \qquad \xi_n < 0,
\end{equation}
and $\chi \in [0,1]$ is the accommodation coefficient denoting the proportion of the diffusive reflection. Assume that the solid wall has temperature $\theta^W$ and only moves in the tangential direction with velocity $(0, v_{t_1}^W, v_{t_2}^W)$. Then the ``wall Maxwellian'' $f_W(\bxi)$ has the expression 
\begin{displaymath}
   f_W(\xi_n, \xi_{t_1}, \xi_{t_2}) = w^{0,W} \psi^0 + w^{1,W} \psi^1 + 3w_{t_1}^{0,W} \psi^0_{t_1} + 3w_{t_2}^{0,W} \psi^0_{t_2}.
\end{displaymath}
Here $w^{1,W} = -\sqrt{\frac{3}{2}}\theta^W$, $w^{0,W}_{t_i} = \frac{\sqrt{3}}{3} v^W_{t_i}$, and $w^{0,W}$ are chosen such that
\begin{equation} \label{eq:zero_flux}
\int_{\mathbb{R}^3} \xi_n f(\xi_n, \xi_{t_1}, \xi_{t_2}) \,\mathrm{d}\xi_n \,\mathrm{d}\xi_{t_1} \,\mathrm{d}\xi_{t_2} = 0,
\end{equation}
meaning that the normal mass flow is zero on the solid wall.

To derive boundary conditions for moment equations, it is more convenient to rewrite the \eqref{eq:Maxwell bc} in the following equivalent form:
\begin{equation} \label{eq:odd_even}
f_{\mathrm{odd}}(\xi_n, \xi_{t_1}, \xi_{t_2}) = \frac{\chi}{2-\chi} [f_W(\xi_n, \xi_{t_1}, \xi_{t_2}) - f_{\mathrm{even}}(\xi_n, \xi_{t_1}, \xi_{t_2})], \qquad \xi_n < 0,
\end{equation}
where $f_{\mathrm{odd}}$ and $f_{\mathrm{even}}$ refer to the odd and even parts of $f$:
\begin{align*}
f_{\mathrm{odd}}(\xi_n, \xi_{t_1}, \xi_{t_2}) &= \frac{f(\xi_n, \xi_{t_1}, \xi_{t_2}) - f(-\xi_n, \xi_{t_1}, \xi_{t_2})}{2}, \\
f_{\mathrm{even}}(\xi_n, \xi_{t_1}, \xi_{t_2}) &= \frac{f(\xi_n, \xi_{t_1}, \xi_{t_2}) + f(-\xi_n, \xi_{t_1}, \xi_{t_2})}{2}.
\end{align*}
Alternatively, we can define $f_{\mathrm{odd}}$ and $f_{\mathrm{even}}$ by introducing the odd and even function spaces:
\begin{align*}
L_{\mathrm{odd}}^2(\mathbb{R}^3, [f_M(\bxi)]^{-1} \,\mathrm{d}\bxi) &= \left\{f \in L^2(\mathbb{R}^3, [f_M(\bxi)]^{-1} \,\mathrm{d}\bxi) \mid f(-\xi_n, \xi_{t_1}, \xi_{t_2}) = -f(\xi_n, \xi_{t_1}, \xi_{t_2}) \right\}, \\
L_{\mathrm{even}}^2(\mathbb{R}^3, [f_M(\bxi)]^{-1} \,\mathrm{d}\bxi) &= \left\{f \in L^2(\mathbb{R}^3, [f_M(\bxi)]^{-1} \,\mathrm{d}\bxi) \mid f(-\xi_n, \xi_{t_1}, \xi_{t_2}) = f(\xi_n, \xi_{t_1}, \xi_{t_2}) \right\},
\end{align*}
and let
\begin{displaymath}
f_{\mathrm{odd}} = \mathcal{P}_{\mathrm{odd}} f, \qquad 
f_{\mathrm{even}} = \mathcal{P}_{\mathrm{even}} f.
\end{displaymath}
Here $\mathcal{P}_{\mathrm{odd}}$ and $\mathcal{P}_{\mathrm{even}}$ are projection operators onto the function spaces $L_{\mathrm{odd}}^2(\mathbb{R}^3, [f_M(\bxi)]^{-1} \,\mathrm{d}\bxi)$ and $L_{\mathrm{even}}^2(\mathbb{R}^3, [f_M(\bxi)]^{-1} \,\mathrm{d}\bxi)$, respectively. Thus, a more compact way to write down Maxwell's boundary condition \eqref{eq:odd_even} is 
\begin{displaymath}
\mathcal{C} \mathcal{P}_{\mathrm{odd}} f = \frac{\chi}{2-\chi} \mathcal{C}(f_W - \mathcal{P}_{\mathrm{even}} f),
\end{displaymath}
where $\mathcal{C}$ is half-space operator defined by
\begin{equation} \label{eq:bc op}
\mathcal{C}g(\xi_n, \xi_{t_1}, \xi_{t_2}) = \begin{cases}
  g(\xi_n, \xi_{t_1}, \xi_{t_2}), & \text{if } \xi_n < 0, \\
  0, & \text{if } \xi_n > 0.
\end{cases}
\end{equation}
Noting that $\mathcal{P}_{\mathrm{odd}} = 2\mathcal{P}_{\mathrm{odd}} \mathcal{C}\mathcal{P}_{\mathrm{odd}}$, we can apply $\mathcal{P}_{\mathrm{odd}}$ to both sides of \eqref{eq:bc op} to get
\begin{equation} \label{eq:bc}
\mathcal{P}_{\mathrm{odd}} f = \frac{2\chi}{2-\chi} \mathcal{P}_{\mathrm{odd}} \mathcal{C}(f_W - \mathcal{P}_{\mathrm{even}} f).
\end{equation}
To determine $w^{0,W}$ in $f_W$, we need to use \eqref{eq:zero_flux}, which can also be formulated as
\begin{equation} \label{eq:zero_vel}
\langle \psi_n^0, f \rangle = 0.
\end{equation}
The complete Maxwell's boundary condition includes both \eqref{eq:bc} and \eqref{eq:zero_vel}.

\subsection{A first attempt to formulate boundary conditions for R13 equations}
Following the formulation of Maxwell's boundary conditions \eqref{eq:bc}, we will also split the function space for R13 equations into an odd part and an even part:
\begin{displaymath}
\overline{\mathbb{V}} = \overline{\mathbb{V}}_{\mathrm{odd}}\oplus\overline{\mathbb{V}}_{\mathrm{even}},
\end{displaymath}
where
\begin{displaymath}
\overline{\mathbb{V}}_{\mathrm{odd}} = \overline{\mathbb{V}} \cap L_{\mathrm{odd}}^2(\mathbb{R}^3, [f_M(\bxi)]^{-1} \,\mathrm{d}\bxi), \qquad
\overline{\mathbb{V}}_{\mathrm{even}} = \overline{\mathbb{V}} \cap L_{\mathrm{even}}^2(\mathbb{R}^3, [f_M(\bxi)]^{-1} \,\mathrm{d}\bxi).
\end{displaymath}
Grad's work \cite{Grad1949} states that the boundary conditions of moment equations should be formulated by taking only odd moments of the kinetic boundary conditions. Therefore, we expect that the boundary conditions for R13 equations are written as a map from $\overline{\mathbb{V}}_{\mathrm{even}}$ to $\overline{\mathbb{V}}_{\mathrm{odd}}$, which is analogous to \eqref{eq:bc}.

Furthermore, to get Onsager boundary conditions satisfying the $L^2$ stability, we write the operator $\bar{\mathcal{A}}_n = n_j \bar{\mathcal{A}}_j$ in the following form:
\begin{equation} \label{eq:oe_decomp}
\bar{\mathcal{A}}_n = \bar{\mathcal{A}}_{\mathrm{eo}} \mathcal{P}_{\mathrm{odd}}|_{\overline{\mathbb{V}}} + \bar{\mathcal{A}}_{\mathrm{oe}}\mathcal{P}_{\mathrm{even}}|_{\overline{\mathbb{V}}}.
\end{equation}
Here $\mathcal{P}_{\mathrm{odd}}|_{\overline{\mathbb{V}}}$ and $\mathcal{P}_{\mathrm{even}}|_{\overline{\mathbb{V}}}$ are the restrictions of $\mathcal{P}_{\mathrm{odd}}$ and $\mathcal{P}_{\mathrm{even}}$ on $\overline{\mathbb{V}}$, respectively.
Such a form of $\bar{\mathcal{A}}_n$ can be observed from the equations \eqref{eq1}--\eqref{eq9}, and it corresponds to the equation \eqref{eq:PAP} when written in the matrix form. The Onsager boundary conditions should hold the form
\begin{equation} \label{eq:Onsager}
\mathcal{P}_{\mathrm{odd}}|_{\overline{\mathbb{V}}} \, \bar{f} = \mathcal{S} \bar{\mathcal{A}}_{\mathrm{oe}}(f_W - \mathcal{P}_{\mathrm{even}}|_{\overline{\mathbb{V}}} \,\bar{f}),
\end{equation}
where $\mathcal{S}$ is a self-adjoint and negative semidefinite operator on $\overline{\mathbb{V}}_{\mathrm{odd}}$. Note that \eqref{eq:Onsager} is the operator form of the boundary conditions \eqref{eq:stableBC}. The operator $\mathcal{S}$ can be figured out by comparing \eqref{eq:Onsager} with \eqref{eq:bc}: since $\bar{\mathcal{A}}_{\mathrm{oe}}$ approximates the operation that multiplies an even function by $\xi_n$, the operator $\mathcal{S}$ should approximate the operator
\begin{displaymath}
\frac{2\chi}{2-\chi} \mathcal{P}_{\mathrm{odd}} \mathcal{C} \xi_n^{-1},
\end{displaymath}
so that \eqref{eq:Onsager} can be regarded as a discretization of \eqref{eq:bc}. Thus, a natural choice is
\begin{equation} \label{eq:S}
\mathcal{S} = \frac{2\chi}{2-\chi} \mathcal{P}_{\overline{\mathbb{V}}}\mathcal{P}_{\mathrm{odd}} \mathcal{C} \xi_n^{-1}.
\end{equation}
Note that $\mathcal{P}_{\overline{\mathbb{V}}}\mathcal{P}_{\mathrm{odd}}$ is the projection operator onto $\overline{\mathbb{V}}_{\mathrm{odd}}$, and therefore $\mathcal{S}$ is an operator on $\overline{\mathbb{V}}_{\mathrm{odd}}$. The proposition below shows some desired properties of $\mathcal{S}$:

\begin{proposition}
The operator $\mathcal{S}$ is self-adjoint and negative semidefinite.
\end{proposition}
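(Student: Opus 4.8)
The plan is to test $\mathcal{S}$ against the inner product and reduce everything to a single half-space integral whose sign and symmetry can be read off directly. First I would observe that, because $\overline{\mathbb{V}}$ is spanned by basis functions each of which is purely odd or purely even in $\xi_n$, the space $\overline{\mathbb{V}}$ is invariant under $\mathcal{P}_{\mathrm{odd}}$; hence $\mathcal{P}_{\overline{\mathbb{V}}}$ and $\mathcal{P}_{\mathrm{odd}}$ commute and their product $\Pi := \mathcal{P}_{\overline{\mathbb{V}}}\mathcal{P}_{\mathrm{odd}} = \mathcal{P}_{\mathrm{odd}}\mathcal{P}_{\overline{\mathbb{V}}}$ is the orthogonal (self-adjoint) projection onto $\overline{\mathbb{V}}_{\mathrm{odd}}$. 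Then for any $g,h \in \overline{\mathbb{V}}_{\mathrm{odd}}$ I would move $\Pi$ across the inner product onto $h$, which it fixes, to get
\begin{displaymath}
\langle \mathcal{S} g, h \rangle
 = \frac{2\chi}{2-\chi} \langle \Pi\, \mathcal{C}\xi_n^{-1} g, h \rangle
 = \frac{2\chi}{2-\chi} \langle \mathcal{C}\xi_n^{-1} g, h \rangle
 = \frac{2\chi}{2-\chi} \int_{\xi_n < 0} \frac{\xi_n^{-1} g h}{f_M} \,\dd\bxi .
\end{displaymath}

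Before exploiting this formula I would justify that it is well defined, which is the one genuinely delicate point. The operator $\xi_n^{-1}$ is singular at $\xi_n = 0$, but it is applied only to $g \in \overline{\mathbb{V}}_{\mathrm{odd}}$, i.e. to a linear combination of basis functions of the form (polynomial in $\bxi$)$\,\times f_M$ that are odd in $\xi_n$. Such a polynomial contains only odd powers of $\xi_n$ and is therefore divisible by $\xi_n$, so $\xi_n^{-1} g$ is again a polynomial times $f_M$, with no singularity; consequently $\xi_n^{-1} g \in L^2(\mathbb{R}^3, [f_M]^{-1}\,\dd\bxi)$ and the half-space integral converges. I expect this parity-based removal of the singularity, together with the identification of $\Pi$ as an orthogonal projection, to be the main obstacle of the argument, since the rest is sign-chasing.

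With the displayed formula in hand, self-adjointness is immediate: the integrand $\xi_n^{-1} g h / f_M$ is manifestly symmetric under interchanging $g$ and $h$, so $\langle \mathcal{S} g, h\rangle = \langle g, \mathcal{S} h\rangle$ for all $g,h \in \overline{\mathbb{V}}_{\mathrm{odd}}$. For negative semidefiniteness I would set $h = g$, giving
\begin{displaymath}
\langle \mathcal{S} g, g \rangle = \frac{2\chi}{2-\chi} \int_{\xi_n < 0} \frac{\xi_n^{-1} g^2}{f_M} \,\dd\bxi .
\end{displaymath}
On the domain of integration $\xi_n < 0$ we have $\xi_n^{-1} < 0$, while $g^2 \geqslant 0$ and $f_M > 0$, so the integrand is pointwise $\leqslant 0$; and the prefactor satisfies $\frac{2\chi}{2-\chi} \geqslant 0$ because $\chi \in [0,1]$ forces $2\chi \geqslant 0$ and $2-\chi \geqslant 1 > 0$. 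Hence $\langle \mathcal{S} g, g\rangle \leqslant 0$ for every $g \in \overline{\mathbb{V}}_{\mathrm{odd}}$, completing the proof.
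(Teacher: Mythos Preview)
Your argument is correct and is essentially the paper's own proof: both remove the projection $\mathcal{P}_{\overline{\mathbb{V}}}\mathcal{P}_{\mathrm{odd}}$ by self-adjointness (using that it fixes elements of $\overline{\mathbb{V}}_{\mathrm{odd}}$), reduce $\langle \mathcal{S}g,h\rangle$ to the half-space integral $\int_{\xi_n<0}\xi_n^{-1}gh/f_M\,\dd\bxi$, and read off symmetry and sign directly. Your extra care about the integrability of $\xi_n^{-1}g$ and the sign of $2\chi/(2-\chi)$ goes beyond what the paper writes; one small caveat is that the $\phi^{(d)}_{i_1\cdots i_l}$ are infinite Laguerre sums rather than polynomials times $f_M$, but your divisibility point survives because each such basis function still carries the common tensorial factor $\xi_{\langle i_1}\cdots\xi_{i_l\rangle}$, which for an odd component is a polynomial divisible by $\xi_n$.
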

\begin{proof}
For any $\bar{f}_{\mathrm{o}},\bar{g}_{\mathrm{o}} \in \overline{\mathbb{V}}_{\mathrm{odd}}$, we have
\begin{displaymath}
\begin{split}
\langle \bar{f}_{\mathrm{o}}, \mathcal{S}\bar{g_{\mathrm{o}}} \rangle &= \frac{2\chi}{2-\chi} \langle\bar{f}_{\mathrm{o}}, \mathcal{P}_{\overline{\mathbb{V}}}\mathcal{P}_{\mathrm{odd}} \mathcal{C} \xi_n^{-1} \bar{g}_{\mathrm{o}}\rangle = \langle\mathcal{P}_{\overline{\mathbb{V}}}\mathcal{P}_{\mathrm{odd}}\bar{f}_{\mathrm{o}}, \mathcal{C} \xi_n^{-1} \bar{g}_{\mathrm{o}}\rangle = \langle \bar{f}_{\mathrm{o}}, \mathcal{C} \xi_n^{-1} \bar{g}_{\mathrm{o}}\rangle \\
&= \int_{-\infty}^0 \int_{-\infty}^{+\infty} \int_{-\infty}^{+\infty} \bar{f}_{\mathrm{o}}(\xi_n, \xi_{t_1}, \xi_{t_2}) \xi_n^{-1} \bar{g}_{\mathrm{o}}(\xi_n, \xi_{t_1}, \xi_{t_2}) \,\mathrm{d} \xi_{t_2} \,\mathrm{d} \xi_{t_1} \,\mathrm{d} \xi_n \\
&= \langle \mathcal{C} \xi_n^{-1} \bar{f}_{\mathrm{o}}, \bar{g}_{\mathrm{o}}\rangle = \langle \mathcal{C} \xi_n^{-1} \bar{f}_{\mathrm{o}}, \mathcal{P}_{\overline{\mathbb{V}}}\mathcal{P}_{\mathrm{odd}}\bar{g}_{\mathrm{o}}\rangle = \langle \mathcal{P}_{\overline{\mathbb{V}}}\mathcal{P}_{\mathrm{odd}} \mathcal{C} \xi_n^{-1} \bar{f}_{\mathrm{o}}, \bar{g}_{\mathrm{o}}\rangle = \langle \mathcal{S} \bar{f}_{\mathrm{o}}, \bar{g}_{\mathrm{o}} \rangle, \\
\langle \bar{f}_{\mathrm{o}}, \mathcal{S}\bar{f}_{\mathrm{o}} \rangle &= \int_{-\infty}^0 \int_{-\infty}^{+\infty} \int_{-\infty}^{+\infty} \xi_n^{-1} [\bar{f}_{\mathrm{o}}(\xi_n, \xi_{t_1}, \xi_{t_2})]^2 \,\mathrm{d} \xi_{t_2} \,\mathrm{d} \xi_{t_1} \,\mathrm{d} \xi_n \leq 0,
\end{split}
\end{displaymath}
which shows both properties of $\mathcal{S}$.
\end{proof}

The equation \eqref{eq:Onsager} with $\mathcal{S}$ defined by \eqref{eq:S} is still incomplete since the condition for mass conservation \eqref{eq:zero_vel} has not been considered to determine $w^{0,W}$ in the definition of $f_W$. Therefore, we further require that $\langle \psi_n^0, \bar{f} \rangle = 0$, or equivalently,
\begin{equation} \label{eq:Puf}
\mathcal{P}_u \bar{f} = 0,
\end{equation}
where $\mathcal{P}_u$ is the projection operator from $\overline{\mathbb{V}}$ onto the following subspace:
\begin{displaymath}
\overline{\mathbb{V}}_u = \operatorname{span} \{ \psi_n^0 \}.
\end{displaymath}
The proposition below shows how we can combine \eqref{eq:Onsager} and \eqref{eq:Puf} into one equation:
\begin{proposition}\label{prop}
Let $\mathcal{S}_{uu}$ be a map from $\overline{\mathbb{V}}_u$ to $\overline{\mathbb{V}}_u$ defined by $\mathcal{S}_{uu} = \mathcal{P}_u \mathcal{S}|_{\overline{\mathbb{V}}_u}$. Then for any $\bar{f}$ satisfying
\begin{equation} \label{eq:finalBC}
\mathcal{P}_{\mathrm{odd}}|_{\overline{\mathbb{V}}} \, \bar{f} = \bar{\mathcal{S}} \bar{\mathcal{A}}_{\mathrm{oe}}(f_W - \mathcal{P}_{\mathrm{even}}|_{\overline{\mathbb{V}}} \,\bar{f})
\end{equation}
with
\begin{displaymath}
\bar{\mathcal{S}} = (\mathcal{I} - \mathcal{P}_u) (\mathcal{I} - \mathcal{S} \mathcal{S}_{uu}^{-1} \mathcal{P}_u) \mathcal{S},
\end{displaymath}
both \eqref{eq:Onsager} and \eqref{eq:Puf} hold for an appropriately chosen $\rho_W$. Meanwhile, the operator $\bar{\mathcal{S}}$ is a self-adjoint and positive semidefinite operator on $\overline{\mathbb{V}}_{\mathrm{odd}}$, and it satisfies
\begin{equation} \label{eq:SA}
\bar{\mathcal{S}} \bar{\mathcal{A}}_{\mathrm{oe}} = \bar{\mathcal{S}} \bar{\mathcal{A}}_{\mathrm{oe}} (\mathcal{I} - \mathcal{P}_{\rho}),
\end{equation}
where $\mathcal{P}_{\rho}: \overline{\mathbb{V}}_{\mathrm{even}} \rightarrow \overline{\mathbb{V}}_{\mathrm{even}}$ is a projection operator defined by $\mathcal{P}_{\rho} \bar{f}_{\mathrm{e}} = \langle \bar{f}_{\mathrm{e}}, \psi^0\rangle \psi^0$.
\end{proposition}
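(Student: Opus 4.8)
The plan is to carry out every computation in the orthogonal splitting $\overline{\mathbb{V}}_{\mathrm{odd}} = \overline{\mathbb{V}}_u \oplus \overline{\mathbb{V}}_u^{\perp}$, where $\overline{\mathbb{V}}_u^{\perp}$ is the orthogonal complement of $\overline{\mathbb{V}}_u$ inside $\overline{\mathbb{V}}_{\mathrm{odd}}$, and to express $\mathcal{S}$ and $\mathcal{P}_u$ as $2\times 2$ blocks with respect to it. Writing
\[
\mathcal{S} = \begin{bmatrix} \mathcal{S}_{uu} & \mathcal{S}_{ur} \\ \mathcal{S}_{ru} & \mathcal{S}_{rr} \end{bmatrix}, \qquad \mathcal{S}_{ru} = \mathcal{S}_{ur}^{\dagger},
\]
with $\mathcal{S}_{uu}, \mathcal{S}_{rr}$ self-adjoint (by self-adjointness of $\mathcal{S}$) and $\mathcal{P}_u = \operatorname{diag}(\mathcal{I}, 0)$, I would substitute these into the definition of $\bar{\mathcal{S}}$ and perform the block multiplication. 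The outcome is
\[
\bar{\mathcal{S}} = \begin{bmatrix} 0 & 0 \\ 0 & \mathcal{S}_{rr} - \mathcal{S}_{ru}\mathcal{S}_{uu}^{-1}\mathcal{S}_{ur} \end{bmatrix},
\]
so $\bar{\mathcal{S}}$ is exactly the Schur complement of $\mathcal{S}_{uu}$ in $\mathcal{S}$, extended by zero on $\overline{\mathbb{V}}_u$. The invertibility of $\mathcal{S}_{uu}$ used throughout is guaranteed by the previous proposition: since $\dim\overline{\mathbb{V}}_u = 1$ and $\langle \psi_n^0, \mathcal{S}\psi_n^0\rangle$ is a strictly nonzero multiple of $\tfrac{2\chi}{2-\chi}\int_{\xi_n<0}\xi_n^{-1}[\psi_n^0]^2 < 0$ (for $\chi>0$), the operator $\mathcal{S}_{uu}$ is a nonzero scalar times the identity. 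From the block form I would also record the two identities $\mathcal{P}_u \bar{\mathcal{S}} = 0$ and $\bar{\mathcal{S}}\mathcal{P}_u = 0$, which drive the remaining steps.

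The algebraic claims follow from this structure. Self-adjointness of $\bar{\mathcal{S}}$ holds because the off-diagonal blocks vanish and each diagonal block is self-adjoint; equivalently, the Schur complement of a self-adjoint operator is self-adjoint. The semidefiniteness is obtained from the variational characterization of the Schur complement: for $x_r \in \overline{\mathbb{V}}_u^{\perp}$,
\[
\langle x_r, (\mathcal{S}_{rr} - \mathcal{S}_{ru}\mathcal{S}_{uu}^{-1}\mathcal{S}_{ur}) x_r\rangle = \langle (x_u^\ast, x_r), \mathcal{S}(x_u^\ast, x_r)\rangle, \qquad x_u^\ast = -\mathcal{S}_{uu}^{-1}\mathcal{S}_{ur}x_r,
\]
which is the extremum of $\langle(x_u,x_r),\mathcal{S}(x_u,x_r)\rangle$ over $x_u$; hence $\bar{\mathcal{S}}$ inherits the semidefiniteness of $\mathcal{S}$ established earlier.

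For \eqref{eq:SA} I would prove the single fact $\bar{\mathcal{A}}_{\mathrm{oe}}\psi^0 \in \overline{\mathbb{V}}_u$. Since $\psi^0$ is even, \eqref{eq:oe_decomp} gives $\bar{\mathcal{A}}_{\mathrm{oe}}\psi^0 = \bar{\mathcal{A}}_n \psi^0$; since $\psi^0 \in \mathbb{V}^{(0)}$, the correction term in \eqref{eq:Aj} carrying $\mathcal{P}^{(2)}$ annihilates it, so $\bar{\mathcal{A}}_n\psi^0 = \mathcal{P}_{\overline{\mathbb{V}}}\xi_n\psi^0$. As $\xi_n\psi^0$ is a nonzero multiple of $\psi_n^0$, we obtain $\bar{\mathcal{A}}_{\mathrm{oe}}\psi^0 = c\,\psi_n^0$ with $c\neq 0$. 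Then $\bar{\mathcal{S}}\mathcal{P}_u = 0$ gives $\bar{\mathcal{S}}\bar{\mathcal{A}}_{\mathrm{oe}}\psi^0 = 0$, and because $\mathcal{P}_{\rho}$ projects onto $\operatorname{span}\{\psi^0\}$ this is precisely $\bar{\mathcal{S}}\bar{\mathcal{A}}_{\mathrm{oe}}\mathcal{P}_{\rho} = 0$, i.e.\ \eqref{eq:SA}.

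Finally, to establish the equivalence, set $g = \bar{\mathcal{A}}_{\mathrm{oe}}(f_W - \mathcal{P}_{\mathrm{even}}|_{\overline{\mathbb{V}}}\bar{f})$. Applying $\mathcal{P}_u$ to \eqref{eq:finalBC} and using $\mathcal{P}_u = \mathcal{P}_u\mathcal{P}_{\mathrm{odd}}|_{\overline{\mathbb{V}}}$ (as $\psi_n^0$ is odd) together with $\mathcal{P}_u\bar{\mathcal{S}} = 0$ yields $\mathcal{P}_u\bar{f} = 0$, so \eqref{eq:Puf} holds for every $\rho_W$. For \eqref{eq:Onsager}, a direct expansion of $\bar{\mathcal{S}}g$ from its definition gives
\[
\mathcal{S}g - \bar{\mathcal{S}}g = \mathcal{P}_u\mathcal{S}g + (\mathcal{I}-\mathcal{P}_u)\mathcal{S}\mathcal{S}_{uu}^{-1}\mathcal{P}_u\mathcal{S}g,
\]
so $\mathcal{S}g = \bar{\mathcal{S}}g$ whenever $\mathcal{P}_u\mathcal{S}g = 0$. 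Writing $f_W = \rho_W\psi^0 + \tilde{f}_W$ with $\tilde{f}_W$ independent of $\rho_W$ and using $\bar{\mathcal{A}}_{\mathrm{oe}}\psi^0 = c\,\psi_n^0 \in \overline{\mathbb{V}}_u$, the vector $g$ depends on $\rho_W$ only through its $\overline{\mathbb{V}}_u$-component, which sweeps out all of $\overline{\mathbb{V}}_u$ as $\rho_W$ varies; since $\mathcal{S}_{uu}$ is invertible, the scalar equation $\mathcal{P}_u\mathcal{S}g = 0$ has a unique solution $\rho_W$. For this $\rho_W$ we get $\mathcal{S}g = \bar{\mathcal{S}}g$, and \eqref{eq:finalBC} turns into \eqref{eq:Onsager}. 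I expect the main obstacle to be exactly this last step: tracking how the free parameter $\rho_W$ enters $g$ and verifying that mass conservation singles it out uniquely. The identity $\bar{\mathcal{A}}_{\mathrm{oe}}\psi^0 \in \overline{\mathbb{V}}_u$ — which also underlies \eqref{eq:SA} — is what makes the $\overline{\mathbb{V}}_u^{\perp}$-part of $g$ independent of $\rho_W$ and lets the whole argument close.
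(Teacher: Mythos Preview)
Your proof is correct and follows essentially the same approach as the paper. The paper carries out the identical computations directly at the operator level---showing $\bar{\mathcal{S}}\mathcal{P}_u=0$, establishing $\bar{\mathcal{A}}_{\mathrm{oe}}\mathcal{P}_\rho = \mathcal{P}_u\xi_n\mathcal{P}_\rho$, choosing $\rho_W$ so that $\mathcal{P}_u\mathcal{S}g=0$, and verifying $\langle\bar g,\bar{\mathcal S}\bar g\rangle=\langle g,\mathcal S g\rangle$ with $g=(\mathcal I-\mathcal S_{uu}^{-1}\mathcal P_u\mathcal S)(\mathcal I-\mathcal P_u)\bar g$---while you package the same steps in block/Schur-complement language.
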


The rigorous proof of this proposition can be found in Section \ref{sm:prop} of supplementary material. This proposition shows that we can use \eqref{eq:finalBC} as the Onsager boundary conditions for the R13 equations, and the property \eqref{eq:SA} implies that $\rho_W$ actually does not appear in \eqref{eq:finalBC} since
\begin{displaymath}
(\mathcal{I} - \mathcal{P}_{\rho}) f_W = w^{1,W} \psi^1 + 3 w_{t_1}^{0,W} \psi_{t_1}^0 + 3 w_{t_2}^{0,W} \psi_{t_2}^0,
\end{displaymath}
which does not involve $\rho_W$. Following the abstract form \eqref{eq:finalBC}, we can write down the boundary conditions explicitly using the series expansion of the distribution functions. Since
\begin{displaymath}
\left\{ \psi_n^0, \phi_n^{(1)}, \phi_{nt_1}^{(1)}, \phi_{nt_2}^{(1)}, \phi_n^{(2)}, \phi_{nt_1}^{(2)}, \phi_{nt_2}^{(2)}, \phi_{nnn}^{(2)}, \phi_{nt_1 t_2}^{(2)},  \phi_{nt_1 t_1}^{(2)} + \frac{1}{2} \phi_{nnn}^{(2)} \right\}
\end{displaymath}
forms an orthogonal basis of $\overline{\mathbb{V}}_{\mathrm{odd}}$ (note that $\phi_{nt_2 t_2}^{(2)} = -\phi_{nnn}^{(2)} - \phi_{nt_1 t_1}^{(2)}$),
ten boundary conditions are to be prescribed for each boundary point. The calculation of the boundary conditions is tedious but straightforward. The results are 
\begin{align}
      \label{eq:bc1} w_n^0 &= 0, \\
      \label{eq:bc2} u^{(1)}_{n} & = \frac{2\chi}{2-\chi} \left[ \lambda'_{11} (w^1 -w^{1,W}) + \lambda'_{12} u^{(1)}_{nn} +  \lambda'_{13} u^{(2)} +  \lambda'_{14} u^{(2)}_{nn}\right], \\ 
    \label{eq:bc3} u^{(2)}_{n} & = \frac{2\chi}{2-\chi} \left[ \lambda'_{21} (w^1 -w^{1,W}) + \lambda'_{22}u^{(1)}_{nn} +  \lambda'_{23} u^{(2)} +  \lambda'_{24} u^{(2)}_{nn}\right], \\
    \label{eq:bc4}   u^{(1)}_{t_i n} & = \frac{2\chi}{2-\chi} \left[\lambda'_{31} (w^0_{t_i} -w^{0,W}_{t_i}) + \lambda'_{32}u^{(1)}_{t_i} + \lambda'_{33} u^{(2)}_{t_i } +  \lambda'_{34}  u^{(2)}_{t_i nn} \right], \\ 
              \label{eq:bc5}  u^{(2)}_{t_in} & = \frac{2\chi}{2-\chi} \left[ \lambda'_{41} (w^0_{t_i} -w^{0,W}_{t_i}) + \lambda'_{42}u^{(1)}_{t_i} + \lambda'_{43} u^{(2)}_{t_i} + \lambda'_{44}  u^{(2)}_{t_i nn} \right], \\ 
 \label{eq:bc6}  u^{(2)}_{nnn} & = \frac{2\chi}{2-\chi} \left[\lambda'_{51} (w^1 -w^{1,W}) + \lambda'_{52}u^{(1)}_{nn} +  \lambda'_{53} u^{(2)} +  \lambda'_{54} u^{(2)}_{nn}\right], \\ 
 \label{eq:bc7} u^{(2)}_{t_i t_i n} + \frac{1}{2} u^{(2)}_{nnn}& =  \frac{2\chi}{2-\chi} \left[ \lambda'_{61}\left(u^{(1)}_{t_i t_i} + \frac{c^{(2),0}_2}{c^{(1),0}_2}u^{(2)}_{t_it_i} \right) + \lambda'_{62}\left(u^{(1)}_{nn} + \frac{c^{(2),0}_2}{c^{(1),0}_2}u^{(2)}_{nn} \right) \right], \\
 \label{eq:bc8}  u^{(2)}_{t_1 t_2 n} & = \frac{2\chi}{2-\chi} \lambda'_{71}\left(u^{(1)}_{t_1 t_2} + \frac{c^{(2),0}_2}{c^{(1),0}_2}u^{(2)}_{t_1t_2} \right).
\end{align}
The detailed derivation and the expressions of the coefficient $\lambda'_{ij}$ are given in Section \ref{app:bc}.

However, these equations give too many boundary conditions for the R13 equations. The general theory of hyperbolic conditions requires that the number of boundary conditions equal the number of negative eigenvalues of $\bar{\mathcal{A}}_n$. In our case, $\bar{\mathcal{A}}_n$ has only 9 negative eigenvalues. In other words, the operator $\bar{\mathcal{A}}_{\mathrm{oe}}$ is not surjective.
As mentioned in Section \ref{sec:intro}, this is because our selection of function spaces $\overline{\mathbb{V}}_{\mathrm{odd}}$ and $\overline{\mathbb{V}}_{\mathrm{even}}$ are purely based on the symmetry of functions, regardless of the structure of equations. Detailed explanations and the fix of these boundary conditions will be shown in the next subsection.

\subsection{Fixing the boundary conditions}
To show why $\bar{\mathcal{A}}_{\mathrm{oe}}$ is not surjective, we multiply the equation \eqref{eq4} by $c_1^{(2),1}/c_1^{(1),1}$, subtract the result from \eqref{eq7}, and set the index $i$ to be the normal direction $n$, yielding the equation
\begin{displaymath}
\begin{split}
\left(A_{57} - \frac{c_1^{(2),1}}{c_1^{(1),1}} A_{45}\right) \Bigg[ & \frac{\partial}{\partial x_n} \left(u_{nn}^{(1)} + \frac{c_2^{(2),0}}{c_2^{(1),0}} u_{nn}^{(2)} \right) + \frac{\partial}{\partial x_{t_1}} \left(u_{nt_1}^{(1)} + \frac{c_2^{(2),0}}{c_2^{(1),0}} u_{nt_1}^{(2)} \right) \\
& + \frac{\partial}{\partial x_{t_2}} \left(u_{nt_2}^{(1)} + \frac{c_2^{(2),0}}{c_2^{(1),0}} u_{nt_2}^{(2)} \right) \Bigg] 
= \frac{1}{\kn} \left(\mathscr{L}_1^{(22)} - \frac{c_1^{(2),1}}{c_1^{(1),1}} \mathscr{L}_1^{(12)}\right) u_n^{(2)}.
\end{split}
\end{displaymath}
Then we set $i,j,k$ to be $n$ in \eqref{eq9}: 
\begin{displaymath}
\begin{split}
\frac{3}{5} A_{59} \frac{\partial}{\partial x_n} \left(u_{nn}^{(1)} + \frac{c_2^{(2),0}}{c_2^{(1),0}} u_{nn}^{(2)} \right) - \frac{2}{5} A_{59} \frac{\partial}{\partial x_{t_1}} \left(u_{nt_1}^{(1)} + \frac{c_2^{(2),0}}{c_2^{(1),0}} u_{nt_1}^{(2)} \right) - \frac{2}{5} A_{59} \frac{\partial}{\partial x_{t_2}} \left(u_{nt_2}^{(1)} + \frac{c_2^{(2),0}}{c_2^{(1),0}} u_{nt_2}^{(2)} \right) \\
= \frac{1}{\kn} \mathscr{L}_3^{(22)} u_{nnn}^{(2)}.
\end{split}
\end{displaymath}
The two equations above show that if we perform the linear combination
\begin{displaymath}
\frac{3}{5} A_{59} \times \left[\eqref{eq7}\big\vert_{i=n} - \frac{c_1^{(2),1}}{c_1^{(1),1}} \times \eqref{eq4}\big\vert_{i=n} \right] - \left(A_{57} - \frac{c_1^{(2),1}}{c_1^{(1),1}} A_{45} \right) \times \eqref{eq9}\big\vert_{i,j,k=n},
\end{displaymath}
the derivatives with respect to $x_n$ will all cancel out on the left-hand side. An equivalent statement is
\begin{displaymath}
\left\langle
  \phi_0, \bar{\mathcal{A}}_{\mathrm{oe}} \bar{f}
\right\rangle = 0, \qquad \forall \bar{f} \in \overline{\mathbb{V}},
\end{displaymath}
where
\begin{equation} \label{eq:phi0}
  \phi_0 = \frac{9}{5} A_{59} \left(\phi_n^{(2)} - \frac{c_1^{(2),1}}{c_1^{(1),1}} \phi_n^{(1)} \right) - \frac{35}{2} \left(A_{57} - \frac{c_1^{(2),1}}{c_1^{(1),1}} A_{45} \right) \phi_{nnn}^{(2)}.
\end{equation}
Let $\overline{\mathbb{V}}_0 = \operatorname{span} \{\phi_0\}$ and $\mathcal{P}_0$ be the associated projection operator. We then have
\begin{equation} \label{eq:P0Aoe}
\mathcal{P}_0 \bar{\mathcal{A}}_{\mathrm{oe}} = 0.
\end{equation}

Therefore, to fix the boundary conditions, the subspace $\overline{\mathbb{V}}_0$ should be removed form $\overline{\mathbb{V}}_{\mathrm{odd}}$. This requires replacing the operator $\mathcal{S}$ with $(\mathcal{I} - \mathcal{P}_0) \mathcal{S} (\mathcal{I} - \mathcal{P}_0)$. Thus, the operator $\bar{\mathcal{S}}$ should be replaced with
\begin{displaymath}
\begin{split}
& (\mathcal{I} - \mathcal{P}_u) [\mathcal{I} - (\mathcal{I} - \mathcal{P}_0)\mathcal{S}(\mathcal{I} - \mathcal{P}_0) \mathcal{S}_{uu}^{-1} \mathcal{P}_u] (\mathcal{I} - \mathcal{P}_0)\mathcal{S}(\mathcal{I} - \mathcal{P}_0) \\
={} & (\mathcal{I} - \mathcal{P}_u)(\mathcal{I} - \mathcal{P}_0) [\mathcal{I} - \mathcal{S}(\mathcal{I} - \mathcal{P}_0) \mathcal{S}_{uu}^{-1} \mathcal{P}_u(\mathcal{I} - \mathcal{P}_0)] \mathcal{S}(\mathcal{I} - \mathcal{P}_0) \\
={} & (\mathcal{I} - \mathcal{P}_0)(\mathcal{I} - \mathcal{P}_u) (\mathcal{I} - \mathcal{S} \mathcal{S}_{uu}^{-1} \mathcal{P}_u) \mathcal{S}(\mathcal{I} - \mathcal{P}_0) \hspace{50pt} [\text{Since } \mathcal{P}_0 \mathcal{P}_u = \mathcal{P}_u \mathcal{P}_0 = 0]\\
={} & (\mathcal{I} - \mathcal{P}_0) \bar{\mathcal{S}}(\mathcal{I} - \mathcal{P}_0).
\end{split}
\end{displaymath}
According to \eqref{eq:finalBC}, the final boundary conditions should be
\begin{displaymath}
(\mathcal{P}_{\mathrm{odd}}|_{\overline{\mathbb{V}}} - \mathcal{P}_0) \bar{f} = (\mathcal{I} - \mathcal{P}_0) \bar{\mathcal{S}} (\mathcal{I} - \mathcal{P}_0) \bar{\mathcal{A}}_{\mathrm{oe}}(f_W - \mathcal{P}_{\mathrm{even}}|_{\overline{\mathbb{V}}} \,\bar{f}),
\end{displaymath}
which can be further simplified to
\begin{equation} \label{eq:newBC}
(\mathcal{P}_{\mathrm{odd}}|_{\overline{\mathbb{V}}} - \mathcal{P}_0)  \bar{f} = (\mathcal{I} - \mathcal{P}_0) \bar{\mathcal{S}} \bar{\mathcal{A}}_{\mathrm{oe}}(f_W - \mathcal{P}_{\mathrm{even}}|_{\overline{\mathbb{V}}} \,\bar{f})
\end{equation}
due to \eqref{eq:P0Aoe}.

The equation above indicates that the explicit form of the new boundary conditions \eqref{eq:newBC} can be derived by linear combinations of the boundary conditions \eqref{eq:bc1}--\eqref{eq:bc8}. Using the expression of $\phi_0$ \eqref{eq:phi0}, we can find the following orthogonal basis of the orthogonal complement of $\overline{\mathbb{V}}_0$ in $\overline{\mathbb{V}}$:
\begin{equation} \label{eq:orth basis}
\left\{ \psi_n^0, \phi_{nt_1}^{(1)}, \phi_{nt_2}^{(1)}, \phi_{nt_1}^{(2)}, \phi_{nt_2}^{(2)}, \phi_{nt_1 t_2}^{(2)},  \phi_{nt_1 t_1}^{(2)} + \frac{1}{2} \phi_{nnn}^{(2)},  \phi_n^{(1)} + \frac{c^{(2),1}_1}{c^{(1),1}_1} \phi_n^{(2)},\mu_1 \phi_{nnn}^{(2)} + \mu_2 \left( \phi_n^{(2)} - \frac{c^{(2),1}_1}{c^{(1),1}_1} \phi_n^{(1)} \right)\right\},
\end{equation}
where 
% \zc{Calculation:
% \begin{displaymath}
% \mu_1 = \frac{9}{5} A_{59} \left[ 1 + \left( \frac{c_1^{(2),1}}{c_1^{(1),1}} \right)^2 \right] \langle \psi_n^0, \psi_n^0 \rangle, \qquad
% \mu_2 = \frac{35}{2} \left( A_{57} - \frac{c_1^{(2),1}}{c_1^{(1),1}} A_{45} \right) \langle \psi_{nnn}^0, \psi_{nnn}^0 \rangle,
% \end{displaymath}
% where
% \begin{align*}
% \langle \psi_n^0, \psi_n^0 \rangle &= \int_{\mathbb{R}^3}  \left( \frac{\xi_n}{\sqrt{3}} \right)^2 f_M(\bxi) \,\mathrm{d} \bxi = \frac{1}{3}, \\
% \langle \psi_{nnn}^0, \psi_{nnn}^0 \rangle &= \int_{\mathbb{R}^3} \left[ \sqrt{\frac{1}{105}} \xi_n \left( \xi_n^2 - \frac{3}{5}|\bxi|^2 \right) \right]^2 f_M(\bxi) \,\mathrm{d} \bxi = \frac{4}{175}.
% \end{align*}
% I did a scaling by multiplying by $5$.}
\begin{equation}\label{mu}
\mu_1 = 3 A_{59} \left[ 1 + \left( \frac{c_1^{(2),1}}{c_1^{(1),1}} \right)^2 \right], \qquad
\mu_2 = 2\left( A_{57} - \frac{c_1^{(2),1}}{c_1^{(1),1}} A_{45} \right).
\end{equation}
Therefore, in the new boundary conditions, the equations \eqref{eq:bc1}\eqref{eq:bc4}\eqref{eq:bc5}\eqref{eq:bc7}\eqref{eq:bc8} are preserved, and the two boundary conditions associated with the last two basis functions in \eqref{eq:orth basis} are
\begin{align*}
  u^{(1)}_n + \frac{c_1^{(2),1}}{c_1^{(1),1}} u^{(2)}_n  & = \frac{2\chi}{2-\chi} \left[ \kappa_{11} (w^1 -w^{1,W}) + \kappa_{12} u^{(1)}_{nn} +  \kappa_{13} u^{(2)} +  \kappa_{14} u^{(2)}_{nn}\right], \\ 
   \mu_1 u_{nnn}^{(2)} + \mu_2 u^{(2)}_n - \frac{c_1^{(2),1}}{c_1^{(1),1}} \mu_2 u^{(1)}_n  &= \frac{2\chi}{2-\chi} \left[\kappa_{21} (w^1 -w^{1,W}) + \kappa_{22} u^{(1)}_{nn} +  \kappa_{23} u^{(2)} +  \kappa_{24} u^{(2)}_{nn}\right],
\end{align*}
where
\begin{equation}\label{kappa}
\kappa_{1j} = \lambda'_{1j} + \frac{c_1^{(2),1}}{c_1^{(1),1}} \lambda'_{2j}, \quad
\kappa_{2j} = \mu_1 \lambda'_{5j} + \mu_2 \left( \lambda'_{2j} - \frac{c_1^{(2),1}}{c_1^{(1),1}} \lambda'_{1j} \right), \qquad j = 1,2,3,4.
\end{equation}
Our final boundary conditions presented in Section \ref{sec:obc} are obtained by rewriting these boundary conditions using the physical quantities $\rho, \theta, v_i, q_i$ and $\sigma_{ij}$, whose derivation needs \eqref{eq:wij}--\eqref{eq:u2} and \eqref{physics quantities} for conversion.

\begin{remark}
For Maxwell molecules, due to the nonexistence of the variable $u_n^{(2)}$, the boundary condition \eqref{eq:bc3} does not exist. Therefore, such a fix is unnecessary.
\end{remark}

\section{Results of one-dimensional channel flows}
 \label{sec:results 1d}
Compared with the boundary conditions of R13 equations proposed in \cite{Hu2020}, our boundary conditions have a nicer structure that agrees with the form \eqref{eq:stableBC}. However, it is unclear whether the acquirement of this structure will harm the accuracy of the model. In this section, we will reuse the examples for one-dimensional channel flows in \cite{Hu2020} to test our models.

\subsection{Problem settings and reduced moment equations}
We consider the gas flow between two infinitely large parallel plates in steady state (see Figure \ref{fig:flow}). The distance between the two plates is $L$, and both plates are perpendicular to the $x_2$-axis. The temperatures of the left and right plates are respectively given by $\theta_W^l$ and $\theta_W^r$. Both plates can move inside their own plane, and we choose the reference frame and the coordinates such that both velocities are parallel to the $x_1$-axis. Under such settings, all the moments are functions of $x_2$ only. Due to mass conservation, $v_2=0$. In addition, we have the symmetry $f(\xi_1,\xi_2,\xi_3) = f(\xi_1,\xi_2,-\xi_3)$ since the plates only move along $x_1$, and thus all the moments that are odd in $\xi_3$ vanish. In specific, we have 
\begin{displaymath}
    v_3 = q_3 = \sigma_{13} = \sigma_{23} = 0.
\end{displaymath}
Furthermore, since $\sigma_{ij}$ is trace-free, $\sigma_{33}$ is automatically obtained once $\sigma_{11}$ and $\sigma_{22}$ are known. Therefore, the 13 moments in our problem can be reduced to eight variables including
\begin{itemize}
\item Equilibrium variables including density $\rho$, temperature $\theta$, and the velocity component parallel to the plates $v_1$;
\item Components of the stress tensor including the parallel stress $\sigma_{11}$, normal stress $\sigma_{22}$ and the shear stress $\sigma_{12}$;
\item Heat fluxes including the parallel heat flux $q_1$ and the normal heat flux $q_2$.
\end{itemize}

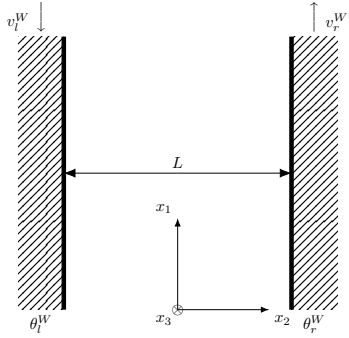
\begin{figure}[h]
\centering \resizebox{.3\hsize}{!}{ $
\begin{tikzpicture}

\draw[-{Latex[scale=1.5]}] (2.5,0) -- (5,0);
\draw[-{Latex[scale=1.5]}] (2.5,0) -- (0,0);
\node[above] at (2.5,0) {$L$};

\draw [line width=1mm] (0,-3) -- (0,3);\draw [line width=1mm] (5,-3) -- (5,3);
\fill [pattern=north east lines] (0,-3) rectangle (-1,3);
\fill [pattern=north east lines] (5,-3) rectangle (6,3);
\node[below] at (-0.5,-3) {$\theta_l^W$};\node[below] at (5.5,-3) {$\theta_r^W$};
\node[above] at (-0.5,3) {$\Big\downarrow$};\node[above] at (5.5,3) {$\Big\uparrow$};
\node[above] at (-1,3) {$v_l^W$};\node[above] at (6,3) {$v_r^W$};

\node at (2.5,-3) {$\otimes$}; 
\draw[-{Latex[scale=1]}] (2.5,-3) -- (4.5,-3);
\draw[-{Latex[scale=1]}] (2.5,-3) -- (2.5,-1);
\node[below left] at (2.5,-3) {$x_3$};
\node[below right] at (4.5,-3) {$x_2$};
\node[above left] at (2.5,-1) {$x_1$};

% \draw[-{Latex[scale=1]}] (2.5,1.5) -- (2.5,3);
% \node[above] at (2.5,3) {$G_1$};

\end{tikzpicture} $}
\caption{The two parallel solid walls of the channel are moving along $x_1$ directions, and they may have different temperatures.}
  \label{fig:flow}
\end{figure}
 
The 13-moment equations \eqref{r13eq:density}--\eqref{r13eq:stress tensor} for the one-dimensional channel flows can be immediately obtained by dropping all partial derivatives with respective to $x_1$ and $x_3$. As a result, the conservation laws become
\begin{equation}
       \frac{\dd \sigma_{12}}{\dd x_2} = 0, \qquad
       \frac{\dd \rho}{\dd x_2} +  \frac{\dd \theta}{\dd x_2} +  \frac{\dd \sigma_{22}}{\dd x_2} = 0, \qquad
       \frac{\dd q_2}{\dd x_2} = 0. \label{1dr13eq_theta}
\end{equation}
The equation of stress tensor $\sigma_{11}$, $\sigma_{22}$ and $\sigma_{12}$ are 
\begin{gather}
    \frac{1}{\kn} \mathscr{L}^{(11)}_{2} \sigma_{11} + 5 (c^{(1),0}_2)^2 \beta_4  \frac{\dd q_2}{\dd x_2} - \kn \frac{c^{(1),0}_2}{3}\beta_3 \frac{\dd^2 \sigma_{11}}{\dd x^2_2} + \kn \frac{c^{(1),0}_2}{3}(\beta_1 + \frac{2}{5}\beta_3)  \frac{\dd^2 \sigma_{22}}{\dd x^2_2} = 0,\label{1dr13eq_sigma11}\\
     \frac{1}{\kn} \mathscr{L}_{2}^{(11)} \sigma_{22} - 10 (c^{(1),0}_2)^2 \beta_4 \frac{\dd q_2}{\dd x_2}- \kn c^{(1),0}_2 (\frac{2}{3}\beta_1 + \frac{3}{5}\beta_3)  \frac{\dd^2 \sigma_{22}}{\dd x^2_2} = 0,\label{1dr13eq_sigma22}\\
       \frac{1}{\kn} \mathscr{L}^{(11)}_2 \sigma_{12} - \frac{15}{2}(c^{(1),0}_2)^2\frac{\dd v_1}{\dd x_2} - \frac{15}{2} (c^{(1),0}_2)^2 \beta_4 \frac{\dd q_1}{\dd x_2}- \kn c^{(1),0}_2 (\frac{1}{2}\beta_1 + \frac{8}{15}\beta_3)  \frac{\dd^2 \sigma_{12}}{\dd x^2_2} = 0.\label{1dr13eq_sigma12}
\end{gather}
The equations of heat flux $q_1$ and $q_2$ are
\begin{gather}
      \frac{1}{\kn} \mathscr{L}^{(11)}_1 q_1 - \frac{15}{2} (c^{(1),1}_1)^2 \beta_4  \frac{\dd \sigma_{12}}{\dd x_2} - \kn \frac{c^{(1),1}_1}{2} \beta_2 \frac{\dd^2 q_1}{\dd x^2_2} = 0,\label{1dr13eq_q1}\\
      \frac{1}{\kn}  \mathscr{L}^{(11)}_1 q_2 - \frac{15}{2} (c^{(1),1}_1)^2 \beta_4 \frac{\dd \sigma_{22}}{\dd x_2} - \kn c^{(1),1}_1 (\beta_0  + \frac{2}{3} \beta_2) \frac{\dd^2 q_2}{\dd x^2_2} - \frac{15}{2}(c^{(1),1}_1)^2 \frac{\dd \theta}{\dd x_2} = 0. \label{1dr13eq_q2}
\end{gather}

The general solution of this linear system can be found analytically. By the conservation laws, we see that $\sigma_{12}$, $\rho + \theta + \sigma_{22}$ and $q_2$ are all constants. Then we can solve $q_1$ and $v_1$ from \eqref{1dr13eq_sigma12}\eqref{1dr13eq_q1} and solve $q_2$ and $\theta$ from \eqref{1dr13eq_sigma22}\eqref{1dr13eq_q2}.
Afterwards, $\rho$ and $\sigma_{11}$ can be immediately obtained by solving \eqref{1dr13eq_sigma11} once other quantities are known. The general solution will include 11 constants to be determined. One of the constants depends on the average density between the plates: upon setting the range of $x_2$ to be $[-1/2, 1/2]$ (so that $L = 1$), we assign the average density as  
\begin{equation} \label{eq:zero_mass}
\int_{-1/2}^{1/2} \rho(x_2) \,\mathrm{d}x_2 = 0.
\end{equation}
The other 10 constants will be determined by 10 boundary conditions, of which each boundary has five given by \eqref{bc phys 2}--\eqref{bc phys 6} in the one-dimensional setting. Below we provide the boundary conditions on the right wall:
\begin{align}
    q_2 = & \  \frac{\chi_r}{2 - \chi_r} \left[ 2\lambda_{11} (\theta - \theta^W_r) + 2\lambda_{12}\sigma_{22} + \kn(2\lambda_{13} + \frac{4}{3}\lambda_{14}) \frac{\dd q_2}{\dd x_2}  \right], \label{1dbc1}\\
    \sigma_{12} = & \  \frac{\chi_r}{2 - \chi_r}\left[  2\lambda_{21}(v_{1} - v^W_{r}) + 2\lambda_{22} q_{1} + \kn(2\lambda_{23} + \frac{16}{15}\lambda_{24} )\frac{\dd \sigma_{12}}{\dd x_2} \right],  \label{1dbc2} \\
    \kn \frac{\dd q_{1}}{\dd x_2} = & \  \frac{\chi_r}{2 - \chi_r} \left[ 4\lambda_{31} (v_{1} - v^W_{r}) + 4\lambda_{32}q_{1} + \kn(4\lambda_{33} + \frac{32}{15}\lambda_{34} )\frac{\dd \sigma_{1 2 }}{\dd x_2}  \right],  \label{1dbc3} \\
     (\lambda_{45}+ \frac{3}{5})\kn \frac{\dd \sigma_{22}}{\dd x_2} = & \  \frac{\chi_r}{2 - \chi_r} \left[ 2\lambda_{41} (\theta - \theta^W_r) + 2\lambda_{42}\sigma_{22} + \kn(2\lambda_{43} + \frac{4}{3}\lambda_{44}) \frac{\dd q_2}{\dd x_2}  \right], \label{1dbc4}\\
     \kn\left(2 \frac{\dd  \sigma_{ 11}}{\dd x_{2 }  } +  \frac{\dd \sigma_{ 22}}{\dd x_{2 }  }\right) = & \   \frac{\chi_r}{2-\chi_r}\left(12\lambda_{51} \sigma_{11} + 12\lambda_{52} \sigma_{22}\right). \label{1dbc5}
\end{align}
For the boundary conditions on the left solid wall, one only needs to make the following replacement of parameters:
\begin{displaymath}
q_2 \rightarrow -q_2, \quad \sigma_{12} \rightarrow -\sigma_{12}, \quad x_2 \rightarrow -x_2, \quad \chi_r \rightarrow \chi_l, \quad \theta^W_r \rightarrow \theta^W_l, \quad v^W_r \rightarrow v^W_l.
\end{displaymath}
In the next two subsections, two special cases will be considered to verify our model.

\subsection{Results}
In this section, we illustrate the analytic solutions of R13 equations \eqref{1dr13eq_theta}--\eqref{1dr13eq_q2} with the Onsager boundary conditions \eqref{1dbc1}--\eqref{1dbc5} using the examples of one-dimensional Couette and Fourier flows in \cite{Hu2020}. The results will be compared with reference solutions by DSMC simulations which are obtained by Bird's code \cite{Bird1994}.  

Following the examples in \cite{Hu2020}, we consider the inverse-power-law model which assumes that the force $F$ between two molecules is proportional to an inverse power of the distance $r$ between them as $F = \kappa r^{-\eta}$, where $\eta$ and $\kappa$ are positive parameters. After nondimensionalization as done in \cite{Hu2020}, the parameter $\kappa$ will be integrated into the Knudsen number $\kn$, and we choose $\kn=0.05$ and $0.1$ in our tests. For the parameter $\eta$, we take three values $\eta =5,10$ and $\infty$. In particular, when $\eta =5$, the inverse-power-law model becomes Maxwell molecules and the corresponding moment equations are identical to those derived in \cite{Taheri2009b}; the choice $\eta = 10$ is often used in the simulation of the argon gas; when $\eta \rightarrow \infty$, the model reduces to the hard-sphere model \cite{Struchtrup2013}. Different choices of $\eta$ will result in different values of $a_{lmn}$ in the collision operator \eqref{eq:rot_inv}. Thus, the coefficients appearing in the moment equations such as $\mathscr{L}^{(mn)}_l$ and $\beta_{i}$ as well as $\lambda_{ij}$ in the boundary conditions all rely on the choice of $\eta$. We list the numerical values of these coefficients in the supplementary material for reference. In addition, both plates are assumed to be completely diffusive, that is, the accommodation coefficients are $\chi_l = \chi_r =1$. %The other parameter settings related to nondimensionalization, molecules, etc., are all identical to those in \cite{Hu2020}.   

The results are plotted in Figure \ref{fig:couette} and \ref{fig:fourier} for the Couette flow and the Fourier flow. The horizontal axis in each subfigure is $x_2$, and the vertical axis is the value of the variables we are interested in for the one-dimensional channel problems with Knudsen numbers $\kn =0.05$ and $0.1$. The R13 results are plotted as the blue, yellow and green solid lines for $\eta =5,10$ and $\infty$, and the corresponding reference solutions by DSMC are given by the dotted lines with the same colors. Detailed descriptions of these figures are given in the following subsections.

\subsubsection{Results of the Couette flow}
In the planar Couette flow, the two plates have the same temperature and move in opposite directions (see Figure \ref{fig:flow}). In our test, we select $\theta_l^W = \theta_r^W = 0$ and $v_l^W = -v_r^W = 0.2$. In Figure \ref{fig:couette}, we plot the results of $v_1$, $\sigma_{12}$ and $q_1$ for Couette flow. As can be seen, the solid curves by R13 solutions generally agree with the dotted curves by the DSMC reference solutions, and they match better for smaller $\kn$, which implies the validity of our models. Note that due to the lack of nonlinearity, the other five variables ($\rho$, $\theta$, $\sigma_{11}$, $\sigma_{22}$ and $q_2$) are all zero in the analytical solutions.
\begin{figure}[!htb]
% \colorbox{black}
  \centering
  {\includegraphics[width=\textwidth]{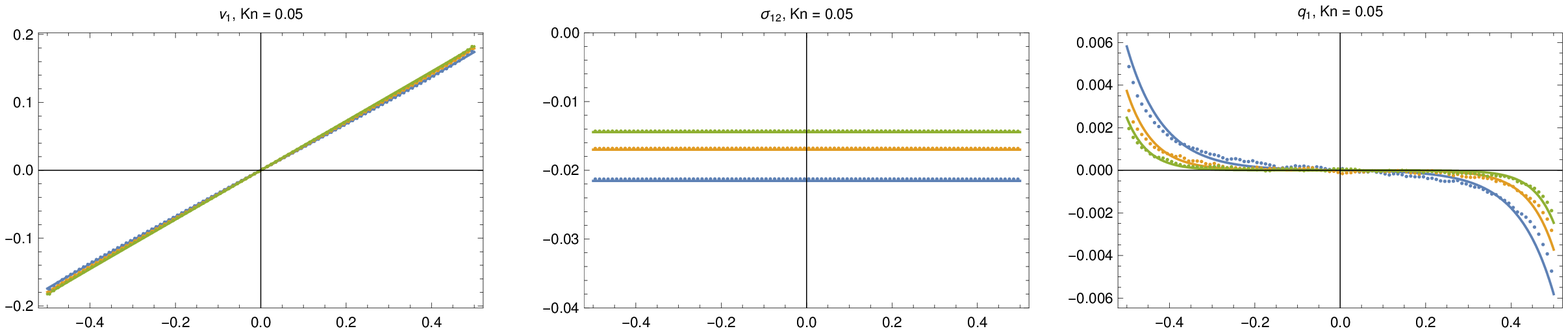}}\\[10pt]
  {\includegraphics[width=\textwidth]{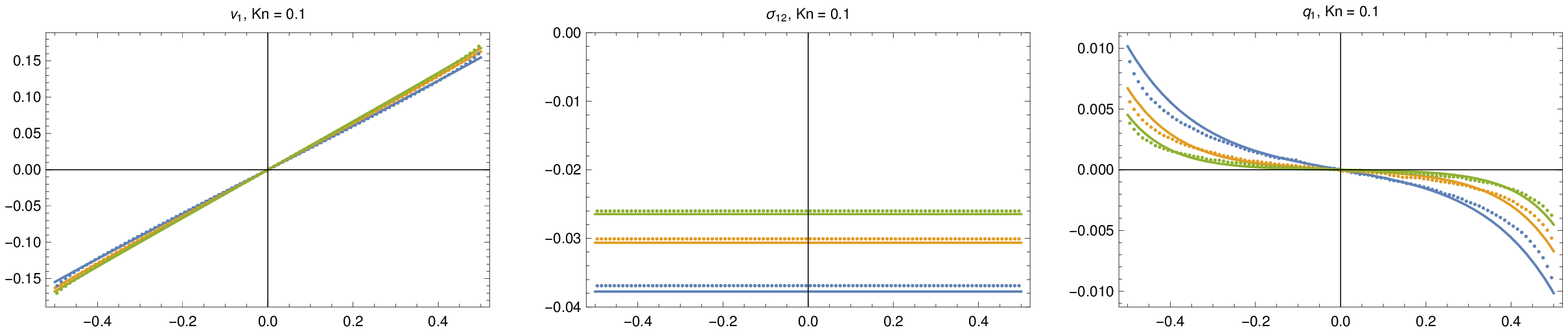}}
  \caption{Results of the Couette flow.}
   \label{fig:couette}
\end{figure}

\subsubsection{Results of the Fourier flow}
For the Fourier flow, the two plates are stationary, i.e, $v^W_r = v^W_l = 0$, and the flow is driven solely by the difference of plate temperatures. In our simulations, we set $\theta^W_l = 0$ and $\theta^W_r =0.2$. Due to the simplicity of the structure, the Fourier flow relies less on the nonlinear contributions and we may look at some other variables that we have not shown in Figure \ref{fig:couette} for Couette flow. In Figure \ref{fig:flow}, we plot the results for $\theta$ , $\sigma_{22}$ and $q_2$. Again, the R13 solutions (solid lines) provide qualitatively correct results as expected, showing that our linear R13 model works for the Fourier flow as well. In general, despite the enforcement of a symmetric structure, our models provide almost equal qualities as the equations in \cite{Hu2020}.

\begin{figure}[!htb]
% \colorbox{black}
  \centering
  {\includegraphics[width=\textwidth]{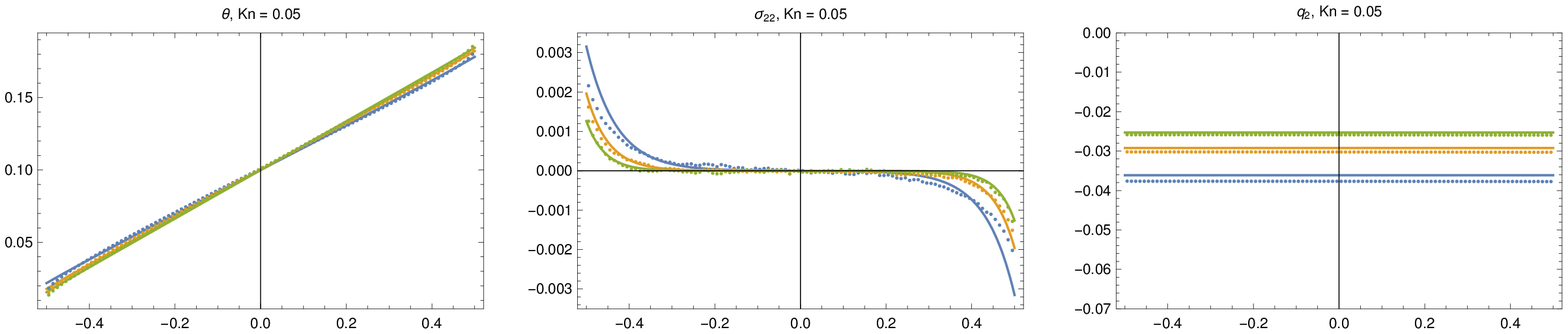}}\\[10pt]
  {\includegraphics[width=\textwidth]{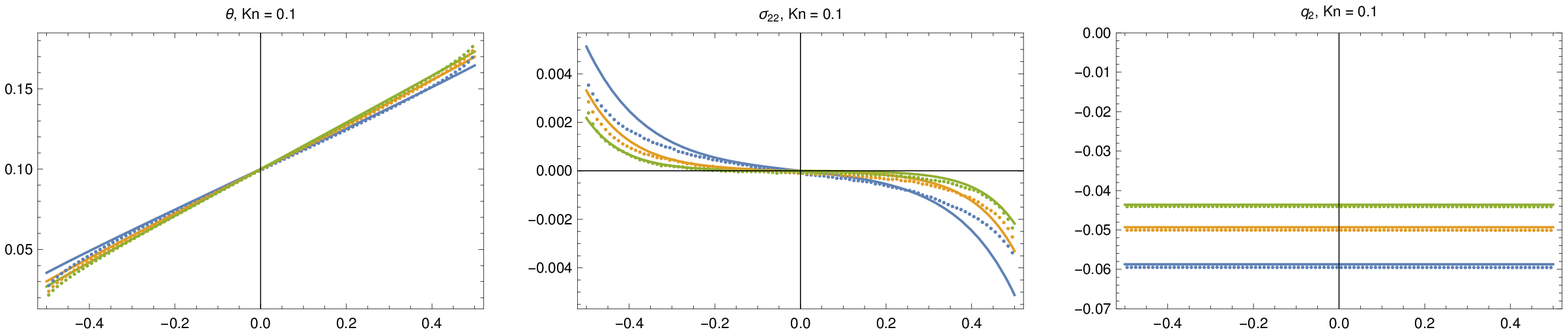}}
  \caption{Results of the Fourier flow.}
   \label{fig:fourier}
\end{figure}

\section{Conclusion}
\label{sec:conclusion}
This paper performs a re-derivation of the linear steady-state regularized 13-moment equations of general gas molecules from a novel point of view. Our derivation is based on a decomposition of the function space according to the order of accuracy, which provides a clearer picture of the entire procedure and gains a symmetric structure in the final system. This also allows a straightforward approach to obtaining Onsager boundary conditions. As an ongoing work, we are trying to apply this approach to time-dependent equations. Extensions to other kinetic equations such as the radiative transfer equation will also be considered in our future work.

\bibliographystyle{plain}
\bibliography{article}

%%%%%%%%%% Merge with supplemental materials %%%%%%%%%%
\pagebreak

\begin{center}
\textbf{\large Supplementary materials: Linear regularized 13-moment equations with Onsager boundary conditions for general gas molecules}
\end{center}

\let\theequationWithoutS\theequation %% <- store old definition
\renewcommand\theequation{SM-\theequationWithoutS}
\let\thefigureWithoutS\thefigure %% <- store old definition
\renewcommand\thefigure{SM-\thefigureWithoutS}

\renewcommand{\thesection}{SM\arabic{section}}

\section{Derivation of moment equations \eqref{eq:mnt_eq}}
\label{app:w mnt eq}
Given any tensors $A^m_{i_1\cdots i_l}$ for all $l,m = 0,1,\cdots$, the basis functions $\psi^{m}_{i_1,\cdots,i_l}$ satisfy the following relation \cite[Appendix A.2.3]{Struchtrup2005macroscopic}:
\begin{equation}\label{psi orthorgonality}
    \sum_{m,l=0}^{+\infty} A^m_{i_1\cdots i_l} \langle  f_M \psi^m_{i_1\cdots i_l},\psi^n_{j_1\cdots j_{l'}} \rangle = \frac{ l'!}{(2l'+1)!!} A^n_{\langle j_1\cdots j_{l'} \rangle}.
\end{equation}
Using this property, we can derive \eqref{eq:mnt_eq} by multiplying the Boltzmann equation \eqref{boltzmann eq} by $\psi_{i_1\cdots i_l}^n$ and then integrating with respect to $\bxi$. The calculation of the right-hand side is the as follows:
\begin{equation} \label{eq:rhs}
 \begin{split}
 \frac{1}{\kn} \langle \mathcal{L}[f], \psi_{i_1,\cdots,i_l}^n \rangle &=
\frac{1}{\kn} \sum_{m,k=0}^{+\infty} \frac{(2k+1)!!}{k!}w_{j_1,\cdots,j_k}^m  \langle \mathcal{L}[\psi^{m}_{j_1,\cdots,j_k} f_M], \psi^n_{i_1,\cdots,i_l}\rangle\\
& = \frac{1}{\kn} \sum_{m,k=0}^{+\infty} \sum_{n'=0}^{+\infty} a_{kmn'}  \frac{(2k+1)!!}{k!}w_{j_1,\cdots,j_k}^m \langle f_M  \psi^{n'}_{j_1,\cdots,j_k}, \psi^n_{i_1,\cdots,i_l}\rangle\\
& = \frac{1}{\kn} \sum_{n'=0}^{+\infty} a_{lnn'} w_{i_1,\cdots, i_l}^{n'},
 \end{split}
\end{equation}
where we have used \eqref{eq:rot_inv} to expand $\mathcal{L}[\psi^{m}_{j_1,\cdots,j_k}]$.
For the left-hand side, we need another property from \cite[Appendix A.2.3]{Struchtrup2005macroscopic}:
\begin{equation}\label{psi xi}
  \begin{split}
  &\psi^n_{i_1\cdots i_l} \xi_j = \left( \sqrt{2(n+l)+3}  \psi^n_{i_1 \cdots i_l j} - \sqrt{2n}\psi^{n-1}_{i_1 \cdots i_l j}    \right)+\\
   & \qquad \qquad  + \frac{l}{2l+1} \left( \sqrt{2(n+l)+1} \psi^{n}_{\langle i_1 \cdots i_{l-1} } \delta_{i_{l} \rangle j}   -\sqrt{2(n+1)} \psi^{n+1}_{\langle  i_1 \cdots i_{l-1}  } \delta_{i_{l} \rangle j} \right),
 % \\ & + \frac{l}{2l+1} \left(\sqrt{2(n+l)+1} \bar{L}^{(l-1/2)}_n\left( \frac{|\bxi|^2}{2}\right) -\sqrt{2(n+1)} \bar{L}^{(l-1/2)}_{n+1}\left( \frac{|\bxi|^2}{2}\right)   \right) \xi_{\langle \langle i_1} \cdots \xi_{i_{l-1}  \rangle} \delta_{i_l \rangle j},
    \end{split}
\end{equation}
using which we have
\begin{equation} \label{eq:lhs}
\begin{split}
  \left\langle \xi_j \frac{\partial f}{\partial x_j}, \psi_{i_1,\cdots,i_l}^n \right\rangle &=
  \left\langle \frac{\partial f}{\partial x_j}, \psi_{i_1,\cdots,i_l}^n \xi_j \right\rangle
  = \sum_{m,k=0}^{+\infty} \frac{(2k+1)!!}{k!} \frac{\partial w_{j_1,\cdots,j_k}^m }{\partial x_j}  \langle f_M \psi^{m}_{j_1,\cdots,j_k},  \psi^n_{i_1,\cdots,i_l} \xi_j \rangle \\
  &= \left( \sqrt{2(n+l)+3} \frac{\partial w_{i_1\cdots i_l j}^n}{\partial x_j} - \sqrt{2n} \frac{\partial w_{i_1\cdots i_l j}^{n-1}}{\partial x_j} \right) \\
  & \qquad + \frac{l}{2l+1} \left( \sqrt{2(n+l)+1} \frac{\partial w_{\langle i_1\cdots i_{l-1}}^n}{\partial x_{i_l \rangle}} - \sqrt{2(n+1)} \frac{\partial w_{\langle i_1\cdots i_{l-1}}^{n+1}}{\partial x_{i_l\rangle}} \right).
\end{split}
\end{equation} 
Equating \eqref{eq:lhs} and \eqref{eq:rhs} gives the moment equations \eqref{eq:mnt_eq}.

\section{Proof of \textbf{(O0)}-\textbf{(O4)}}
Similar as \eqref{eq:w_asymp}, $T_{i_1\cdots i_l}^n$ can also be expanded as 
\begin{equation}\label{eq:T_asymp}
   T_{i_1\cdots i_l}^n =  T_{i_1\cdots i_l}^{n|0} + \kn T_{i_1\cdots i_l}^{n|1} + \kn^2 T_{i_1\cdots i_l}^{n|2} + \kn^3 T_{i_1 \cdots i_l}^{n|3} + \cdots.
\end{equation}
\label{app:proof of lemma}
\subsection{Zeroth order}
To find the zeroth-order terms in the asymptotic expansion \eqref{eq:w_asymp}, we insert \eqref{eq:w_asymp}\eqref{eq:T_asymp} into \eqref{eq:mnt_eq} and balance the $O(\kn^{-1})$ terms on both sides. The result is
\begin{equation} \label{eq:zeroth}
\sum_{n'=0}^{+\infty} a_{lnn'} w_{i_1\cdots i_l}^{n'|0} = 0
\end{equation}
For $l \geq 2$, we can multiply both sides by $b_{ln_1n}^{(0)}$ and take the sum over $n$, which gives us
\begin{displaymath}
w_{i_1\cdots i_l}^{n_1|0} = 0, \qquad \forall n_1 \geqslant 0.
\end{displaymath}
When $l = 0$ and $l = 1$, due to the relations \eqref{eq:a_zero_value}, we can multiply both sides of \eqref{eq:zeroth} by $b_{0n_1n}^{(2)}$ or $b_{1n_1n}^{(1)}$ and then take the sum over $n$, to obtain
\begin{displaymath}
w^{n|0} = 0, \quad \forall n \geqslant 2, \qquad \text{and} \qquad w_i^{n|0} = 0, \quad \forall n \geqslant 1.
   \end{displaymath}
The values of $w^{0|0}$, $w_i^{0|0}$ and $w^{1|0}$ are allowed to be nonzero, and thus we reach to the conclusion $\textbf{(O0)}$. Following the idea of the Chapman-Enskog expansion, we assume that
\begin{equation}\label{assump zeroth order}
w^{0|k} = w^{1|k} = w_i^{0|k} = 0, \qquad \forall k \geqslant 1.
\end{equation}
Using these results, we can find the following zeroth-order terms for $T_{i_1 \cdots i_l}^n$:
\begin{gather}
 T^{n|0} = 0, \qquad n \geqslant 2, \\
\label{eq:T_i^10}
T_i^{1|0} = \frac{\sqrt{5}}{3} \frac{\partial w^1}{\partial x_i}, \quad
T_i^{n|0} = 0, \qquad n \geqslant 2, \\
\label{eq:T_ij^00}
T_{ij}^{0|0} =  \frac{2}{\sqrt{5}} \frac{\partial w_{\langle i}^0}{\partial x_{j\rangle}}, \quad
T_{ij}^{n|0} = 0, \qquad n \geqslant 1,\\
T_{i_1 \cdots i_l}^{n|0} = 0, \qquad l \geqslant 3, \quad n \geqslant 0.
\end{gather}
These results are to be used in the derivation of first-order terms.

\subsection{First order}
By inserting \eqref{eq:w_asymp}\eqref{eq:T_asymp} into \eqref{eq:mnt_eq} and balancing the $O(1)$ terms on both sides, we get
\begin{equation*}
T_{i_1\cdots i_l}^{n|0} = \sum_{n'=0}^{+\infty} a_{lnn'} w^{n'|1}_{i_1\cdots i_l}.
\end{equation*}

\paragraph{Expressions of $\boldsymbol{w_{i_1\cdots i_l}^{n|1}}$ for $\boldsymbol{l = 0}$ and $\boldsymbol{l \geqslant 3}$}
When $l = 0$ and $n \geqslant 2$, the left-hand side of the above equation is zero, and therefore
\begin{equation} \label{eq:w_0^n1}
w^{n|1} = 0, \qquad n \geqslant 2.
\end{equation}
Similarly,
\begin{equation} \label{eq:w_3^n1}
w_{i_1\cdots i_l}^{n|1} = 0, \qquad \text{if } l \geqslant 3 \text{ and } n \geqslant 0.
\end{equation}

\paragraph{Expressions of $\boldsymbol{w_i^{n|1}}$}
Using \eqref{eq:T_i^10}, we have
\begin{displaymath}
   \frac{\sqrt{5}}{3} \frac{\partial w^1}{\partial x_i}  = \sum_{n'=1}^{+\infty} a_{11n'} w_i^{n'|1} \text{~and~}
 \sum_{n'=1}^{+\infty} a_{1nn'} w_i^{n'|1} = 0, \text{~for~} n \geqslant 2.
\end{displaymath}
Hence,
\begin{equation*} 
w_i^{n|1} = \frac{\sqrt{5}}{3} b_{11n}^{(1)} \frac{\partial w^1}{\partial x_i}
\end{equation*}
which produces the expression \eqref{eq:w_i^n1}. Similar to the Chapman-Enskog expansion, here we would like to assume that
\begin{equation}\label{assump wi1k}
w_i^{1|k} = 0, \qquad \text{for all } k \geqslant 2.
\end{equation}

\paragraph{Expressions of $\boldsymbol{w_{ij}^{n|1}}$}
Following the similar argument for the case $l = 1$ upon using \eqref{eq:T_ij^00}, we have
\begin{equation} \label{eq:f_2mn^1}
w_{ij}^{n|1} = \frac{2}{\sqrt{5}} b_{20n}^{(0)} \frac{\partial w_{\langle i}^0}{\partial x_{j\rangle}},
\end{equation}
which yields the relation \eqref{eq:w_ij^n1}. Again, since $w_{ij}^0$ appears in the final equations, instead of \eqref{eq:f_2mn^1}, we are going to assume
\begin{equation}\label{assump wij0k}
w_{ij}^{0|k} = 0, \qquad \text{for all } k \geqslant 2.
\end{equation}

% The case $n = 0$ corresponds to the Navier-Stokes law for viscous gases:
% \begin{displaymath}
% \sigma_{ij} \approx \sqrt{15} \kn w_{ij}^{0|1} = 2\sqrt{3} b_{200}^{(0)} \frac{\partial w_{\langle i}^0}{\partial x_{j \rangle}} = 2b_{200}^{(0)} \frac{\partial v_{\langle i}^0}{\partial x_{j \rangle}}.
% \end{displaymath}

By \eqref{eq:w_0^n1} and \eqref{eq:w_3^n1}, we conclude that $\{w^n_i\}_{n=1}^\infty$, $\{w^n_{ij}\}_{n=0}^\infty$ are only first order moments in the expansion of distribution function and thus we arrive at the statement \textbf{(O1)}. By inserting the above results \eqref{eq:w_0^n1}\eqref{eq:w_3^n1}\eqref{eq:w_i^n1}\eqref{eq:w_ij^n1} into \eqref{eq:T}, we can obtain the expressions of $T_{i_1\cdots i_l}^{n|1}$. When $l \geqslant 4$, it is easy to see from \eqref{eq:T} that $T_{i_1\cdots i_l}^{n|1} = 0$. For $l = 0,1,2,3$, these quantities are given as follows:

\paragraph{Expressions of $\boldsymbol{T^{n|1}}$}
By the relation \eqref{eq:w_i^n1}, it holds that
\begin{equation}\label{Tn1}
T^{n|1} = \sqrt{2n+3} \frac{\partial w_j^{n|1}}{\partial x_j} - \sqrt{2n} \frac{\partial w_j^{n-1|1}}{\partial x_j} 
= \frac{\sqrt{2n+3} b_{11n}^{(1)} - \sqrt{2n} b_{11,n-1}^{(1)}}{b_{111}^{(1)}} \frac{\partial w_j^{1|1}}{\partial x_j}
\end{equation}
for $n \geqslant 1$. Note that when $n = 1$, the coefficient $b_{110}^{(1)}$ should be regarded as zero since $w_j^{0|1} = 0$ by the assumption \eqref{assump zeroth order}.

\paragraph{Expressions of $\boldsymbol{T_i^{n|1}}$}
By the relation \eqref{eq:w_ij^n1}, we have
\begin{equation} \label{eq:T_1mn^1}
T_i^{n|1} = \sqrt{2n+5} \frac{\partial w_{ij}^{n|1}}{\partial x_j} - \sqrt{2n} \frac{\partial w_{ij}^{n-1|1}}{\partial x_j} 
= \frac{\sqrt{2n+5} b_{20n}^{(0)} - \sqrt{2n} b_{20,n-1}^{(0)}}{b_{200}^{(0)}} \frac{\partial w_{ij}^{0|1}}{\partial x_j}
\end{equation}
for $n\geqslant 1$.

\paragraph{Expressions of $\boldsymbol{T_{ij}^{n|1}}$}
By the relation \eqref{eq:w_i^n1}, we have 
\begin{equation} \label{eq:T_2mn^1}
T_{ij}^{n|1} = \frac{2}{5} \left( \sqrt{2n+5} \frac{\partial w_{\langle i}^{n|1}}{\partial x_{j\rangle}} - \sqrt{2(n+1)} \frac{\partial w_{\langle i}^{n+1|1}}{\partial x_{j\rangle}} \right) 
= \frac{2}{5} \frac{\sqrt{2n+5} b_{11n}^{(1)} - \sqrt{2(n+1)} b_{11,n+1}^{(1)}}{b_{111}^{(1)}} \frac{\partial w_{\langle i}^{1|1}}{\partial x_{j \rangle}}
\end{equation}
for $n \geqslant 0$, where the coefficient $b_{110}^{(1)}$ should be regarded as zero when $n=0$ again by the assumption \eqref{assump zeroth order}.

\paragraph{Expressions of $\boldsymbol{T_{ijk}^{n|1}}$}
Since $w_{ijkl}^{n|1} = 0$, by the relation \eqref{eq:w_ij^n1} we have
\begin{equation}\label{eq:T_3mn^1}
T_{ijk}^{n|1} = \frac{3}{7} \left( \sqrt{2n+7} \frac{\partial w_{\langle ij}^{n|1}}{\partial x_{k\rangle} } - \sqrt{2(n+1)} \frac{\partial w_{\langle ij}^{n+1|1}}{\partial x_{k\rangle}} \right) 
= \frac{3}{7} \frac{\sqrt{2n+7} b_{20n}^{(0)} - \sqrt{2(n+1)} b_{20,n+1}^{(0)}}{b_{200}^{(0)}} \frac{\partial w_{\langle ij}^{0|1}}{\partial x_{k \rangle}}
\end{equation}

\subsection{Second order}
We now find the second-order moments in the expansion of distribution function based on 
\begin{equation*}
T_{i_1\cdots i_l}^{n|1} = \sum_{n'=0}^{+\infty} a_{lnn'} w^{n'|2}_{i_1\cdots i_l}.
\end{equation*}
\paragraph{Expressions of $\boldsymbol{w^{n|2}}$}
By \eqref{Tn1}, we have
\begin{equation*}
w^{n|2} = \sum_{n'=2}^{+\infty} b_{0nn'}^{(2)} T^{n'|1} 
= \gamma^{(2),n}_0 \frac{\partial w_j^{1|1}}{\partial x_j}
\end{equation*}
for $n\geqslant 2$, where $\gamma^{(2),n}_0$ is defined in \eqref{w^n}. Note that the summation begins from $n'=2$ due to the assumption \eqref{assump zeroth order}.

\paragraph{Expressions of $\boldsymbol{w_i^{n|2}}$}
By \eqref{eq:T_1mn^1}, we have
\begin{equation} \label{eq:f_1mn^2_intermediate}
w_i^{n|2} = \sum_{n'=2}^{+\infty} b_{1nn'}^{(2)} T_i^{n'|1} 
=\gamma^{(1),n}_{1} \frac{\partial w_{ij}^{0|1}}{\partial x_j}
\end{equation}
for $n\geqslant 2$, where $\gamma^{(1),n}_{1}$ is defined in \eqref{eq:w_i^n2}. Note that the summation begins from $n'=2$ due to the assumption \eqref{assump wi1k}.

\paragraph{Expressions of $\boldsymbol{w_{ij}^{n|2}}$}
By \eqref{eq:T_2mn^1}, we have 
\begin{equation} \label{eq:f_2mn^2_intermediate}
w_{ij}^{n|2} = \sum_{n'=1}^{+\infty} b_{2nn'}^{(1)} T_{ij}^{n'|1} 
=  \gamma^{(1),n}_2 \frac{\partial w_{\langle i}^{1|1}}{\partial x_{j\rangle}},
\end{equation}
for $n \geqslant 1$, where $\gamma^{(1),n}_2$ is defined in \eqref{eq:w_ij^n2}. Note that the summation begins from $n'=1$ due to the assumption \eqref{assump wij0k}.

\paragraph{Expressions of $\boldsymbol{w_{ijk}^{n|2}}$}
By \eqref{eq:T_3mn^1}, we have
\begin{equation}
w_{ijk}^{n|2} = \sum_{n'=0}^{+\infty} b_{3nn'}^{(0)} T_{ijk}^{n'|1} =
\gamma^{(2),n}_3 \frac{\partial w_{\langle ij}^{0|1}}{\partial x_{k\rangle}}
\end{equation}
for any non-negative $n$, where $\gamma^{(2),n}_3$ is defined in \eqref{wijk}.

\paragraph{Expressions of $\boldsymbol{w_{i_1\cdots i_l}^{n|2}}$ for $\boldsymbol{l \geqslant 4}$}
When $l \geqslant 4$, since $T_{i_1\cdots i_l}^{n|1} = 0$, we have
\begin{displaymath}
w_{i_1\cdots i_l}^{n|2} = 0.
\end{displaymath}

At this point, we may conclude that $\{w^n\}_{n=2}^\infty$ and $\{w^n_{ijk}\}_{n=0}^\infty$ are the all second-order moments in the expansion of distribution function, and thus we arrive at \textbf{(O2)}. Finally, the statement \textbf{(O3)} is the direct conclusion from 
\begin{displaymath}
    w_{ijkl}^{n|3} = \sum_{n'=0}^{+\infty} b_{4nn'}^{(0)} T_{ijkl}^{n'|2} \sim O(\kn^3),
\end{displaymath}
while \textbf{(O4)} comes from 
\begin{displaymath}
   w_{i_1\cdots i_l}^{n|3} = \sum_{n'=0}^{+\infty} b_{lnn'}^{(0)} T_{i_1\cdots i_l}^{n'|2} = 0 
\end{displaymath}
as $T_{i_1\cdots i_l}^{n|2} = 0$ when $l \geqslant 5$.

% \subsection{Third order}

% \paragraph{Expressions of $\boldsymbol{T^{n|2}_{ijkl}}$} Since $w^{n|2}_{ijklm}=0$, we have 
% \begin{align*}
%   T^{n|2}_{ijkl} = & \ \frac{4}{9}\left(\sqrt{2n+9} \frac{\partial w^{n|2}_{\langle ijk}}{\partial x_{l\rangle}} - \sqrt{2(n+1)} \frac{\partial w^{n+1|2}_{\langle ijk}}{\partial x_{l\rangle}} \right)\\
%   = & \ \frac{4}{9}\frac{\sqrt{2n+9}\mu_n - \sqrt{2(n+1)}\mu_{n+1}}{\mu_0}\frac{\partial w^{0|2}_{\langle ijk}}{\partial x_{l\rangle}}.
% \end{align*}

% \paragraph{Expressions of $\boldsymbol{w^{n|3}_{ijkl}}$} For any nonnegative $n$, we have 
% \begin{displaymath}
%   w_{ijkl}^{n|3} = \sum_{n'=0}^{+\infty} b_{4nn'}^{(0)} T_{ijkl}^{n'|2} =
% \frac{3}{7}\left( \sum_{n'=0}^{+\infty} \frac{b_{4nn'}^{(0)}}{\mu_0} (\sqrt{2n'+9} \mu_{n'} - \sqrt{2(n'+1)} \mu_{n'+1}) \right) \frac{\partial w_{\langle ijk}^{0|2}}{\partial x_{l\rangle}}.
% \end{displaymath}

% \paragraph{Expressions of $\boldsymbol{w_{i_1\cdots i_l}^{n|3}}$ for $\boldsymbol{l \geqslant 5}$}
% When $l \geqslant 5$, since $T_{i_1\cdots i_l}^{n|2} = 0$, we have
% \begin{displaymath}
% w_{i_1\cdots i_l}^{n|3} = 0.
% \end{displaymath}

\section{Proof of Proposition \ref{prop}}\label{sm:prop}
We first claim that $\bar{\mathcal{S}} = \bar{\mathcal{S}} (\mathcal{I} - \mathcal{P}_u)$, or equivalently, $\bar{\mathcal{S}} \mathcal{P}_u = 0$. This can be shown by direct calculation:
\begin{equation} \label{eq:SPu}
\begin{split}
\bar{\mathcal{S}} \mathcal{P}_u &= (\mathcal{I} - \mathcal{P}_u) (\mathcal{I} - \mathcal{S} \mathcal{S}_{uu}^{-1} \mathcal{P}_u) \mathcal{S} \mathcal{P}_u = (\mathcal{I} - \mathcal{P}_u) \mathcal{S} \mathcal{P}_u - (\mathcal{I} - \mathcal{P}_u) \mathcal{S} \mathcal{S}_{uu}^{-1} \mathcal{P}_u \mathcal{S} \mathcal{P}_u \\
&= (\mathcal{I} - \mathcal{P}_u) \mathcal{S} \mathcal{P}_u - (\mathcal{I} - \mathcal{P}_u) \mathcal{S} \mathcal{P}_u = 0.
\end{split}
\end{equation}
Since $\overline{\mathbb{V}}_{\rho} \subset \overline{\mathbb{V}}^{(0)}$ and $\overline{\mathbb{V}}_{\rho} \subset \overline{\mathbb{V}}_{\mathrm{even}}$, we can see from \eqref{eq:Aj} and \eqref{eq:oe_decomp} that
\begin{displaymath}
\bar{\mathcal{A}}_{\mathrm{oe}} \mathcal{P}_{\rho} = \bar{\mathcal{A}}_n \mathcal{P}_{\rho} = \mathcal{P}_{\overline{\mathbb{V}}} \xi_n \mathcal{P}_{\rho}.
\end{displaymath}
Thus, by $\xi_n \psi^0 = \psi_n^0 \in \overline{\mathbb{V}}_u$, we get $\bar{\mathcal{A}}_{\mathrm{oe}} \mathcal{P}_{\rho} = \mathcal{P}_u \xi_n \mathcal{P}_{\rho}$. Therefore, one can derive from \eqref{eq:SPu} that
\begin{displaymath}
\bar{\mathcal{S}} \bar{\mathcal{A}}_{\mathrm{oe}} \mathcal{P}_{\rho} = \bar{\mathcal{S}} \mathcal{P}_u \xi_n \mathcal{P}_{\rho} = 0,
\end{displaymath}
which proves \eqref{eq:SA}.

The proof of \eqref{eq:Puf} is straightforward. Using $\mathcal{P}_u \bar{\mathcal{S}} = \mathcal{P}_u (\mathcal{I} - \mathcal{P}_u) \cdots = 0$, we can find that
\begin{displaymath}
\mathcal{P}_u \bar{f} = \mathcal{P}_u \mathcal{P}_{\mathrm{odd}} |_{\overline{\mathbb{V}}} \bar{f} = \mathcal{P}_u \bar{\mathcal{S}} \bar{\mathcal{A}}_{\mathrm{oe}}(f_W - \mathcal{P}_{\mathrm{even}}|_{\overline{\mathbb{V}}} \,\bar{f}) = 0.
\end{displaymath}
To show \eqref{eq:Onsager}, we need to choose $\rho_W$ such that
\begin{displaymath}
\mathcal{P}_u \mathcal{S} \bar{\mathcal{A}}_{\mathrm{oe}} (f_W - \mathcal{P}_{\mathrm{even}}|_{\overline{\mathbb{V}}} \bar{f}) = 0.
\end{displaymath}
When \eqref{eq:finalBC} holds, one can use the two equalities above to obtain
\begin{displaymath}
\begin{split}
\mathcal{P}_{\mathrm{odd}}|_{\overline{\mathbb{V}}} \, \bar{f} &= \mathcal{P}_u \bar{f} + (\mathcal{I} - \mathcal{P}_u) \mathcal{P}_{\mathrm{odd}}|_{\overline{\mathbb{V}}} \, \bar{f} = 0 + (\mathcal{I} - \mathcal{P}_u) \bar{\mathcal{S}} \bar{\mathcal{A}}_{\mathrm{oe}}(f_W - \mathcal{P}_{\mathrm{even}}|_{\overline{\mathbb{V}}} \,\bar{f}) \\
&= (\mathcal{I} - \mathcal{P}_u) (\mathcal{I} - \mathcal{S} \mathcal{S}_{uu}^{-1} \mathcal{P}_u)\mathcal{S} \bar{\mathcal{A}}_{\mathrm{oe}}(f_W - \mathcal{P}_{\mathrm{even}}|_{\overline{\mathbb{V}}} \,\bar{f}) \\
&= (\mathcal{I} - \mathcal{P}_u) \mathcal{S} \bar{\mathcal{A}}_{\mathrm{oe}}(f_W - \mathcal{P}_{\mathrm{even}}|_{\overline{\mathbb{V}}} \,\bar{f}) - (\mathcal{I} - \mathcal{P}_u)\mathcal{S} \mathcal{S}_{uu}^{-1} \mathcal{P}_u \mathcal{S} \bar{\mathcal{A}}_{\mathrm{oe}}(f_W - \mathcal{P}_{\mathrm{even}}|_{\overline{\mathbb{V}}} \,\bar{f}) \\
&= (\mathcal{I} - \mathcal{P}_u) \mathcal{S} \bar{\mathcal{A}}_{\mathrm{oe}}(f_W - \mathcal{P}_{\mathrm{even}}|_{\overline{\mathbb{V}}} \,\bar{f}) = \mathcal{S} \bar{\mathcal{A}}_{\mathrm{oe}}(f_W - \mathcal{P}_{\mathrm{even}}|_{\overline{\mathbb{V}}} \,\bar{f}).
\end{split}
\end{displaymath}
This proves \eqref{eq:Onsager}.

It remains to show that $\bar{\mathcal{S}}$ is self-adjoint and positive semidefinite. Its self-adjointness can be seen by rewriting the expression of $\bar{\mathcal{S}}$ as
\begin{displaymath}
\bar{\mathcal{S}} = \bar{\mathcal{S}} (\mathcal{I} - \mathcal{P}_u) = (\mathcal{I} - \mathcal{P}_u)(\mathcal{S} - \mathcal{S} \mathcal{P}_u \mathcal{S}_{uu}^{-1} \mathcal{P}_u \mathcal{S}) (\mathcal{I} - \mathcal{P}_u).
\end{displaymath}
For any $\bar{g} \in \overline{\mathbb{V}}_{\mathrm{odd}}$, we can define $g = (\mathcal{I} - \mathcal{S}_{uu}^{-1} \mathcal{P}_u \mathcal{S})(\mathcal{I} - \mathcal{P}_u) \bar{g}$ and verify that
\begin{displaymath}
\langle \bar{g}, \bar{\mathcal{S}} \bar{g} \rangle = \langle g, \mathcal{S} g \rangle.
\end{displaymath}
Therefore, the positive semidefiniteness of $\bar{\mathcal{S}}$ follows the positive semidefiniteness of $\mathcal{S}$.

\section{Expressions of coefficients}\label{app:coefficients}
\subsection{Expressions of $c^{(d),n}_{l}$}
\label{app:cdl}
\begin{itemize}
    \item \textbf{Coefficients in the first-order moments:} The expressions of $c^{(1),n}_{1}$ and $c^{(1),n}_{2}$ are given by 
    \begin{align*}
        c^{(1),n}_{1} = & \ b_{11n}^{(1)}\bigg/\sqrt{\sum^{+\infty}_{n=1} \left(b^{(1)}_{11n}\right)^2 }, \text{~for~} n\geqslant 1\\
        c^{(1),n}_{2} = & \ b_{20n}^{(0)}\bigg/ \sqrt{\sum^{+\infty}_{n=0} \left(b^{(0)}_{20n}\right)^2 }, \text{~for~} n\geqslant 0,
    \end{align*}
    where $b^{n_0}_{lnn'}$ can be obtained from \eqref{sum ab}.
    \item \textbf{Coefficients in the first-order moments:} The expressions of $c^{(2),n}_{0}$ and $c^{(2),n}_{3}$ are given by 
        \begin{align*}
        c^{(2),n}_{0} = & \ \gamma^{(2),n}_0 \bigg/ \sqrt{\sum^{+\infty}_{n=2} \left( \gamma^{(2),n}_0\right)^2 }, \text{~for~} n\geqslant 2,\\
        c^{(2),n}_{3} = & \ \gamma^{(2),n}_3 \bigg/\sqrt{\sum^{+\infty}_{n=0} \left(\gamma^{(2),n}_3\right)^2 }, \text{~for~} n\geqslant 0
    \end{align*}
  where $\gamma^{(2),n}_0$ and $\gamma^{(2),n}_3$ are respectively given by \eqref{w^n} and \eqref{wijk}. \eqref{eq:w_i^n2} and \eqref{eq:w_ij^n2}. 

For non-Maxwell molecules, the expressions of $c^{(2),n}_{1}$ and $c^{(2),n}_{2}$ are given by 
\begin{displaymath}
 \begin{split}
 c^{(2),n}_1 =  & \ \frac{1}{C^{(2)}_1} \times \begin{cases}
  - \sum_{n'=2}^{+\infty} \bar{c}^{n'}_1 \frac{b^{(1)}_{11n'}}{b^{(1)}_{111}}, & \text{~if~}   n=1\\
    \bar{c}^n_1 , & \text{~if~}   n\geqslant 2\\
 \end{cases} , \\ 
  c^{(2),n}_{2} = & \ \frac{1}{C^{(2)}_2} \times \begin{cases}
  - \sum_{n'=1}^{+\infty} \bar{c}^{n'}_{2} \frac{b^{(0)}_{20n'}}{b^{(0)}_{200}}, & \text{~if~}   n=0\\
    \bar{c}^n_{2} , & \text{~if~}   n \geqslant 1\\
 \end{cases}
  \end{split}
\end{displaymath}
where $C^{(2)}_{1}$, $C^{(2)}_{1}$ are constants chosen such that the scaling 
\begin{equation*}
  \sum_{n=1}^{+\infty} \left|c_1^{(2),n} \right|^2 = \sum_{n=0}^{+\infty} \left|c_2^{(2),n} \right|^2 = 1
\end{equation*}
holds, and $\bar{c}^n_{1}$, $\bar{c}^n_{2}$ are respectively the solutions of the linear system 
\begin{equation*}
    \begin{pmatrix}
 \tau^2_3 & \tau^3_3 & \alpha_3 \frac{b^{(1)}_{114}}{b^{(1)}_{111}} &  \alpha_3 \frac{b^{(1)}_{115}}{b^{(1)}_{111}} & \cdots & \alpha_3 \frac{b^{(1)}_{11D}}{b^{(1)}_{111}} & \cdots  \\
 \tau^2_4 & \alpha_4 \frac{b^{(1)}_{113}}{b^{(1)}_{111}} & \tau^4_4 &  \alpha_4 \frac{b^{(1)}_{115}}{b^{(1)}_{111}} & \cdots & \alpha_4 \frac{b^{(1)}_{11D}}{b^{(1)}_{111}} & \cdots \\
 \vdots &  \vdots &  \vdots &  \vdots &  \ddots &  \vdots &  \ddots \\
  \tau^2_D & \alpha_D \frac{b^{(1)}_{113}}{b^{(1)}_{111}} &  \alpha_D \frac{b^{(1)}_{114}}{b^{(1)}_{111}} &  \cdots &\alpha_D \frac{b^{(1)}_{11,D-1}}{b^{(1)}_{111}} & \tau^D_D  & \cdots \\
   \vdots &  \vdots &  \vdots &  \vdots &  \ddots &  \vdots &  \ddots 
   \end{pmatrix}
   \begin{pmatrix}
    \bar{c}^2_1\\[3pt]
    \bar{c}^3_1\\[3pt]
    \vdots \\[3pt]
    \bar{c}^{D-1}_1\\[3pt]
      \bar{c}^{D}_1\\[3pt]
    \vdots
   \end{pmatrix} = \boldsymbol{0}
\end{equation*}
where 
\begin{displaymath}
  \tau^2_D = \frac{b^{(1)}_{112}}{b^{(1)}_{111}}\alpha_D - \frac{\gamma^{(1),D}_{1}}{\gamma^{(1),2}_{1}},\ \tau^D_D = \frac{b^{(1)}_{11D}}{b^{(1)}_{111}}\alpha_D + 1,  \  \alpha_D =  \frac{b^{(1)}_{11D}}{b^{(1)}_{111}} - \frac{\gamma^{(1),D}_{1}}{\gamma^{(1),2}_{1}} \frac{b^{(1)}_{112}}{b^{(1)}_{111}}  \text{~with~} \bar{c}^2_1 = 1
\end{displaymath}
and 
\begin{equation*}
    \begin{pmatrix}
 \tilde{\tau}^1_2 & \tilde{\tau}^2_2 & \tilde{\alpha}_2 \frac{b^{(0)}_{203}}{b^{(0)}_{200}} &  \tilde{\alpha}_2 \frac{b^{(0)}_{204}}{b^{(0)}_{200}} & \cdots & \tilde{\alpha}_2 \frac{b^{(0)}_{20D}}{b^{(0)}_{200}} & \cdots  \\
 \tilde{\tau}^1_3 & \tilde{\alpha}_3 \frac{b^{(0)}_{202}}{b^{(0)}_{200}} & \tilde{\tau}^3_3 &  \tilde{\alpha}_3 \frac{b^{(0)}_{204}}{b^{(0)}_{200}} & \cdots & \tilde{\alpha}_3 \frac{b^{(0)}_{20D}}{b^{(0)}_{200}} & \cdots \\
 \vdots &  \vdots &  \vdots &  \vdots &  \ddots &  \vdots &  \ddots \\
  \tilde{\tau}^1_D & \tilde{\alpha}_D \frac{b^{(0)}_{202}}{b^{(0)}_{200}} &  \tilde{\alpha}_D \frac{b^{(0)}_{203}}{b^{(0)}_{200}} &  \cdots &\tilde{\alpha}_D \frac{b^{(0)}_{20,D-1}}{b^{(0)}_{200}} & \tilde{\tau}^D_D  & \cdots \\
   \vdots &  \vdots &  \vdots &  \vdots &  \ddots &  \vdots &  \ddots 
   \end{pmatrix}
   \begin{pmatrix}
    \bar{c}^1_{2}\\[3pt]
    \bar{c}^2_{2}\\[3pt]
    \vdots \\[3pt]
    \bar{c}^{D-1}_{2}\\[3pt]
       \bar{c}^{D}_{2}\\[3pt]
    \vdots
   \end{pmatrix} = \boldsymbol{0}
\end{equation*}
where 
\begin{displaymath}
  \tilde{\tau}^1_D = \frac{b^{(0)}_{201}}{b^{(0)}_{200}}\tilde{\alpha}_D - \frac{\gamma^{(1),D}_{2}}{\gamma^{(1),1}_{2}},\  \tilde{\tau}^D_D = \frac{b^{(0)}_{20D}}{b^{(0)}_{200}}\tilde{\alpha}_D + 1, \  \tilde{\alpha}_D =  \frac{b^{(0)}_{20D}}{b^{(0)}_{200}} - \frac{\gamma^{(1),D}_{2}}{\gamma^{(1),1}_{2}} \frac{b^{(0)}_{201}}{b^{(0)}_{200}} \text{~with~} \bar{c}^2_1 = 1.
\end{displaymath}
For Maxwell molecules, we have 
\begin{displaymath}
  c^{(2),n}_1 = 0  \text{~for~} n\geqslant 1 \text{~and~} c^{(2),n}_2 = \begin{cases}
        1, & \text{~if~} n = 1\\
          0 , & \text{~if~} n=0 \text{~or~} n \geqslant 2
    \end{cases}.
\end{displaymath}

\end{itemize}

\subsection{Expressions of $A_{ij}$}
\label{app:matrices}
\begin{equation*}
    A_{ij} = \begin{cases}
         3  \sum_{n=1}^{+\infty}c^{(1),n}_1 \left(\sqrt{2n+5}c^{(1),n}_2 - \sqrt{2n}c^{(1),n-1}_2 \right), &\text{~if~}  (i,j)=(4,5)\\
         \sum_{n=1}^{+\infty}  c^{(1),n}_1 \left( \sqrt{2n+3}c^{(2),n}_0 - \sqrt{2(n+1)}c^{(2),n+1}_0 \right),&\text{~if~}  (i,j)=(4,6)\\
          3  \sum_{n=1}^{+\infty}c^{(1),n}_1 \left(\sqrt{2n+5}c^{(2),n}_2 - \sqrt{2n}c^{(2),n-1}_2 \right),&\text{~if~}  (i,j)=(4,8)\\
          3   \sum_{n=0}^{+\infty} c^{(1),n}_2 \left( \sqrt{2n+5}c^{(2),n}_1 - \sqrt{2(n+1)} c^{(2),n+1}_1 \right) ,&\text{~if~}  (i,j)=(5,7)\\
          \frac{15}{2}\sum_{n=0}^{+\infty} c^{(1),n}_2 \left( \sqrt{2n+7} c^{(2),n}_3 - \sqrt{2n}c^{(2),n-1}_3 \right) ,&\text{~if~}  (i,j)=(5,9)\\
          \sum_{n=2}^{+\infty} c^{(2),n}_0 \left( \sqrt{2n+3} c^{(2),n}_1 - \sqrt{2n}c^{(2),n-1}_1 \right),&\text{~if~}  (i,j)=(6,7) \\
          3  \sum_{n=1}^{+\infty}c^{(2),n}_1 \left(\sqrt{2n+5}c^{(2),n}_2 - \sqrt{2n}c^{(2),n-1}_2 \right), &\text{~if~}  (i,j)=(7,8)\\
          \frac{15}{2}  \sum_{n=0}^{+\infty} c^{(2),n}_2 \left( \sqrt{2n+7} c^{(2),n}_3 - \sqrt{2n}c^{(2),n-1}_3 \right)  , &\text{~if~}  (i,j)=(8,9)\\
    \end{cases}
\end{equation*}

\subsection{Expressions of $\lambda'_{ij}$}\label{app:bc}
The boundary conditions \eqref{eq:finalBC} can be written uniformly as 
\begin{equation} \label{bc with z}
   \begin{split}
& w_n^0 = 0, \\
& u_{i_1\cdots i_l}^{(p)} = \frac{2\chi}{2-\chi}\sum_m c^{(p),m}_l \Bigg[
     3\sum_{p'=1,2} \sum^{+\infty}_{m'=1}  c^{(p'),m'}_1  \left( \ms{A}^{m'}_{m;i_1\cdots i_l} - \frac{\ms{A}^{m'}_{0;n}I^0_{m;i_1 \cdots i_l} }{I^0_{0;n}} \right) z^{(p')}_{n} \\
&  + 15 \sum_{t=t_1,t_2} \sum_{p'=1,2} \sum^{+\infty}_{m'=0}   c^{(p'),m'}_2  \ms{B}^{m';t}_{m;i_1 \cdots i_l} z^{(p')}_{tn} + \frac{35}{2}\sum^{+\infty}_{m'=0}  c^{(2),m'}_3\left( \ms{D}^{m'}_{m;i_1\cdots i_l} - \frac{\ms{D}^{m'}_{0;n}I^0_{m;i_1 \cdots i_l} }{I^0_{0;n}} \right) z^{(2)}_{nnn}  \\
&\hspace{120pt} + \frac{105}{2}  \sum_{t',t''=t_1,t_2} \sum^{+\infty}_{m'=0}  c^{(2),m'}_3 \left(  \ms{C}^{m';t't''}_{m;i_1\cdots i_l} -\frac{\ms{C}^{m';t't''}_{0;n}I^0_{m;i_1 \cdots i_l} }{I^0_{0;n}}  \right) z^{(2)}_{t't''n}  \Bigg],
   \end{split}
\end{equation}
where $u_{i_1\cdots i_l}^{(p)}$ is taken in the set $\{u^{(1)}_{n}, u^{(2)}_{n}, u^{(1)}_{t_1 n}, u^{(1)}_{t_2 n}, u^{(2)}_{t_1 n}, u^{(2)}_{t_2 n}, u^{(2)}_{nnn}, u^{(2)}_{t_1 t_1 n}, u^{(2)}_{t_2 t_2 n}, u^{(2)}_{t_1 t_2 n}\}$, and the coefficients are given by
\begin{align*}
 I_{m;i_1 \cdots i_l}^{m';j_1 \cdots j_k} 
         & =   \int_{\bxi\cdot \bn < 0} f_M(\bxi) \bar{L}_{m'}^{(k+1/2)}\left( \frac{|\bxi|^2}{2} \right)\bar{L}_m^{(l+1/2)}\left( \frac{|\bxi|^2}{2} \right) \xi_{\langle i_{ 1}}\cdots \xi_{i_{l}\rangle}\xi_{\langle j_{ 1}}\cdots \xi_{j_{k}\rangle} \,\dd \bxi,\\
     \ms{A}^{m'}_{m;i_1 \cdots i_l} & =  \int_{\bxi\cdot \bn < 0} f_M(\bxi) \bar{L}_{m'}^{(3/2)}\left( \frac{|\bxi|^2}{2} \right)\bar{L}_m^{(l+1/2)}\left( \frac{|\bxi|^2}{2} \right) \xi_{\langle i_{ 1}}\cdots \xi_{i_{l}\rangle}  \,\dd \bxi,\\
      \ms{B}^{m';t_j}_{m;i_1 \cdots i_l} & = \int_{\bxi\cdot \bn < 0} f_M(\bxi) \bar{L}_{m'}^{(5/2)}\left( \frac{|\bxi|^2}{2} \right)\bar{L}_m^{(l+1/2)}\left( \frac{|\bxi|^2}{2} \right) \xi_{\langle i_{ 1}}\cdots \xi_{i_{l}\rangle} \xi_{t_j} \,\dd \bxi, \\ 
       \ms{C}^{m';t' t''}_{m;i_1 \cdots i_l} & = \int_{\bxi\cdot \bn < 0} f_M(\bxi) \bar{L}_{m'}^{(7/2)}\left( \frac{|\bxi|^2}{2} \right)\bar{L}_m^{(l+1/2)}\left( \frac{|\bxi|^2}{2} \right) \xi_{\langle i_{ 1}}\cdots \xi_{i_{l}\rangle} \cdot \frac{\xi_{\langle t'}\xi_{t''}\xi_{n\rangle}}{\xi_{n}}  \,\dd \bxi, \\
        \ms{D}^{m'}_{m;i_1 \cdots i_l} & = \int_{\bxi\cdot \bn < 0} f_M(\bxi) \bar{L}_{m'}^{(7/2)}\left( \frac{|\bxi|^2}{2} \right)\bar{L}_m^{(l+1/2)}\left( \frac{|\bxi|^2}{2} \right) \xi_{\langle i_{ 1}}\cdots \xi_{i_{l}\rangle} \cdot \frac{\xi_{\langle n}\xi_{n}\xi_{n\rangle}}{\xi_{n}}  \,\dd \bxi
\end{align*}
where $z_{i_1\cdots i_l}^{(k)}$ are the coefficients in the expansion of $\bar{\mathcal{A}}_{\mathrm{oe}} (f_W - \mathcal{P}_{\mathrm{even}}|_{\overline{\mathbb{V}}} \bar{f})$:
\begin{equation}\label{z to u}
 \begin{split}
    z^{(1)}_{n} = &\  \frac{1}{3}\left(\sqrt{5}c^{(1),1}_1(w^{1,W}-w^1) -A_{45}u^{(1)}_{nn}-A_{46}u^{(2)}-A_{48}u^{(2)}_{nn}\right),\\
    z^{(2)}_{n} = &\  \frac{1}{3}\left(\sqrt{5}c^{(2),1}_1 (w^{1,W}-w^1) - A_{57}u^{(1)}_{nn}-\frac{c^{(2),1}_1}{c^{(1),1}_1} A_{46}u^{(2)}-\bar{A}_{78}u^{(2)}_{nn} \right),\\
    z^{(1)}_{t_in} = &\  \frac{2}{15} \left( \frac{3\sqrt{5}}{2}c^{(1),0}_2 (w^{0,W}_{t_i}-w^0_{t_i}) -\frac{1}{2}A_{45} u^{(1)}_{t_i} - \frac{1}{2} A_{57}u^{(2)}_{t_i} -A_{59} u^{(2)}_{t_inn}\right), \quad i=1,2, \\
    z^{(2)}_{t_in} = &\  \frac{2}{15} \left(  \frac{3\sqrt{5}}{2}c^{(2),0}_2 (w^{0,W}_{t_i}-w^0_{t_i})  - \frac{1}{2}A_{48}u^{(1)}_{t_i}-\frac{1}{2}\bar{A}_{78}u^{(2)}_{t_i}- \frac{c^{(2),0}_2}{c^{(1),0}_2}A_{59} u^{(2)}_{t_inn} \right),\quad i=1,2,\\
    z^{(2)}_{t_it_jn} = &\ \frac{2}{35} A_{59}\left[ \frac{2}{15} \delta_{ij} \left(u^{(1)}_{nn}+\frac{c^{(2),0}_2}{c^{(1),0}_2}u^{(2)}_{nn}\right) - \frac{1}{3}\left(u^{(1)}_{t_it_j}+\frac{c^{(2),0}_2}{c^{(1),0}_2}u^{(2)}_{t_it_j}\right)\right], \quad i,j=1,2, \\
    z^{(2)}_{nnn} = &\ \frac{2}{175}A_{59}\left[\left(u^{(1)}_{t_1t_1}+\frac{c^{(2),0}_2}{c^{(1),0}_2}u^{(2)}_{t_1t_1}\right) + \left(u^{(1)}_{t_2t_2}+\frac{c^{(2),0}_2}{c^{(1),0}_2}u^{(2)}_{t_2t_2}\right) -2\left(u^{(1)}_{nn}+\frac{c^{(2),0}_2}{c^{(1),0}_2}u^{(2)}_{nn}\right)\right].
    \end{split}
\end{equation}
Combining \eqref{bc with z} with \eqref{z to u}, we obtain the expressions of the coefficients $\lambda'_{ij}$ as 
\begin{displaymath}
\lambda'_{ij} = \begin{cases}
        -\sqrt{5}(c^{(1),1}_1\gamma'_{i1}  +c^{(2),1}_1 \gamma'_{i2}), & \text{if~} j = 1\\
        -A_{45}\gamma'_{i1} - A_{57}\gamma'_{i2} + \frac{3}{5}A_{59}(3 \gamma'_{i3} - \gamma'_{i4}) ,& \text{if~} j = 2\\
        -A_{46}(\gamma'_{i1} + \frac{c^{(2),1}_1}{c^{(1),1}_1}\gamma'_{i2}) ,& \text{if~} j = 3\\
        -A_{48}\gamma'_{i1} - \bar{A}_{78} \gamma'_{i2} + \frac{c^{(2),0}_2 A_{59}}{5c^{(1),0}_2}(4 \gamma'_{i3} - 3 \gamma'_{i4}),& \text{if~} j = 4
    \end{cases}, \text{~for~} i = 1,2,5;
\end{displaymath}
\begin{displaymath}
\lambda'_{ij} = \begin{cases}
        -3\sqrt{5}(c^{(1),0}_2\gamma'_{i1}  +c^{(2),0}_2 \gamma'_{i2}), & \text{if~} j = 1\\
        -(A_{45}\gamma'_{i1} + A_{48}\gamma'_{i2}) ,& \text{if~} j = 2\\
         -(A_{57}\gamma'_{i1} + \bar{A}_{78}\gamma'_{i2})  ,& \text{if~} j = 3\\
        -2A_{59}(\gamma'_{i1}  +\frac{c^{(2),0}_2}{c^{(1),0}_2} \gamma'_{i2}),& \text{if~} j = 4
    \end{cases}, \text{~for~} i = 3,4;
\end{displaymath}
\begin{displaymath}
\lambda'_{ij} = \begin{cases}
        -A_{59}(\gamma'_{63}+\gamma'_{64}), & \text{if~} (i,j) = (6,1)\\
      \frac{1}{5}A_{59}(\frac{9}{2}\gamma'_{53}+2\gamma'_{63}+7\gamma'_{64}), & \text{if~} (i,j) = (6,2)\\
          -2A_{59}\gamma'_{71}, & \text{if~} (i,j) = (7,1)
    \end{cases}.
\end{displaymath}
In the expressions above, $\gamma'_{ij}$ are given as 
\begin{displaymath}
\gamma'_{ij} = \begin{cases}
       \sum_{m=1}^\infty \sum_{m'=1}^\infty c^{(i),m}_1 c^{(1),m'}_1 (\mathscr{A}^{m'}_{m;n} - \frac{\mathscr{A}^{m'}_{0;n}I^0_{m;n}}{I^0_{0;n}}) , & \text{if~} j = 1\\
        \sum_{m=1}^\infty \sum_{m'=1}^\infty c^{(i),m}_1 c^{(2),m'}_1 (\mathscr{A}^{m'}_{m;n} - \frac{\mathscr{A}^{m'}_{0;n}I^0_{m;n}}{I^0_{0;n}}) ,& \text{if~} j = 2\\
        3\sum_{m=1}^\infty \sum_{m'=0}^\infty c^{(i),m}_1 c^{(2),m'}_3 (\mathscr{C}^{m';t_1t_1}_{m;n} - \frac{\mathscr{C}^{m';t_1t_1}_{0;n}I^0_{m;n}}{I^0_{0;n}}) ,& \text{if~} j = 3\\
        \sum_{m=1}^\infty \sum_{m'=0}^\infty c^{(i),m}_1 c^{(2),m'}_3 (\mathscr{D}^{m'}_{m;n} - \frac{\mathscr{D}^{m'}_{0;n}I^0_{m;n}}{I^0_{0;n}}),& \text{if~} j = 4
    \end{cases}, \text{~for~} i = 1,2;
\end{displaymath}
\begin{displaymath}
\gamma'_{ij} = \begin{cases}
       2\sum_{m=0}^\infty \sum_{m'=0}^\infty c^{(i-2),m}_2 c^{(1),m'}_2 \mathscr{B}^{m';t_1}_{m;t_1n}  , & \text{if~} j = 1\\
        2\sum_{m=0}^\infty \sum_{m'=0}^\infty c^{(i-2),m}_2 c^{(2),m'}_2 \mathscr{B}^{m';t_1}_{m;t_1n} ,& \text{if~} j = 2
    \end{cases}, \text{~for~} i = 3,4;
\end{displaymath}
\begin{displaymath}
\gamma'_{ij} = \begin{cases}
       \sum_{m=0}^\infty \sum_{m'=1}^\infty c^{(2),m}_3 c^{(1),m'}_1 (\mathscr{A}^{m'}_{m;nnn} - \frac{\mathscr{A}^{m'}_{0;n}I^0_{m;nnn}}{I^0_{0;n}}) , & \text{if~} j = 1\\
        \sum_{m=0}^\infty \sum_{m'=1}^\infty c^{(2),m}_3 c^{(2),m'}_1 (\mathscr{A}^{m'}_{m;nnn} - \frac{\mathscr{A}^{m'}_{0;n}I^0_{m;nnn}}{I^0_{0;n}}) ,& \text{if~} j = 2\\
        3\sum_{m=0}^\infty \sum_{m'=0}^\infty c^{(2),m}_3 c^{(2),m'}_3 (\mathscr{C}^{m';t_1t_1}_{m;nnn} - \frac{\mathscr{C}^{m';t_1t_1}_{0;n}I^0_{m;nnn}}{I^0_{0;n}}) ,& \text{if~} j = 3\\
        \sum_{m=0}^\infty \sum_{m'=0}^\infty c^{(2),m}_3 c^{(2),m'}_3 (\mathscr{D}^{m'}_{m;nnn} - \frac{\mathscr{D}^{m'}_{0;n}I^0_{m;nnn}}{I^0_{0;n}}),& \text{if~} j = 4
    \end{cases}, \text{~for~} i = 5;
\end{displaymath}
\begin{displaymath}
\gamma'_{ij} = \begin{cases}
      -\frac{1}{2}\gamma'_{51} , & \text{if~} j = 1\\
        -\frac{1}{2}\gamma'_{52}  ,& \text{if~} j = 2\\
        3\sum_{m=0}^\infty \sum_{m'=0}^\infty c^{(2),m}_3 c^{(2),m'}_3 (\mathscr{C}^{m';t_1t_1}_{m;t_1 t_1 n} - \frac{\mathscr{C}^{m';t_1t_1}_{0;n}I^0_{m;t_1 t_1 n}}{I^0_{0;n}}) ,& \text{if~} j = 3\\
         3\sum_{m=0}^\infty \sum_{m'=0}^\infty c^{(2),m}_3 c^{(2),m'}_3 (\mathscr{C}^{m';t_2t_2}_{m;t_1 t_1 n} - \frac{\mathscr{C}^{m';t_2t_2}_{0;n}I^0_{m;t_1 t_1 n}}{I^0_{0;n}}) ,& \text{if~} j = 4\\
       -\frac{1}{2}\gamma'_{54},& \text{if~} j = 5
    \end{cases}, \text{~for~} i = 6.
\end{displaymath}

\subsection{Expressions of $\lambda_{ij}$}\label{app:phys bc}
\begin{gather*}
    \lambda_{11} = \frac{3\sqrt{5}}{2}c^{(1),1}_1 \kappa_{11}, \ \lambda_{12} = -\frac{\sqrt{2}}{2}\frac{c^{(1),1}_1}{c^{(1),0}_2} \kappa_{12}, \ \lambda_{13} = c^{(1),1}_1 \beta'_0 \kappa_{13}, \ \lambda_{14} = c^{(1),1}_1\beta'_2 \left( \kappa_{14} - \frac{c^{(2),0}_2}{c^{(1),0}_2} \kappa_{12}\right),\\
    \lambda_{21} = \sqrt{5}c^{(1),0}_2 \lambda'_{31} + \sqrt{5} c^{(2),0}_2 \lambda'_{41} , \ \lambda_{22} = -\frac{\sqrt{2}c^{(1),0}_2}{c^{(1),1}_1} \lambda'_{32} - \frac{\sqrt{2}c^{(2),0}_2}{c^{(1),1}_1} \lambda'_{42}, \\ \lambda_{23} = c^{(1),0}_2 \beta'_1 \left( \lambda'_{33} - \frac{c^{(2),1}_1}{c^{(1),1}_1}\lambda'_{32}\right) + c^{(2),0}_2 \beta'_1 \left( \lambda'_{43} - \frac{c^{(2),1}_1}{c^{(1),1}_1}\lambda'_{42}\right), \ \lambda_{24} = c^{(1),0}_2 \beta'_3\lambda'_{34} + c^{(2),0}_2 \beta'_3\lambda'_{44},\\
    \lambda_{31} = -\sqrt{\frac{5}{2}}\frac{\lambda'_{41}}{\beta'_2}, \  \lambda_{32} = \frac{\lambda'_{42}}{c^{(1),1}_1 \beta'_2}, \ \lambda_{33} = -\frac{\sqrt{2}}{2}\frac{\beta'_1}{\beta'_2}\left( \lambda'_{43} - \frac{c^{(2),1}_1}{c^{(1),1}_1} \lambda'_{42} \right), \ \lambda_{34} = -\frac{\sqrt{2}}{2}\frac{\beta'_3}{\beta'_2} \lambda'_{44},\\ 
    \lambda_{41} = -\frac{3\sqrt{10}}{2\beta'_3 \mu_1} \left( \kappa_{21} + \frac{c^{(2),1}_1 \mu_2}{c^{(1),1}_1} \kappa_{11} \right), \ \lambda_{42} = \frac{1}{ c^{(1),0}_2 \beta'_3 \mu_1 } \left( \kappa_{22} + \frac{c^{(2),1}_1 \mu_2}{c^{(1),1}_1} \kappa_{12} \right), \\
    \lambda_{43} = \frac{-\sqrt{2} \beta'_0 }{\beta'_3 \mu_1}\left( \kappa_{23} + \frac{c^{(2),1}_1 \mu_2}{c^{(1),1}_1} \kappa_{13} \right),
    \lambda_{44} =  \frac{-\sqrt{2}\beta'_2}{\beta'_3 \mu_1}\left[\kappa_{24} - \frac{c^{(2),0}_2 }{c^{(1),0}_2}\kappa_{22} + \frac{c^{(2),1}_1 \mu_2}{c^{(1),1}_1}\left(\kappa_{14} - \frac{c^{(2),0}_2}{c^{(1),0}_2}\kappa_{12}\right) \right],  \\   \lambda_{45} =  \left[ 1 + \left(\frac{c^{(2),1}_1}{c^{(1),1}_1}\right)^2 \right]\frac{\beta'_1 \mu_2}{\beta'_3 \mu_1}, \ 
    \lambda_{51} = \frac{1}{c^{(1),0}_2} \frac{\lambda'_{61}}{\beta'_3}, \  \lambda_{52} = \frac{1}{c^{(1),0}_2} \frac{\lambda'_{62}}{\beta'_3},\ \lambda_{61} = \frac{1}{c^{(1),0}_2} \frac{\lambda'_{71}}{\beta'_3}. 
\end{gather*}
Here $\mu_i$ and $\kappa_{ij}$ are respectively given in \eqref{mu} and \eqref{kappa}. We remark that for Maxwell molecules $\lambda_{23} = \lambda_{33}=0$.

\section{Tables of coefficients in the inverse-power-law model}
In this section, we list the value of $\beta_i$, $\mathscr{L}^{(mn)}_l$ and $c^{(1),*}_*$ appearing in the moment equations \eqref{r13eq:heat flux}\eqref{r13eq:stress tensor} as Table \ref{tab:eq}, and $\lambda_{ij}$ in the boundary conditions \eqref{bc phys 2}--\eqref{bc phys 7} as Table \ref{tab:bc} for various choices of parameter $\eta$ in the inverse-power-law model that we considered in Section \ref{sec:results 1d}.    
\begin{table}[h] 
\centering \small
\begin{displaymath}
\begin{array}{| c || c | c | c | c | c | c | c | c | }
% \hline
% \eta &   A_{45}   & A_{46}   & A_{48} & A_{57} & A_{59}  & A_{67} & A_{78} & A_{89} \\
% \hline
% 5 & -4.2426 & -2 & 7.9373 & 0 & 19.8431 & 0 & 0 & -10.6066 \\ 
% 10 & -3.7383 & -1.8112 & 7.7455 & -0.0059 & 19.1050 & 2.4352 & -5.0795 & -8.6811 \\ 
% \infty & -3.3327 & -1.6560 & 7.5340 & -0.0197 & 18.4114 & 2.2910 & -4.3675 & -7.1476 \\ 
\hline
 \rule{0pt}{12pt}
\eta &   \mathscr{L}^{(22)}_0   & \mathscr{L}^{(11)}_1   & \mathscr{L}^{(12)}_1 & \mathscr{L}^{(22)}_1 & \mathscr{L}^{(11)}_2  & \mathscr{L}^{(12)}_2 & \mathscr{L}^{(22)}_2 & \mathscr{L}^{(22)}_3 \\
\hline \rule{0pt}{10pt}
5 & -\frac{2}{3} & -2 & 0 & 0 & -\frac{15}{2} & 0 & -\frac{35}{4} & -\frac{105}{4} \\
10 & -0.6494 & -1.9746 & 0.1416 & -2.9453 & -7.4320 & 0.4842 & -8.7125 & -25.6149 \\ 
\infty & -0.6124 & -1.9180 & 0.2492 & -2.8154 & -7.2778 & 0.8693 & -8.4235 & -24.1732 \\ 
\hline
\eta &   \beta_0   & \beta_1   & \beta_2 & \beta_3 & \beta_4 & - & - & - \\
\hline
 \rule{0pt}{10pt}
5 & -6  & 0 & -\frac{36}{5} & -15 & \frac{2}{5} & - & - & -  \\
10 & -5.0643 & -0.0256 & -6.4997 & -14.2799 & 0.3541 & - & - & -   \\ 
\infty & -4.5165 & -0.0741 & -6.1729 & -14.1228 & 0.3192 & - & - & -  \\ 
\hline
 \rule{0pt}{12pt}
\eta &   c^{(1),1}_1   &c^{(1),0}_2   &- &- & - & - & - & - \\
\hline
5 &  1 & 1 & - & - & - & - & - & -  \\
10 & 0.9974 & 0.9979 &- & - & - & - & - & -   \\ 
\infty & 0.9915 & 0.9929 &- & - & - & - & - & -  \\
\hline
\end{array}
\end{displaymath}
\caption{Coefficients in moment equations.} \label{tab:eq}
\end{table}

\begin{table}[h] 
\centering \small
\begin{displaymath}
\begin{array}{| c || c | c | c | c | c | }
\hline
\eta &   \lambda_{11}   & \lambda_{12}   & \lambda_{13} & \lambda_{14} & -  \\
\hline
5 & 0.7979 & 0.1995 & -0.6383 & -0.7660 & - \\ 
10 & 0.7968 & 0.1912 & -0.5394 & -0.6949 & -   \\ 
\infty & 0.7946 & 0.1863 & -0.4826 & -0.6633 & -  \\ 
\hline
\eta &   \lambda_{21}   & \lambda_{22}  & \lambda_{23}& \lambda_{24} & -\\
5 & 0.3989 & 0.0798 & 0 & -0.7979 & - \\
10 & 0.3992 & 0.0781 & -0.0007 & -0.7618 & - \\ 
\infty & 0.3988 & 0.0789 & -0.0020 & -0.7563 & -  \\ 
\hline
\eta &  \lambda_{31}   & \lambda_{32}  & \lambda_{33} & \lambda_{34} & -  \\
\hline
5 & 0.0831  & -0.1829 & 0 & -0.1662 & -   \\
10 & 0.0729 & -0.2071 & -0.0001 & -0.1390 & -    \\ 
\infty & 0.0583 & -0.2186 & -0.0003 & -0.1105 & -  \\ 
\hline
\eta &  \lambda_{41}   & \lambda_{42}  & \lambda_{43} & \lambda_{44} & \lambda_{45}  \\
\hline
5 & 0.0798  & -0.2793 & -0.0638 & -0.0766 & 0   \\
10 & 0.0636 & -0.2994 & -0.0430 & -0.0638 &  0.0012   \\ 
\infty & 0.0473 & -0.3051 & -0.0287 & -0.0548 & 0.0035  \\ 
\hline
\eta &  \lambda_{51}   & \lambda_{52}  & - & - & -  \\
\hline
5 & -0.1995  & -0.0997 & - & - & -   \\
10 & -0.2098 & -0.1049 & - &  -& -    \\ 
\infty & -0.2108 & -0.1054 & - & - & -  \\ 
\hline
\eta &  \lambda_{61}   & -  & - & - & -  \\
\hline
5 & -0.1995  & - & - & - & -   \\
10 & -0.2098 &  & - &  -& -    \\ 
\infty & -0.2108 &  & - & - & -  \\ 
\hline
\end{array}
\end{displaymath}
\caption{Coefficients in boundary conditions.} \label{tab:bc}
\end{table}

\end{document}